\newtheorem{theorem}{Theorem}
\newtheorem{proposition}[theorem]{Proposition}
\newtheorem{lemma}[theorem]{Lemma}
\newtheorem{definition}[theorem]{Definition}
\newtheorem{cor}[theorem]{Corollary}
\def\beq{\begin{equation}}
\def\eeq{\end{equation}}
\def\bea{\begin{eqnarray}}
\def\eea{\end{eqnarray}}
\def\benpf{\noindent {\textbf{{\emph{Proof.}}\;}}}
\def\endpf{\hfill$\blacksquare$\medskip}
\let\expandafter
\def\subeqnarray{\arraycolsep1pt
   \def\@eqnnum\stepcounter##1{\stepcounter{subequation}
       {\reset@font\rm(\theequation\alph{subequation})}}
\jot5mm     \eqnarray}
\newcommand{\bbR}{{\mathbb R}}
\newcommand{\cA}{{\mathcal A}}
\newcommand{\cB}{{\mathcal B}}
\newcommand{\cH}{{\mathcal H}}
\newcommand{\cP}{{\mathcal P}}
\def\ep{\varepsilon}
\def\epsilon{\varepsilon}
\def\t{\widetilde}
\def\tilde{\widetilde}
\def\endpf{\hfill$\square$\medskip}
\newbox\meibox
\def\placeunder#1#2#3#4{\setbox\meibox%
\vbox{\hbox{\hskip#4$\hphantom{#2}$}\hbox{$\hphantom{#1}$}}%
\vtop{\baselineskip=0pt\lineskiplimit=\baselineskip%
\lineskip=#3\hbox to \wd\meibox{\hfil\hskip#4$#2$\hfil}%
\hbox to \wd\meibox{\hfil$#1$\hfil}}}
\def\undertilde#1{\mathchoice{%
\placeunder{\vbox to 1.4pt{\hbox{$\displaystyle\widetilde{\,\,\,
}$}\vss}}{\displaystyle#1}{1.5pt}{1.5pt}}%
{\placeunder{\vbox to 1.4pt{\hbox{$\textstyle\widetilde{\,\,
}$}\vss}}{\textstyle#1}{1.5pt}{1.5pt}}%
{\placeunder{\vbox to 1.4pt{\hbox{$\scriptstyle\tilde{
}$}\vss}}{\scriptstyle#1}{1pt}{1pt}}%
{\placeunder{\vbox to 1.4pt{\hbox{$\scriptscriptstyle\tilde{
}$}\vss}}{\scriptscriptstyle#1}{1pt}{1pt}}%
}
\def\intprod{\mathbin{\hbox to 6pt{%
                 \vrule height0.4pt width5pt depth0pt
                 \kern-.4pt
                 \vrule height6pt width0.4pt depth0pt\hss}}}
\begin{document}
\title[Commuting systems of integrable symplectic birational maps]
{A construction of commuting systems \\ of integrable symplectic birational maps.\\ Lie-Poisson case}

\author{Matteo Petrera \and Yuri B. Suris }

\date{\today}

\thanks{E-mail: {\tt  petrera@math.tu-berlin.de, suris@math.tu-berlin.de}}

\maketitle

\begin{center}
{\footnotesize{
Institut f\"ur Mathematik, MA 7-1\\
Technische Universit\"at Berlin, Str. des 17. Juni 136,
10623 Berlin, Germany
}}

\end{center}


\begin{abstract}
We give a construction of completely integrable ($2n$)-dimensional Hamiltonian systems with symplectic brackets of the Lie-Poisson type (linear in coordinates) and with quadratic Hamilton functions. Applying to any such system the so called Kahan-Hirota-Kimura discretization scheme, we arrive at a birational ($2n$)-dimensional map. We show that this map is symplectic with respect to a symplectic structure that is a perturbation of the original symplectic structure on $\mathbb R^{2n}$, and possesses $n$ independent integrals of motion, which are perturbations of the original Hamilton functions and are in involution with respect to the invariant symplectic structure. Thus, this map is completely integrable in the Liouville-Arnold sense. Moreover, under a suitable normalization of the original $n$-tuples of commuting vector fields, their Kahan-Hirota-Kimura discretizations also commute and share the invariant symplectic structure and the $n$ integrals of motion. This paper extends our previous ones, {\tt arXiv:1606.08238 [nlin.SI]} and {\tt arXiv:1607.07085 [nlin.SI]}, where similar results were obtained for Hamiltonian systems with a constant (canonical) symplectic structure and cubic Hamilton functions.
\end{abstract}

\section{Introduction}
\label{sect intro}

In the recent papers \cite{PS4dim, PS6dim}, we introduced a large family of integrable symplectic maps, appearing as the so called Kahan-Hirota-Kimura discretization of a big family of completely integrable Hamiltonian systems in arbitrary even dimension $2n$, with the canonical  symplectic structure and cubic Hamilton functions. We mentioned there an open problem of generalizing these findings for the case of Hamiltonian systems with a Poisson tensor linear in local coordinates (that is, a Lie-Poisson tensor) and quadratic Hamilton functions. Here, such a generalization is achieved.

We consider a certain family of Lie-Poisson tensors $J(x)$ of full rank (thus, defining symplectic structures) on $\mathbb R^{2n}$. For such a  tensor, there exist constant $2n \times 2n$ matrices $A$ satisfying  
\begin{equation} \label{skew}
A^{\rm T}J(x)=J(x)A, \quad \forall x\in\mathbb R^{2n}.
\end{equation}
Clearly, any power of $A$ satisfies the same equation. Generically, along with $A$, one has an $n$-dimensional vector space of matrices satisfying \eqref{skew} which consists of polynomials of $A$ of degree $n-1$. To each non-degenerate matrix $A$ with property \eqref{skew}, there corresponds a vector space of quadratic polynomials $H_0(x)$ on $\mathbb R^{2n}$, satisfying a system of second order linear PDEs encoded in the matrix equation 
\beq \label{harm}
A(\nabla^2 H)=(\nabla^2 H) A^{\rm T},
\end{equation}
where $\nabla^2 H$ is the Hesse matrix of the function $H$. Such a polynomial  $H_0(x)$ can be included to an $n$-tuple of quadratic polynomials $H_i(x)$ satisfying the same matrix differential equations \eqref{harm}, and characterized by 
\begin{equation}\label{CR}\nonumber
\nabla H_i(x)=A\nabla H_{i-1}(x), \quad i=1,\ldots, n-1.
\end{equation}
For a generic $A$, the functions $H_i(x)$, $i=0,\ldots, n-1$, are functionally independent and are in involution with respect to the Lie-Poisson structure on $\mathbb R^{2n}$ defined by the tensor $J(x)$. Thus, the flows of the Hamiltonian vector fields $f_i(x)=J(x)\nabla H_i(x)$ commute, and comprise a completely integrable Hamiltonian system. 

When applied to a completely integrable Hamiltonian system $\dot{x}=f_0(x)$ of this family, the Kahan-Hirota-Kimura discretization method produces the map $\Phi_{f_0}$ with the following striking properties.
\begin{itemize}
\item The map $\Phi_{f_0}$ is Poisson with respect to a (symplectic) Poisson structure on $\mathbb R^{2n}$ which is a perturbation of the original one (defined by $J(x)$), and possesses $n$ functionally independent integrals in involution. In other words, the map $\Phi_{f_0}$ is completely integrable. 

\item In general, the maps $\Phi_{f_i}$ do not commute among themselves. However, one can find systems of commuting  maps which include 
$\Phi_{f_0}$. We say that a linear combination of the vector fields,
$$
\sum_{i=0}^{n-1} \alpha_i f_i(x) = J(x)\left(\sum_{i=0}^{n-1} \alpha_iA^i\right)\nabla H_0(x)= J(x)B\nabla H_0(x)= B^{\rm T}f_0(x),
$$
is {\em associated} to the vector field $f_0(x)$, if the matrix $B=\sum_{i=0}^{n-1} \alpha_iA^i$  satisfies $B^2=I$. Equivalently, the polynomial $B(\lambda)=\sum_{i=0}^{n-1} \alpha_i\lambda^i$ sends each of the $n$ distinct eigenvalues of $A$ to $\pm 1$. This defines an equivalence relation on the set of vector fields $J(x)\nabla H(x)$ with $H(x)$ satisfying \eqref{harm}. It turns out that Kahan-Hirota-Kimura discretizations of associated vector fields commute and share the invariant symplectic structure and $n$ functionally independent integrals. 

\item The common integrals $\t H(x,\epsilon)$ of the Kahan-Hirota-Kimura discretizations of the associated vector fields are rational perturbations of the original polynomial Hamilton functions $$H(x)=\sum_{i=0}^{n-1}\alpha_i H_i(x),$$ and satisfy the same second order differential equation \eqref{harm} as $H(x)$ do.

\end{itemize}

We give a quick review of the Kahan-Hirota-Kimura discretization method in Sect. \ref{sect HK}. The specialization of this method for Lie-Poisson systems is considered in Sect. \ref{sect KHK for LP}. Then, in Sect. \ref{sect int fam}, we discuss details of the general construction of completely integrable Hamiltonian systems generated by a Lie-Poisson tensor $J(x)$ and a  matrix $A$ related as in \eqref{skew}. Algebraic properties of the corresponding vector fields are collected in Sect. \ref{sect vector fields}. Associated vector fields are introduced in Sect. \ref{sect assoc}. We prove the main results in Sect. \ref{sect commut} (commutativity), Sect. \ref{sect integrals} (integrals of motion), Sect. \ref{sect symplectic} (invariant symplectic structure) and Sect. \ref{sect diff eqs} (differential equations for the integrals of the maps).

\section{General properties of the Kahan-Hirota-Kimura discretization}
\label{sect HK}

Here we recall the main facts about the Kahan-Hirota-Kimura discretization.

This method was introduced in the geometric integration literature by Kahan in the unpublished notes \cite{K} as a method applicable to any system of ordinary differential equations on $\bbR^N$ with a quadratic vector field:
\begin{equation}\label{eq: diff eq gen} \nonumber
\dot{x}=f(x)=Q(x)+Bx+c,
\end{equation}
where each component of $Q:\bbR^N\to\bbR^N$ is a quadratic form, while $B\in{\rm Mat}_{N\times N}(\bbR)$ and $c\in\bbR^N$. Kahan's discretization (with stepsize $2\epsilon$) reads as
\begin{equation}\label{eq: Kahan gen}
\frac{\widetilde{x}-x}{2\epsilon}=2f\Big(\frac{x+\t x}{2}\Big)-\frac{1}{2}f(x)-\frac{1}{2}f(\t x)=Q(x,\widetilde{x})+\frac{1}{2}B(x+\widetilde{x})+c,
\end{equation}
where
\[
Q(x,\widetilde{x})=\frac{1}{2}\big(Q(x+\widetilde{x})-Q(x)-Q(\widetilde{x})\big)
\]
is the symmetric bilinear form corresponding to the quadratic form $Q$. Equation (\ref{eq: Kahan gen}) is {\em linear} with respect to $\widetilde x$ and therefore defines a {\em rational} map $\widetilde{x}=\Phi_f(x,\epsilon)$. Clearly, this map approximates the time $2\epsilon$ shift along the solutions of the original differential system. Since equation (\ref{eq: Kahan gen}) remains invariant under the interchange $x\leftrightarrow\widetilde{x}$ with the simultaneous sign inversion $\epsilon\mapsto-\epsilon$, one has the {\em reversibility} property
\begin{equation}\label{eq: reversible}
\Phi_f^{-1}(x,\epsilon)=\Phi_f(x,-\epsilon).
\end{equation}
In particular, the map $f$ is {\em birational}. The explicit form of the map $\Phi_f$ defined by \eqref{eq: Kahan gen} is 
\beq \label{eq: Phi gen}
\t x =\Phi_f(x,\ep)= x + 2\ep \left( I - \ep f'(x) \right)^{-1} f(x),
\eeq
where $f'(x)$ denotes the Jacobi matrix of $f(x)$. Moreover, if the vector field $f(x)$ is homogeneous (of degree 2), then \eqref{eq: Phi gen} can be equivalently rewritten as
\beq \label{eq: Phi hom}
\t x =\Phi_f(x,\ep)= \left( I - \epsilon f'(x) \right)^{-1} x.
\eeq
Due to \eqref{eq: reversible}, in the latter case we also have:
\beq \label{eq: Phi hom alt}
x =\Phi_f(\t x,-\ep)= \left( I + \epsilon f'(\t x) \right)^{-1} \t x \quad \Leftrightarrow \quad \t x = \left( I + \epsilon f'(\t x) \right) x.
\eeq
One has the following expression for the Jacobi matrix of the map $\Phi_f$:
\beq \label{Jac}
d\Phi_f(x)=\frac{\partial\t x}{\partial x}=\big(I-\epsilon f'(x)\big)^{-1}\big(I+\epsilon f'(\t x)\big).
\eeq

Kahan applied this discretization scheme to the famous Lotka-Volterra system and showed that in this case it possesses a very remarkable non-spiralling property. This property was explained by Sanz-Serna \cite{SS} by demonstrating that in this case the numerical method preserves an invariant Poisson structure of the original system.

The next intriguing appearance of this discretization was in two papers by Hirota and Kimura who (being apparently unaware of the work by Kahan) applied it to two famous {\em integrable} system of classical mechanics, the Euler top and the Lagrange top \cite{HK, KH}. Surprisingly, the discretization scheme produced in both cases {\em integrable} maps. 

In \cite{PS, PPS1, PPS2} the authors undertook an extensive study of the properties of the Kahan's method when applied to integrable systems (we proposed to use in the integrable context the term ``Hirota-Kimura method''). It was demonstrated that, in an amazing number of cases, the method preserves integrability in the sense that the map $\Phi_f(x,\epsilon)$ possesses as many independent integrals of motion as the original system $\dot x=f(x)$.

Further remarkable geometric properties of the Kahan's method were discovered by Celledoni, McLachlan, Owren and Quispel in \cite{CMOQ1, CMOQ2}. They demonstrated that for an arbitrary Hamiltonian vector field $f(x)=J\nabla H(x)$ with a constant Poisson tensor $J$ and a cubic Hamilton function $H(x)$, the map $\Phi_f(x,\epsilon)$ possesses a rational integral of motion 
as well as an invariant measure with a rational density. These properties are unrelated to integrability. 

Finally, in papers  \cite{PS4dim, PS6dim} which are direct precursors of the present one, there were discovered commuting families of completely integrable Hirota-Kimura maps, along with their full sets of integrals and with an invariant symplectic structure.


\section{General properties of the Kahan-Hirota-Kimura discretization \\ applied to Lie-Poisson type systems}
\label{sect KHK for LP}

In this paper, we consider the following class of vector fields on $\mathbb R^N$:
\beq \label{vf LP}
f(x)=J(x)\nabla H(x),
\eeq
where $J(x)$ is a $N\times N$ matrix whose entries are linear forms in $x$, and $H(x)$ is a quadratic form in $x$. We have:
\beq \label{f'} 
f'(x)=J(x) \nabla^2 H + J'\nabla H(x).
\eeq
Here $\nabla^2 H$ is the (constant) Hesse matrix of $H(x)$, $J'$ is a (constant) tensor such that, for any $y\in\bbR^N$, $J'y$ is the $N\times N$ matrix with the entries 
$$
(J'y)_{ik}=\sum_{k=1}^N \frac{\partial J_{ij}(x)}{\partial x_k}y_j .
$$
From this we derive the following identity which we mention for the later reference:
\beq \label{J'}
(J'y)z=J(z)y, \quad y,z\in\bbR^N.
\eeq

\begin{proposition}
For any vector field of the form \eqref{vf LP}, the Kahan map $\t x=\Phi_f(x)$ can be implicitly written as
\beq \label{Phi x-x}
\t x-x=2\epsilon \Big(I-\epsilon J(x)\nabla^2 H \Big)^{-1} J\!\left( \frac{x+\t x}{2}\right) \nabla H(x),
\eeq
or, alternatively, as
\beq \label{Phi x}
\t x = \Big(I-\epsilon J(x)\nabla^2 H \Big)^{-1}\Big(I+\epsilon J(\t x)\nabla^2 H \Big)x.
\eeq
\end{proposition}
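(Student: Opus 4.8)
The plan is to specialize the general Kahan formulas \eqref{eq: Phi gen} and \eqref{eq: Phi hom alt} to the Lie-Poisson situation, using the structure of $f(x)=J(x)\nabla H(x)$ together with the bilinearization identity that underlies \eqref{eq: Kahan gen}. First I would recall from \eqref{eq: Kahan gen} that for a quadratic vector field the Kahan map is defined implicitly by $\t x - x = 2\epsilon\big(Q(x,\t x)+\tfrac12 B(x+\t x)+c\big)$, where $Q(x,\t x)$ is the symmetric bilinear form polarizing $Q$. For $f(x)=J(x)\nabla H(x)$ with $J(x)$ linear and $H$ quadratic, the vector field is homogeneous of degree $2$ (so $B=0$, $c=0$), and its polarization is $f(x,\t x)=\tfrac12\big(J(x)\nabla H(\t x)+J(\t x)\nabla H(x)\big)$ after using bilinearity of $J$ and linearity of $\nabla H$. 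The key simplification is that, since $\nabla H(x)=(\nabla^2 H)x$ with constant Hesse matrix, one can rewrite $J(x)\nabla H(\t x)=J(x)(\nabla^2 H)\t x$ and $J(\t x)\nabla H(x)=J(\t x)(\nabla^2 H)x$, and then invoke the identity \eqref{J'}, which lets one interchange the roles of the two arguments of $J$ in a controlled way; concretely $J(x)(\nabla^2 H)\t x$ and $J(\t x)(\nabla^2 H)x$ differ by a term that telescopes when summed, giving $2f(x,\t x)=J(x)\nabla H(x)+J(x)(\nabla^2 H)(\t x - x)$ plus a correction, i.e. $f(x,\t x)=\tfrac12 J\!\big(\tfrac{x+\t x}{2}\big)\cdot 2\nabla H(x)$ type rearrangement. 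The upshot I expect is $2\epsilon f(x,\t x)=\t x - x$ can be massaged into $\big(I-\epsilon J(x)\nabla^2 H\big)(\t x - x)=2\epsilon J\!\big(\tfrac{x+\t x}{2}\big)\nabla H(x)$, which is exactly \eqref{Phi x-x}.

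More directly, I would start from \eqref{eq: Phi gen}, $\t x = x + 2\epsilon(I-\epsilon f'(x))^{-1}f(x)$, substitute \eqref{f'} for $f'(x)$, and observe that $\nabla H$ is linear so $f'(x)=J(x)\nabla^2 H + (J'\nabla H(x))$ where by \eqref{J'} the second summand acts on a vector $z$ as $(J'\nabla H(x))z = J(z)\nabla H(x)$. Writing out $(I-\epsilon f'(x))(\t x - x) = 2\epsilon f(x)$ and expanding gives $\t x - x - \epsilon J(x)\nabla^2 H(\t x - x) - \epsilon J(\t x - x)\nabla H(x) = 2\epsilon J(x)\nabla H(x)$. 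Now $J(x)\nabla H(x) + \tfrac12 J(\t x - x)\nabla H(x)$, recombined using bilinearity of $J$ in its argument, equals $\tfrac12\big(J(x)+J(\t x)\big)\nabla H(x)=J\!\big(\tfrac{x+\t x}{2}\big)\nabla H(x)$; rearranging produces $(I-\epsilon J(x)\nabla^2 H)(\t x - x)=2\epsilon J\!\big(\tfrac{x+\t x}{2}\big)\nabla H(x)$, which is \eqref{Phi x-x}.

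For the alternative form \eqref{Phi x}, I would use the homogeneous-case identities \eqref{eq: Phi hom} and \eqref{eq: Phi hom alt}. From \eqref{eq: Phi hom}, $\t x=(I-\epsilon f'(x))^{-1}x$, i.e. $(I-\epsilon f'(x))\t x = x$. Expanding $f'(x)\t x = J(x)\nabla^2 H\,\t x + J(\t x)\nabla H(x)$ via \eqref{J'}, and from the reversed relation \eqref{eq: Phi hom alt} $\t x = (I+\epsilon f'(\t x))x$ so that $f'(\t x)x = J(\t x)\nabla^2 H\,x + J(x)\nabla H(\t x) = J(\t x)\nabla^2 H\,x + J(x)\nabla^2 H\,\t x$ (using that $\nabla H(\t x)=\nabla^2 H\,\t x$ and bilinearity again). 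The cross terms $J(\t x)\nabla H(x)$ and $J(x)\nabla H(\t x)$ are actually equal because both equal $(\nabla^2 H\,x)^{\mathrm T}$-contracted symmetric expressions — more carefully, $J(\t x)\nabla^2 H\,x$ appears in the second relation and $J(x)\nabla^2 H\,\t x$ in the first — so combining $(I-\epsilon f'(x))\t x = x$ with $\t x=(I+\epsilon f'(\t x))x$ and matching the $J(x)\nabla^2 H$ pieces yields $(I-\epsilon J(x)\nabla^2 H)\t x = (I+\epsilon J(\t x)\nabla^2 H)x$, which is \eqref{Phi x}.

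The main obstacle I anticipate is the bookkeeping of the cross terms: one must carefully verify, using \eqref{J'} and the symmetry of $\nabla^2 H$, that the ``mixed'' contributions $J(\t x)\nabla H(x)$ and $J(x)\nabla H(\t x)$ combine the way claimed — in particular that the term $J(\t x - x)\nabla H(x)$ can be split off and absorbed into the symmetric average $J\!\big(\tfrac{x+\t x}{2}\big)$ without leftover pieces. This is elementary but requires attention: the identity \eqref{J'} is the crucial tool that lets one move a vector argument in and out of $J$, and I would double-check signs by testing against the scalar/low-dimensional case or against the already-established general formula \eqref{eq: Phi gen}. Once the cross-term algebra is pinned down, both \eqref{Phi x-x} and \eqref{Phi x} follow by straightforward rearrangement, and their equivalence is just the statement that $(I-\epsilon J(x)\nabla^2 H)^{-1}(I+\epsilon J(\t x)\nabla^2 H)x - x$ equals $2\epsilon(I-\epsilon J(x)\nabla^2 H)^{-1}J\!\big(\tfrac{x+\t x}{2}\big)\nabla H(x)$, i.e. $(I+\epsilon J(\t x)\nabla^2 H)x - (I-\epsilon J(x)\nabla^2 H)x = 2\epsilon J\!\big(\tfrac{x+\t x}{2}\big)\nabla H(x)$, again reducing to the same cross-term identity.
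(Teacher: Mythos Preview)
Your second paragraph (``More directly'') is correct and is essentially the paper's argument, only slightly streamlined: the paper first factors $I-\epsilon f'(x)=(I-\epsilon J(x)\nabla^2 H)\big(I-\epsilon(I-\epsilon J(x)\nabla^2 H)^{-1}J'\nabla H(x)\big)$ and then inverts, whereas you simply expand $(I-\epsilon f'(x))(\t x-x)=2\epsilon f(x)$ termwise and use \eqref{J'} to turn $(J'\nabla H(x))(\t x-x)$ into $J(\t x-x)\nabla H(x)$. The linearity of $J$ then collapses $2J(x)+J(\t x-x)$ into $J(x+\t x)=2J\big(\tfrac{x+\t x}{2}\big)$, exactly as you say.

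Your treatment of \eqref{Phi x} in the third paragraph, however, is tangled. The cross terms $J(\t x)\nabla H(x)$ and $J(x)\nabla H(\t x)$ are \emph{not} equal in general, and invoking both \eqref{eq: Phi hom} and \eqref{eq: Phi hom alt} to ``match pieces'' is unnecessary. A single line from \eqref{eq: Phi hom} alone suffices: $(I-\epsilon f'(x))\t x=x$ expands via \eqref{J'} to $\t x-\epsilon J(x)\nabla^2 H\,\t x-\epsilon J(\t x)\nabla^2 H\,x=x$, which is already $(I-\epsilon J(x)\nabla^2 H)\t x=(I+\epsilon J(\t x)\nabla^2 H)x$. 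The paper instead derives \eqref{Phi x} from \eqref{Phi x-x} by substituting $\nabla H(x)=(\nabla^2 H)x$ and absorbing the $J(x)$ contribution; both routes are one-liners once \eqref{Phi x-x} or \eqref{eq: Phi hom} is in hand.
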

\begin{proof}
From \eqref{f'} we derive:
$$
I-\epsilon f'(x)=\Big(I-\epsilon J(x)\nabla^2 H\Big)\Big(I-\epsilon\big(I-\epsilon J(x)\nabla^2 H\big)^{-1} J'\nabla H(x)\Big).
$$
Therefore, equation \eqref{eq: Phi gen} can be rewritten as
$$
\t x -x=2\epsilon\Big(I-\epsilon\big(I-\epsilon J(x)\nabla^2 H\big)^{-1} J'\nabla H(x)\Big)^{-1}\Big(I-\epsilon J(x)\nabla^2 H\Big)^{-1}f(x).
$$
Multiplying this equation with the inverse of the first factor on the right-hand side, and taking into account equations \eqref{J'} and \eqref{vf LP}, we find:
$$
\t x -x-\epsilon \Big(I-\epsilon J(x)\nabla^2 H\Big)^{-1} J(\t x-x)\nabla H(x)=2\epsilon \Big(I-\epsilon J(x)\nabla^2 H\Big)^{-1} J(x)\nabla H(x).
$$
There follows, due to linearity of $J(x)$:
$$
\t x -x=\epsilon \Big(I-\epsilon J(x)\nabla^2 H\Big)^{-1} J(\t x+x)\nabla H(x),
$$
which coincides with \eqref{Phi x-x}.

From the latter equation we further derive, taking into account that $\nabla H(x)=(\nabla^2 H) x$:
\begin{eqnarray*}
\t x & = & \Big(I-\epsilon J(x)\nabla^2 H\Big)^{-1}\Big(I-\epsilon J(x)\nabla^2 H+\epsilon J(\t x+x)\nabla^2 H\Big)x\\
      & = & \Big(I-\epsilon J(x)\nabla^2 H\Big)^{-1}\Big(I+\epsilon J(\t x)\nabla^2 H\Big)x,
\end{eqnarray*}
which finishes the proof. 
\end{proof}

\section{A family of integrable Lie-Poisson  systems}
\label{sect int fam}


Starting from this point, we always set $N=2n$. Thus, we consider the  phase space $\bbR^{2n}$ with coordinates $x=(x_1,\ldots,x_{2n})^{\rm T}$. We write
\beq
x=\begin{pmatrix} u \\ v \end{pmatrix}, \quad u=(x_1,\ldots,x_{n})^{\rm T}, \quad v=(x_{n+1},\ldots,x_{2n})^{\rm T}.
\eeq 
Equip the phase space with the Lie-Poisson tensor
\beq \label{P full}
J(x)  =  \begin{pmatrix} 0 & X(u) \\ -X(u) & 0 \end{pmatrix},
\eeq
where
\beq \label{X full}
X(u)=X^{\rm T}(u)=\begin{pmatrix}
x_1 & x_2 & \cdots & x_n\\
x_2 & x_3 & \cdots & x_1\\
 \cdots & \cdots & \cdots & \cdots \\
 x_n & x_1 & \cdots & x_{n-1} 
\end{pmatrix}  
\eeq
is an $n\times n$ cyclic Hankel (therefore symmetric) matrix.  The rank of $J(x)$ at a generic point  $x\in\bbR^{2n}$ is equal to $2n$. To check that $J(x)$ is a Poisson tensor, one has to verify the Jacobi identity
\beq \label{Jacobi id}
\pi_{ijk}=\{x_i,\{x_j,x_k\}\}+\{x_j,\{x_k,x_i\}\}+\{x_k,\{x_i,x_j\}\}=0
\eeq
for all possible triples of indices $\{i,j,k\}$ from $\{1,2,\ldots,2n\}$. Due to the block-diagonal structure of the matrix $J(x)$, one only has to check this for the cases $i,j\in\{1,\ldots,n\}$, $k\in\{n+1,\ldots,2n\}$ and $i,j\in\{n+1,\ldots,2n\}$, $k\in\{1,\ldots,n\}$. Moreover, in these cases \eqref{Jacobi id} simplifies to
\beq \label{Jacobi id block}
\pi_{ijk}=\{x_i,\{x_j,x_k\}\}-\{x_j,\{x_i,x_k\}\}=0.
\eeq
Both terms on the right-hand side vanish if $i,j\in\{1,\ldots,n\}$, $k\in\{n+1,\ldots,2n\}$. In the remaining case $i,j\in\{n+1,\ldots,2n\}$, $k\in\{1,\ldots,n\}$, we compute: 
$$
\{x_j,x_k\}= -x_{j+k-1\, ({\rm mod}\, n)}, 
$$
and further
$$
\{x_i,\{x_j,x_k\}\}  =  -\{x_i, x_{j+k-1\, ({\rm mod}\, n)}\}
  =  x_{i+j+k-2\, ({\rm mod}\, n)}.
$$
Since this expression is symmetric with respect to $i\leftrightarrow j$, we see that \eqref{Jacobi id block} is satisfied.
\medskip

For any function $H_0(x)$ on  $\bbR^{2n}$, the corresponding Hamiltonian system is governed by the equations of motion
\beq  \label{Ham} \nonumber
\dot x= J(x) \nabla H_0(x).
\eeq
\begin{proposition}
Consider a constant non-degenerate $2n\times 2n$ matrix $A$, and suppose that for the functions $H_1(x), \ldots, H_{n-1}(x)$ the following relations are satisfied:
\beq \label{grad K}
\nabla H_i(x) = A \nabla H_{i-1}(x), \quad i=1,\ldots,n-1.
\eeq
If the matrix $A$ satisfies
\beq \label{cond A}
 A^{\rm{T}} J(x) = J(x) A
\eeq
for any $x\in \bbR^{2n}$,
then the functions $H_i(x)$ are pairwise in involution.
\end{proposition}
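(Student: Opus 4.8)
The plan is to work directly with the Poisson bracket
\[
\{H_i,H_j\}(x)=\nabla H_i(x)^{\rm T} J(x)\,\nabla H_j(x)
\]
and to exploit the two hypotheses \eqref{grad K} and \eqref{cond A} inside it until antisymmetry forces the bracket to vanish. First I would record two elementary observations. Since $X(u)$ in \eqref{X full} is symmetric, the tensor $J(x)$ in \eqref{P full} is skew-symmetric, so $\{H_i,H_j\}=-\{H_j,H_i\}$ for all $i,j$; in particular $\{H_i,H_i\}=0$. Secondly, the relation \eqref{cond A} can be rewritten as $J(x)A=A^{\rm T}J(x)$, valid for every $x$.

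Next I would prove the following \emph{shift identity}: for indices $0\le i\le n-2$ and $1\le j\le n-1$ one has
\[
\{H_i,H_j\}=\nabla H_i^{\rm T} J(x)\,\nabla H_j=\nabla H_i^{\rm T} J(x)A\,\nabla H_{j-1}=\nabla H_i^{\rm T} A^{\rm T} J(x)\,\nabla H_{j-1}=(A\nabla H_i)^{\rm T} J(x)\,\nabla H_{j-1}=\{H_{i+1},H_{j-1}\},
\]
where the second equality uses \eqref{grad K} applied to $\nabla H_j$ (legitimate because $1\le j\le n-1$), the third uses the rewritten form of \eqref{cond A}, and the last uses \eqref{grad K} applied to $\nabla H_{i+1}$ (legitimate because $1\le i+1\le n-1$). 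Thus the value of $\{H_i,H_j\}$ is unchanged under the substitution $(i,j)\mapsto(i+1,j-1)$, as long as the indices stay within the admissible ranges.

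To finish, assume without loss of generality that $i\le j$. Applying the shift identity $j-i$ times — at the $t$-th step to the pair $(i+t-1,\,j-t+1)$, whose first entry runs over $\{i,\dots,j-1\}\subseteq\{0,\dots,n-2\}$ and whose second entry runs over $\{i+1,\dots,j\}\subseteq\{1,\dots,n-1\}$, so that every step is legitimate — yields $\{H_i,H_j\}=\{H_j,H_i\}$. Combining this with $\{H_j,H_i\}=-\{H_i,H_j\}$ gives $\{H_i,H_j\}=0$, which is the assertion. The only point requiring genuine care, and the one I would spell out explicitly, is precisely the verification that every invocation of \eqref{grad K} in the iterated shift is performed with an index in $\{1,\dots,n-1\}$; once that bookkeeping is done, the rest is immediate. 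Note that neither the differential equation \eqref{harm} nor the linearity of $J$ in $x$ plays any role in involutivity — the latter enters only in the (already verified) Jacobi identity for $J(x)$.
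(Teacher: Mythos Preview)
Your proof is correct and rests on the same two ingredients as the paper's: the intertwining relation $A^{\rm T}J(x)=J(x)A$ and the skew-symmetry of $J(x)$. The paper's argument is organized a bit more directly --- it writes $\{H_i,H_j\}=(\nabla H_i)^{\rm T}J(x)A^{j-i}\nabla H_i$ in one stroke and then observes that $J(x)A^{j-i}$ is skew-symmetric (since $(A^\ell)^{\rm T}J=JA^\ell$ for all $\ell$ and $J^{\rm T}=-J$), so the quadratic form vanishes --- whereas you reach the same conclusion by iterating a one-step shift identity; the underlying mechanism is identical.
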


\benpf Let $0\le i<j\le n-1$. Then $\nabla H_j=A^{j-i}\nabla H_i$. We have:
$$
\{H_i,H_j\} = (\nabla H_i)^{\rm{T}} J(x) \nabla H_j=(\nabla H_i)^{\rm{T}} J(x) A^{j-i} \nabla H_i.
$$
Since for any $\ell\in\mathbb N$ the matrix $A^\ell$ satisfies the same condition as \eqref{cond A}, that is,
$$
(A^\ell)^{\rm{T}} J(x) = J(x) A^\ell,
$$
that is, $J(x)A^\ell$ is skew-symmetric, we conclude that $\{H_i,H_j\}=0$. \endpf

If the minimal annulating polynomial of the matrix $A$ has degree $n$, then the matrices $I,A,\ldots,A^{n-1}$  are linearly independent, and then equation \eqref{grad K} ensures that $H_0,\ldots,H_{n-1}$ are generically functionally independent. Effectively, we are considering a family of functions $H(x)$ such that
\beq \label{grad K gen}
\nabla H(x)=(\beta_0 I+\beta_1 A+\ldots+\beta_{n-1} A^{n-1} )\nabla H_0(x).
\eeq

We now discuss applicability of this construction. For a given function $H_0$, differential equations \eqref{grad K} for $H_1$ are solvable if and only if $H_0$ satisfies the following condition:
\beq  \label{d2H}
A(\nabla^2 H_0)=(\nabla^2 H_0) A^{\rm T},
\eeq
where $\nabla^2 H_0$ is the Hesse matrix of the function $H_0$. If this condition is satisfied, then for solutions of \eqref{grad K} we find:
$$
A(\nabla^2 H_1)=A^2(\nabla^2 H_0)=A(\nabla^2 H_0) A^{\rm T}=(\nabla^2 H_1) A^{\rm T}.
$$
Thus, $H_1$ satisfies the same condition \eqref{d2H}. By induction, the same is true for all $H_i$, thus for all functions $H$ satisfying \eqref{grad K gen}. 

\begin{proposition}
For the matrix $J(x)$ given in \eqref{P full}, \eqref{X full}, the set of matrices $A$ satisfying \eqref{cond A} is given by 
\beq \label{A gen}
A = \begin{pmatrix}
{\mathcal A} & 0 \\
0 & {\mathcal A}
\end{pmatrix}=\mathcal A \oplus \mathcal A,
\eeq
where ${\mathcal A}$ is a circulant $n\times n$ matrix. 
\end{proposition}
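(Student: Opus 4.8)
The plan is to pass to $n\times n$ block form and exploit the structural fact that the cyclic Hankel matrix $X(u)$ is a \emph{fixed} permutation matrix times a \emph{free} circulant matrix. Write $A=\begin{pmatrix} P & Q\\ R & W\end{pmatrix}$ with $n\times n$ blocks. Computing $A^{\rm T}J(x)$ and $J(x)A$ from \eqref{P full}, condition \eqref{cond A} (for all $x$, equivalently for all $u$) becomes the three matrix identities
$$
R^{\rm T}X(u)=-X(u)R,\qquad Q^{\rm T}X(u)=-X(u)Q,\qquad P^{\rm T}X(u)=X(u)W,
$$
to hold for all $u\in\bbR^n$; the $(2,1)$ block gives $W^{\rm T}X(u)=X(u)P$, which is the transpose of the last identity and hence redundant.

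Next I would record the key observation. Let $Z$ be the $n\times n$ cyclic shift permutation matrix, so that the circulant matrices are precisely the polynomials in $Z$, and set $\Pi:=X(e_1)$ with $e_1=(1,0,\ldots,0)^{\rm T}$. From \eqref{X full} one checks directly that $\Pi$ is a symmetric permutation matrix with $\Pi^2=I$ and that
$$
X(u)=\Pi\,\mathcal C(u),\qquad \mathcal C(u):=\sum_{k=1}^n x_k\,Z^{k-1},
$$
where $\mathcal C(u)$ is circulant and runs through \emph{all} circulant matrices as $u$ runs through $\bbR^n$. Symmetry of $X(u)=\Pi\mathcal C(u)$ then yields $\Pi\,\mathcal C^{\rm T}\Pi=\mathcal C$ for every circulant $\mathcal C$; in particular $\Pi Z\Pi=Z^{\rm T}$, and $X(u)\mathcal C=\mathcal C^{\rm T}X(u)$ for every circulant $\mathcal C$.

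The core step is then to solve, for constant matrices $P,W$, the equation $P^{\rm T}X(u)=X(u)W$ for all $u$. Substituting $X(u)=\Pi\mathcal C(u)$ and multiplying by $\Pi$ on the left turns it into $(\Pi P^{\rm T}\Pi)\,\mathcal C=\mathcal C\,W$ for every circulant $\mathcal C$. Taking $\mathcal C=I$ gives $\Pi P^{\rm T}\Pi=W$; taking $\mathcal C=Z$ gives $WZ=ZW$, so $W$ is circulant (the commutant of $Z$ is exactly the set of circulants); and comparing $\Pi P^{\rm T}\Pi=W$ with $\Pi W^{\rm T}\Pi=W$ gives $P^{\rm T}=W^{\rm T}$, i.e. $P=W$. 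Hence $P^{\rm T}X(u)=X(u)W$ holds for all $u$ \emph{if and only if} $P=W$ is circulant, the converse being immediate from $X(u)\mathcal C=\mathcal C^{\rm T}X(u)$.

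Finally I would apply this to the three identities: the third forces $P=W$ circulant, while the first and second are the same equation with $(P,W)$ replaced by $(R,-R)$ and $(Q,-Q)$, whence $R=-R$, $Q=-Q$, i.e. $R=Q=0$. This proves $A=\mathcal A\oplus\mathcal A$ with $\mathcal A:=P$ circulant, and conversely every such $A$ satisfies \eqref{cond A} by $X(u)\mathcal A=\mathcal A^{\rm T}X(u)$, which gives \eqref{A gen}. The one step that needs genuine care is the factorization $X(u)=\Pi\,\mathcal C(u)$ together with the conjugation rule $\Pi Z\Pi=Z^{\rm T}$; once these are established, everything else is routine linear algebra.
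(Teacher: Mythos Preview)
Your proof is correct. Both you and the paper reduce to the same three block identities ($P^{\rm T}X(u)=X(u)W$, $Q^{\rm T}X(u)=-X(u)Q$, $R^{\rm T}X(u)=-X(u)R$), but you solve them differently. The paper expands $X(u)=\sum_k x_k Q_k$ in the basis of cyclic Hankel matrices $Q_k$ and then argues entry-by-entry: from $(\mathcal A_1)_{i,j}=(\mathcal A_4)_{k+1-j,k+1-i}$ for all $k$ it extracts $\mathcal A_1=\mathcal A_4$ circulant by choosing $k=i+j-1$ and $k=i+j+\ell-1$, and similarly for the off-diagonal blocks. You instead factor $X(u)=\Pi\,\mathcal C(u)$ with $\Pi=Q_1$ a fixed symmetric involution and $\mathcal C(u)$ ranging over all circulants, turn the condition into $(\Pi P^{\rm T}\Pi)\mathcal C=\mathcal C W$ for every circulant $\mathcal C$, and finish by invoking the standard fact that the commutant of the shift $Z$ is exactly the circulant algebra. (Note, incidentally, that $Q_k=\Pi Z^{k-1}$, so the two bases coincide.) Your route is more structural and uses only $\mathcal C=I$ and $\mathcal C=Z$ together with the commutant characterization, whereas the paper's index computation is more elementary and self-contained. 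Either way the argument is short; the only point worth double-checking in yours is the conjugation identity $\Pi Z\Pi=Z^{\rm T}$, which you correctly flagged as the one non-routine step.
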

\begin{proof} We write
\beq  \label{P short}
J(x)= x_1J_1+x_2J_2+\ldots+x_nJ_n, \quad {\rm where}\quad J_k=\begin{pmatrix} 0 & Q_k \\ -Q_k & 0 \end{pmatrix},
\eeq
and $Q_k$ are $n\times n$ cyclic Hankel matrices with the entries 
\beq 
(Q_k)_{i,j}=\left\{\begin{array}{ll} 1, &  i+j=k+1\ ({\rm mod}\ n), \\ 0, & {\rm else}. \end{array}\right.
\eeq
Condition \eqref{cond A} is equivalent to
\beq
A^{\rm{T}} J_k = J_k A, \quad k=1,\ldots,n.
\eeq
If $A$ is written in the block form as
\beq \label{A block prelim} \nonumber
A = \begin{pmatrix}
{\mathcal A}_1 & {\mathcal A}_2 \\
{\mathcal A}_3 & {\mathcal A}_4
\end{pmatrix},
\eeq
with $n\times n$ blocks ${\mathcal A}_i$, then condition \eqref{cond A} reads:
\beq \label{cond2}
{\mathcal A}_1^{\rm T}Q_{k}=Q_{k}{\mathcal A}_4 , 
\eeq
\beq \label{cond3}
{\mathcal A}_2^{\rm T}Q_{k}+Q_{k}{\mathcal A}_2=0, \quad {\mathcal A}_3^{\rm T}Q_{k}+Q_{k}{\mathcal A}_3=0
\eeq
for all $k=1,\ldots, n$. One easily shows that:
\begin{itemize}
\item Conditions \eqref{cond3} force the matrices ${\mathcal A}_2$, ${\mathcal A}_3$ to vanish. Indeed, we find:
$$
({\mathcal A}_2)_{i,j}=-({\mathcal A}_2)_{k+1-j,k+1-i} \quad {\rm for\quad all}\quad k=1,\ldots, n.
$$
For $k=i+j-1$ we arrive at $({\mathcal A}_2)_{i,j}=0$.
\item Conditions \eqref{cond2} force that the matrix ${\mathcal A}_1={\mathcal A}_4$ is circulant. Indeed, we find:
$$
({\mathcal A}_1)_{i,j}=({\mathcal A}_4)_{k+1-j,k+1-i} \quad {\rm for\quad all}\quad k=1,\ldots, n.
$$
For $k=i+j-1$ this gives $({\mathcal A}_1)_{i,j}=({\mathcal A}_4)_{i,j}$. Moreover, for $k=i+j+\ell-1$, we arrive at $({\mathcal A}_1)_{i,j}=({\mathcal A}_4)_{i+\ell,j+\ell}=({\mathcal A}_1)_{i+\ell,j+\ell}$. Thus, shifting both indices $i,j$ by the same amount does not change the value of $({\mathcal A}_1)_{i,j}$.  \end{itemize}
We finally arrive at \eqref{A gen}.
\end{proof}
For any such matrix ${\mathcal A}$, the set of all linear combinations $\beta_0 I+\beta_1 {\mathcal A}+\ldots+\beta_{n-1}{\mathcal A}^{n-1}$ is a vector subspace of the space of circulant matrices. Thus, without restricting the generality, we can set from the beginning:
\beq \label{A final}
A =\begin{pmatrix} \cP & 0 \\ 0 & \cP \end{pmatrix} = \cP\oplus \cP, \qquad \cP=\begin{pmatrix} 0 & 1 & 0 & \cdots & 0 & 0 \\
                              0 & 0 & 1 & \cdots & 0 & 0 \\
                              \cdots & \cdots  & \cdots  & \cdots  & \cdots  & \cdots \\
                              0 & 0 & 0 & \cdots & 0 & 1 \\
                              1 & 0 & 0 & \cdots & 0 & 0
                              \end{pmatrix}.
\eeq
$\cP$ is the $n\times n$ cyclic shift matrix, satisfying $\cP^{\rm T}=\cP^{-1}=\cP^{n-1}$.
\smallskip

With this matrix $A$, the quadratic solutions of the matrix differential equation  \eqref{d2H} are easily characterized.
\begin{proposition}
A quadratic form 
\beq\label{H quad} \nonumber
H_0(x)=\frac{1}{2}\sum_{i,j=1}^{2n} h_{ij}x_ix_j
\eeq
with a constant symmetric matrix $\nabla^2 H_0=(h_{ij})_{i,j=1}^{2n}$ satisfies matrix differential equation \eqref{d2H} with the matrix $A$ from \eqref{A final} if and only if 
\beq \label{cH}
\nabla^2 H_0=\begin{pmatrix} \mathcal H_1 & \mathcal H_2 \\ \mathcal H_2 & \mathcal H_3 \end{pmatrix},
\eeq
where $n\times n$ blocks $\mathcal H_1,\mathcal H_2, \mathcal H_3$ are cyclic Hankel (therefore symmetric) matrices. The dimension of the space of solutions of \eqref{d2H} is $3n$.
\end{proposition}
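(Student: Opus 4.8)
The plan is to exploit the block-diagonal structure of $A=\cP\oplus\cP$ in order to split the matrix equation \eqref{d2H} into equations for the individual $n\times n$ blocks of $\nabla^2 H_0$, and then to translate each such block equation into an elementary condition on matrix entries.

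First I would write the symmetric matrix $\nabla^2 H_0$ in $n\times n$ block form as $\begin{pmatrix}\mathcal H_1 & \mathcal H_2 \\ \mathcal H_2^{\rm T} & \mathcal H_3\end{pmatrix}$ with $\mathcal H_1=\mathcal H_1^{\rm T}$ and $\mathcal H_3=\mathcal H_3^{\rm T}$. Since $A^{\rm T}=\cP^{\rm T}\oplus\cP^{\rm T}$, multiplying out \eqref{d2H} block by block produces four $n\times n$ equations; the $(2,1)$-block equation is merely the transpose of the $(1,2)$-block equation, so \eqref{d2H} is equivalent to the three identities
\[
\cP\,\mathcal H=\mathcal H\,\cP^{\rm T}, \qquad \mathcal H\in\{\mathcal H_1,\mathcal H_2,\mathcal H_3\}.
\]
Using $\cP^{\rm T}=\cP^{-1}$, each of these is in turn equivalent to $\cP\,\mathcal H\,\cP=\mathcal H$.

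Next I would compute the entrywise effect of conjugating by the cyclic shift: reading indices modulo $n$, one has $(\cP\,\mathcal H\,\cP)_{ij}=\mathcal H_{i+1,\,j-1}$, so $\cP\,\mathcal H\,\cP=\mathcal H$ holds if and only if $\mathcal H_{ij}$ depends only on $i+j \bmod n$, i.e. if and only if $\mathcal H$ is a cyclic Hankel matrix. Such a matrix is automatically symmetric, which is why the constraint is compatible with the symmetry of $\nabla^2 H_0$ and forces the off-diagonal block $\mathcal H_2$ to be symmetric too, so that $\nabla^2 H_0$ indeed has the stated form \eqref{cH}; the converse inclusion is the same computation read backwards. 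Finally, a cyclic Hankel $n\times n$ matrix is freely parametrized by the $n$ values it takes on the residue classes $i+j\bmod n$, so the three blocks contribute $3n$ free parameters and the solution space of \eqref{d2H} has dimension $3n$.

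There is no serious obstacle here; the only point requiring care is the bookkeeping of indices modulo $n$ in the identity $(\cP\,\mathcal H\,\cP)_{ij}=\mathcal H_{i+1,\,j-1}$, together with the observation that the $(2,1)$-block equation carries no information beyond the $(1,2)$-block one. Alternatively, one could avoid the explicit index chase by remarking that $\cP\,\mathcal H=\mathcal H\,\cP^{\rm T}$ is formally the condition \eqref{cond2} already analyzed in the preceding proposition, but the direct argument above is shorter and self-contained.
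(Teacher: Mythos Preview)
Your argument is correct and follows essentially the same route as the paper: reduce \eqref{d2H} to the three block equations $\cP\,\mathcal H_k=\mathcal H_k\,\cP^{\rm T}$, then read off the cyclic Hankel condition entrywise and count parameters. Your version is slightly more explicit about the off-diagonal block (starting with $\mathcal H_2^{\rm T}$ in the $(2,1)$ slot and deducing its symmetry), but this is a cosmetic difference.
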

\begin{proof}
Clearly, matrix equation \eqref{d2H} is equivalent to $\cP\mathcal H_k=\mathcal H_k \cP^{\rm T}$ for $k=1,2,3$. This is equivalent to $(\mathcal H_k)_{i,j}=(\mathcal H_k)_{i-1,j+1}$, so that $(\mathcal H_k)_{i,j}$ only depends on $i+j$, that is, $\mathcal H_k$ is cyclic Hankel.
\end{proof}

The two classes of $n\times n$ matrices: {\em circulant} ones, characterized by $\cP\mathcal A=\mathcal A\cP$, and {\em cyclic Hankel} ones, characterized by $\cP\mathcal A=\cA\cP^{-1}$, will play a fundamental role in our work. For further reference, we give an important property of the latter class.
\begin{lemma}\label{lemma ABC}
If $\mathcal A$, $\mathcal B$, $\mathcal C$ are three cyclic Hankel matrices, then
\beq \label{ABC}
\mathcal A\mathcal B\mathcal C=\mathcal C\mathcal B\mathcal A,
\eeq
and this product is a cyclic Hankel matrix.
\end{lemma}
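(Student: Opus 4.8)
The plan is to reduce everything to the two algebraic characterizations recalled just above the lemma: a matrix is circulant iff it commutes with $\cP$, and cyclic Hankel iff conjugation by $\cP$ inverts it, i.e. $\cP\mathcal M=\mathcal M\cP^{-1}$; moreover a cyclic Hankel matrix is symmetric. The whole argument is then a short index-free manipulation with $\cP$.

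First I would record the elementary reformulation that $\cP\mathcal B=\mathcal B\cP^{-1}$ is equivalent to $\cP^{-1}\mathcal B=\mathcal B\cP$ (left-multiply by $\cP^{-1}$ and right-multiply by $\cP$). Using this for the middle factor together with the defining relations $\cP\mathcal A=\mathcal A\cP^{-1}$ and $\cP\mathcal C=\mathcal C\cP^{-1}$ for the outer ones, I compute
\[
\cP(\mathcal A\mathcal B\mathcal C)=(\cP\mathcal A)\mathcal B\mathcal C=\mathcal A(\cP^{-1}\mathcal B)\mathcal C=\mathcal A\mathcal B(\cP\mathcal C)=\mathcal A\mathcal B\mathcal C\,\cP^{-1}.
\]
Hence the product $\mathcal A\mathcal B\mathcal C$ again satisfies the defining relation of cyclic Hankel matrices, so it is cyclic Hankel; in particular it is symmetric. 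This already gives the second assertion of the lemma.

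It remains to see that $\mathcal A\mathcal B\mathcal C=\mathcal C\mathcal B\mathcal A$. Since each of $\mathcal A,\mathcal B,\mathcal C$ is symmetric, transposing gives $(\mathcal A\mathcal B\mathcal C)^{\rm T}=\mathcal C^{\rm T}\mathcal B^{\rm T}\mathcal A^{\rm T}=\mathcal C\mathcal B\mathcal A$; combined with the symmetry of $\mathcal A\mathcal B\mathcal C$ established in the previous step, this yields $\mathcal A\mathcal B\mathcal C=\mathcal C\mathcal B\mathcal A$, as claimed.

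I do not anticipate a genuine obstacle: the only point requiring care is the bookkeeping of the power of $\cP$ as it is moved past each factor — each cyclic Hankel factor flips its sign, and for the middle factor one must invoke the derived identity $\cP^{-1}\mathcal B=\mathcal B\cP$. An alternative, equally short route avoiding the transpose is to observe (by the same computation with one fewer factor) that $\mathcal A\mathcal B$ is circulant, that circulant matrices commute with one another and with $\cP$, and that a circulant times a cyclic Hankel matrix is cyclic Hankel; but the transpose argument is the cleanest and delivers the ``cyclic Hankel'' conclusion for free.
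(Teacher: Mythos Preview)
Your proof is correct and follows essentially the same approach as the paper: both first verify that $\mathcal A\mathcal B\mathcal C$ is cyclic Hankel by pushing $\cP$ through the three factors (the paper writes the equivalent condition $\cP(\mathcal A\mathcal B\mathcal C)\cP=\mathcal A\mathcal B\mathcal C$ and inserts $\cP\cP^{-1}$ between factors, which amounts to exactly your computation), and then derive $\mathcal A\mathcal B\mathcal C=\mathcal C\mathcal B\mathcal A$ from the resulting symmetry together with the symmetry of each factor.
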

\begin{proof}
Since all three matrices $\mathcal A$, $\mathcal B$, $\mathcal C$ are symmetric, equation \eqref{ABC} is equivalent to the statement that the matrix $\mathcal A\mathcal B\mathcal C$ is symmetric. But this is a consequence of the fact that this matrix is cyclic Hankel, which follows from 
$$
\cP(\mathcal A\mathcal B\mathcal C)\cP=\cP\mathcal A \cP\cdot \cP^{-1}{\mathcal B}\cP^{-1}\cdot \cP\mathcal C \cP=\mathcal A\mathcal B\mathcal C.
\qquad\qquad\qedhere
$$
\end{proof}
Note that \eqref{ABC} is easily generalized for a product of any odd number of cyclic Hankel matrices.

\section{General algebraic properties of the vector fields $f_i(x)=J(x)\nabla H_i(x)$}
\label{sect vector fields}

Assume that $H_0(x)$ is a homogeneous quadratic polynomial satisfying \eqref{d2H}.
Set 
\beq \label{f}
f_0(x)=J(x) \nabla H_0(x).
\eeq
\begin{lemma}
Vector field $f_0(x)$ satisfies the following identity:
\beq \label{Af'}
 A^{\rm T}f_0'(x)=f_0'(x)A^{\rm T}.
\eeq
\end{lemma}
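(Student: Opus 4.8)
The plan is to differentiate $f_0$ via \eqref{f'}, split $f_0'(x)$ into two summands, and show that each of them commutes with $A^{\rm T}$ separately. Since $H_0$ is a homogeneous quadratic form we have $\nabla H_0(x)=(\nabla^2 H_0)x$, so \eqref{f'} becomes
\[
f_0'(x)=J(x)\,\nabla^2 H_0+J'\nabla H_0(x).
\]

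For the first summand the claim follows at once from \eqref{cond A} and then \eqref{d2H}:
\[
A^{\rm T}\bigl(J(x)\nabla^2 H_0\bigr)=\bigl(J(x)A\bigr)\nabla^2 H_0=J(x)\bigl(A\,\nabla^2 H_0\bigr)=J(x)\bigl((\nabla^2 H_0)A^{\rm T}\bigr)=\bigl(J(x)\nabla^2 H_0\bigr)A^{\rm T}.
\]

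The second summand is the substantive part, and it is here that the explicit form of $J(x)$ is needed. From $J(x)=\sum_{k=1}^n x_kJ_k$ (see \eqref{P short}) and the definition of the tensor $J'$ one reads off that the $m$-th column of the matrix $J'a$ equals $J_ma$ for $1\le m\le n$ and vanishes for $m>n$. Writing $a=\nabla H_0(x)=(a^u,a^v)^{\rm T}$ and using $J_k=\begin{pmatrix}0&Q_k\\-Q_k&0\end{pmatrix}$, this yields the block form $J'a=\begin{pmatrix}M & 0\\ N & 0\end{pmatrix}$, where $M$ and $N$ are the $n\times n$ matrices whose $m$-th columns are $Q_ma^v$ and $-Q_ma^u$. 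The one-line index identity $\cP^{\rm T}Q_k=Q_{k+1\,(\mathrm{mod}\,n)}$ shows that every column of $M$ (and of $N$) is $\cP^{\rm T}$ applied cyclically to the preceding one, whence $M$ and $N$ commute with $\cP^{\rm T}=\cP^{-1}$, i.e.\ they are circulant. Consequently $J'a$ commutes with $A^{\rm T}=\cP^{-1}\oplus\cP^{-1}$, which is exactly the assertion $A^{\rm T}\bigl(J'\nabla H_0(x)\bigr)=\bigl(J'\nabla H_0(x)\bigr)A^{\rm T}$. Adding the two summands gives \eqref{Af'}.

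The only real obstacle is bookkeeping with the cyclic index shifts: \eqref{cond A} by itself only gives $A^{\rm T}J_k=J_kA$, while what actually drives the circulance of $M$ and $N$ is the stronger relation $\cP^{\rm T}Q_k=Q_{k+1\,(\mathrm{mod}\,n)}$ (equivalently $A^{\rm T}J_k=J_{k+1\,(\mathrm{mod}\,n)}$), a feature particular to the cyclic-Hankel tensor \eqref{P full}--\eqref{X full} and to the normalization $A=\cP\oplus\cP$. An equivalent, slightly more conceptual way to organize the same computation is to record first the identity $J(A^{\rm T}y)=A^{\rm T}J(y)$ for all $y\in\bbR^{2n}$ (again the shift relation in disguise) and then to deduce $J'(a)A^{\rm T}=A^{\rm T}J'(a)$ from the chain $J'(a)A^{\rm T}z=J(A^{\rm T}z)a=A^{\rm T}J(z)a=A^{\rm T}\bigl(J'(a)z\bigr)$, using \eqref{J'} in the outer equalities.
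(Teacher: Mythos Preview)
Your proof is correct and follows essentially the same strategy as the paper: both split $f_0'(x)$ via \eqref{f'} into $J(x)\nabla^2 H_0$ and $J'\nabla H_0(x)$, handle the first summand by \eqref{cond A} and \eqref{d2H}, and show that the second summand has circulant $n\times n$ blocks (the paper writes them directly as $\xi_1 I+\xi_2\cP^{-1}+\ldots+\xi_n\cP^{-n+1}$, while you deduce circulance from the shift relation $\cP^{\rm T}Q_k=Q_{k+1}$). Your closing alternative via $J(A^{\rm T}y)=A^{\rm T}J(y)$ and \eqref{J'} is a pleasant, slightly more conceptual repackaging of the same computation.
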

\begin{proof}

We prove that both terms on the right-hand side of \eqref{f'} satisfy \eqref{Af'} separately. For the first term we use \eqref{d2H}:
$$
A^{\rm T}J(x) \nabla^2 H_0(x)=J(x)A\nabla^2 H_0(x)=J(x)\nabla^2 H_0(x)A^{\rm T}.
$$
For the second term, we prove a more general statement, namely
$$
A^{\rm T}(J'y)=(J'y)A^{\rm T}
$$
for an arbitrary vector 
$$
y=\begin{pmatrix} \xi \\ \eta \end{pmatrix}\in\bbR^{2n}, \quad \xi,\eta\in \bbR^n.
$$
Here, according to \eqref{P short}, $J'y$ is the $2n\times 2n$ matrix whose first $n$ columns are given by 
$J_ky$, $k=1,\ldots,n$, while the last $n$ columns vanish, so that
\beq \label{J'y}
J'y = \begin{pmatrix} J_1y & \cdots & J_n y & 0 & \cdots & 0 \end{pmatrix}  
         = \begin{pmatrix} Q_1 \eta & \cdots & Q_n \eta & 0 & \cdots & 0 \\
                                    -Q_1 \xi & \cdots & -Q_n \xi & 0 & \cdots & 0 \end{pmatrix}.
\eeq
Thus, according to \eqref{A final}, relation $A^{\rm T}(J'y)=(J'y)A^{\rm T}$ is equivalent to
\beq \label{main lemma aux1}
\cP^{\rm T}\begin{pmatrix}  Q_1 \eta & \cdots & Q_n \eta \end{pmatrix}=\begin{pmatrix}  Q_1 \eta & \cdots & Q_n \eta \end{pmatrix}\cP^{\rm T},
\eeq
\beq \label{main lemma aux2}
\cP^{\rm T}\begin{pmatrix}  Q_1 \xi & \cdots & Q_n \xi \end{pmatrix}=
                         \begin{pmatrix} Q_1 \xi & \cdots & Q_n \xi  \end{pmatrix}  \cP^{\rm T}  .
\eeq
One easily sees that
\beq \label{J'y block}
\begin{pmatrix}  Q_1 \xi & \cdots & Q_n \xi \end{pmatrix}=\xi_1I+\xi_2\cP^{-1}+\xi_3\cP^{-2}+\ldots+\xi_n\cP^{-n+1}
\eeq
is a circulant matrix, therefore it commutes with $\cP^{\rm T}=\cP^{-1}$. This proves \eqref{main lemma aux1}, \eqref{main lemma aux2} .
\end{proof}

Now let $H_1(x)$ be  a function satisfying $\nabla H_1(x)=A\nabla H_0(x)$, and set 
\beq \label{g}
f_1(x)=J(x)\nabla H_1(x)=J(x)A\nabla H_0(x).
\eeq
Due to \eqref{cond A}, we have:
\beq \label{f vs g}
f_1(x)= A^{\rm T} f_0(x).
\eeq
By differentiation of  \eqref{f vs g} we have:
\beq \label{f' vs g'}
f_1'(x)=A^{\rm T} f'_0(x).
\eeq
\begin{cor}
The following identities hold true:
\beq \label{vf comm}
 f_0'(x)f_1(x)=f_1'(x)f_0(x),
\eeq
\beq \label{f'g'=g'f'}
 f_0'(x)f_1'(x)=f_1'(x)f_0'(x).
\eeq
\end{cor}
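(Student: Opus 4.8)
Both identities are purely formal consequences of the three relations already established: \eqref{f vs g}, \eqref{f' vs g'} and, crucially, the commutation identity \eqref{Af'} from the preceding lemma. The plan is simply to substitute $f_1=A^{\mathrm T}f_0$ and $f_1'=A^{\mathrm T}f_0'$ everywhere and then push the constant matrix $A^{\mathrm T}$ through $f_0'(x)$ using \eqref{Af'}.

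For \eqref{vf comm}: starting from the left-hand side, use \eqref{f vs g} to write $f_0'(x)f_1(x)=f_0'(x)A^{\mathrm T}f_0(x)$, then apply \eqref{Af'} to commute the two factors $f_0'(x)$ and $A^{\mathrm T}$, obtaining $A^{\mathrm T}f_0'(x)f_0(x)$. On the other hand, the right-hand side is $f_1'(x)f_0(x)=A^{\mathrm T}f_0'(x)f_0(x)$ directly by \eqref{f' vs g'}. Hence the two sides agree. (Here all products are of a $2n\times 2n$ matrix with a vector, so associativity is all that is needed.)

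For \eqref{f'g'=g'f'}: by \eqref{f' vs g'} the left-hand side equals $f_0'(x)A^{\mathrm T}f_0'(x)$; applying \eqref{Af'} to the first two factors gives $A^{\mathrm T}f_0'(x)f_0'(x)$, which, reading \eqref{f' vs g'} backwards, is exactly $f_1'(x)f_0'(x)$. This establishes the second identity.

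I expect no genuine obstacle here: the only nontrivial input is \eqref{Af'}, which has already been proved, and everything else is elementary matrix algebra. The one point to keep in mind is that one must not try to commute $f_0'(x)$ with $f_1(x)$ or $f_0(x)$ directly — the argument works only because $A^{\mathrm T}$ is a \emph{constant} matrix and \eqref{Af'} lets it pass through $f_0'(x)$, after which the remaining factors can be regrouped freely.
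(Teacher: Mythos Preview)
Your argument is correct and matches the paper's proof essentially line for line: both identities are obtained by writing $f_1=A^{\mathrm T}f_0$ and $f_1'=A^{\mathrm T}f_0'$ via \eqref{f vs g}, \eqref{f' vs g'}, and then commuting $A^{\mathrm T}$ past $f_0'(x)$ using \eqref{Af'}. There is nothing to add.
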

\begin{proof}
We compute with the help of \eqref{f vs g}, \eqref{Af'}:
\[
f_0'(x)f_1(x)=f_0'(x)A^{\rm T}f_0(x)=A^{\rm T}f_0'(x)f_0(x)=f_1'(x)f_0(x),
\]
and similarly, with the help of \eqref{f' vs g'}, \eqref{Af'}:
\[
f_0'(x)f_1'(x)=f_0'(x)A^{\rm T}f_0'(x)=A^{\rm T}f_0'(x)f_0'(x)=f_1'(x)f_0'(x).
\]
Note that \eqref{vf comm} expresses the commutativity of the vector fields $f_0(x)$ and $f_1(x)$. 
\end{proof}

\begin{lemma} \label{lemma matrix polynomial}
We have:
\beq \label{Jd2H as polynomial}
\big(J(x)\nabla^2 H\big)^2=q_0(x) I+q_1(x)A+\ldots +q_{n-1}(x)A^{n-1},
\eeq
where
\begin{eqnarray*}
q_0(x) & = & \frac{1}{2n} {\rm tr} \Big(\big(J(x)\nabla^2 H\big)^2\Big),   \label{q_0}\\
q_\ell(x) & = & \frac{1}{2n} {\rm tr} \Big(\big(A^{\rm T}\big)^{\ell}\big(J(x)\nabla^2 H\big)^2\Big)\\
               &= &\frac{1}{2n}{\rm tr}\big(J(x)\nabla^2 H_\ell\cdot J(x)\nabla^2 H\big),   \quad \ell=1,\ldots, n-1. \label{q_i}
\end{eqnarray*}
\end{lemma}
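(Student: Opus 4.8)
The plan is to first show that $\big(J(x)\nabla^2 H\big)^2$ has the block-diagonal form $\mathcal M\oplus\mathcal M$ with $\mathcal M$ an $n\times n$ \emph{circulant} matrix, and then to recover the coefficients in \eqref{Jd2H as polynomial} by pairing with powers of $A^{\rm T}$ under the trace. The standing assumption is that $H$ satisfies \eqref{d2H}, so by the Proposition preceding Lemma~\ref{lemma ABC} its Hesse matrix has the form \eqref{cH} with cyclic Hankel blocks $\mathcal H_1,\mathcal H_2,\mathcal H_3$, while $J(x)$ has the block form \eqref{P full}, \eqref{X full} with the cyclic Hankel matrix $X=X(u)$. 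A direct multiplication gives
\beq
J(x)\nabla^2 H=\begin{pmatrix} X\mathcal H_2 & X\mathcal H_3 \\ -X\mathcal H_1 & -X\mathcal H_2\end{pmatrix}.
\eeq

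Each block above is a product of two cyclic Hankel matrices, and such a product is always circulant: using the defining relation $\cP\mathcal A=\mathcal A\cP^{-1}$ for cyclic Hankel matrices (hence also $\cP^{-1}\mathcal A=\mathcal A\cP$), one gets $\cP(X\mathcal H_k)=X\cP^{-1}\mathcal H_k=X\mathcal H_k\cP$, i.e. $X\mathcal H_k$ commutes with $\cP$. Now I would square $J(x)\nabla^2 H$: its four blocks are sums of products of the circulant matrices $X\mathcal H_1,X\mathcal H_2,X\mathcal H_3$, and circulant matrices commute (they are all polynomials in $\cP$). Therefore the two off-diagonal blocks, $X\mathcal H_2\cdot X\mathcal H_3-X\mathcal H_3\cdot X\mathcal H_2$ and $-X\mathcal H_1\cdot X\mathcal H_2+X\mathcal H_2\cdot X\mathcal H_1$, both vanish, while the two diagonal blocks both equal $\mathcal M:=(X\mathcal H_2)^2-X\mathcal H_1\cdot X\mathcal H_3$. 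Hence $\big(J(x)\nabla^2 H\big)^2=\mathcal M\oplus\mathcal M$ with $\mathcal M$ circulant. Writing $\mathcal M=\sum_{\ell=0}^{n-1}q_\ell(x)\cP^\ell$ (a circulant matrix is uniquely a polynomial of degree $\le n-1$ in $\cP$) and recalling $A=\cP\oplus\cP$ from \eqref{A final}, one obtains \eqref{Jd2H as polynomial}, with $q_\ell(x)$ quadratic forms in $x$ since $J(x)$ is linear in $x$.

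To identify the $q_\ell$, I would multiply \eqref{Jd2H as polynomial} by $(A^{\rm T})^m=\cP^{-m}\oplus\cP^{-m}$ and take the trace. Since ${\rm tr}(\cP^k)=n$ when $k\equiv 0\ ({\rm mod}\ n)$ and ${\rm tr}(\cP^k)=0$ otherwise, one has ${\rm tr}\big((A^{\rm T})^m A^\ell\big)=2n\,\delta_{\ell m}$ for $0\le \ell,m\le n-1$, whence $q_m(x)=\frac{1}{2n}{\rm tr}\big((A^{\rm T})^m(J(x)\nabla^2 H)^2\big)$, and for $m=0$ this reduces to $\frac{1}{2n}{\rm tr}\big((J(x)\nabla^2 H)^2\big)$. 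For the last expression of $q_\ell$, note that $\nabla^2 H_\ell=A^\ell\nabla^2 H$ (this matrix is symmetric by \eqref{d2H}, hence is indeed a Hesse matrix), while \eqref{cond A} iterated gives $J(x)A^\ell=(A^{\rm T})^\ell J(x)$; therefore $J(x)\nabla^2 H_\ell\cdot J(x)\nabla^2 H=(A^{\rm T})^\ell(J(x)\nabla^2 H)^2$ and the two trace formulas agree.

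The only substantive point is the collapse of $\big(J(x)\nabla^2 H\big)^2$ to the block-diagonal form $\mathcal M\oplus\mathcal M$; this rests entirely on the two elementary facts that a product of two cyclic Hankel matrices is circulant and that circulant matrices commute. Beyond that I expect no genuine obstacle, only bookkeeping; a minor thing to keep in mind is that $X=X(u)$ is built from the same coordinates $x_1,\dots,x_n$ that also enter $\nabla H$, but this is irrelevant to the argument, which uses only that $X$ is cyclic Hankel.
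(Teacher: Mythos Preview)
Your proof is correct and follows essentially the same route as the paper: compute the block form of $J(x)\nabla^2 H$, square it, and show the off-diagonal blocks vanish while the diagonal blocks coincide and are circulant. The only cosmetic difference is that the paper invokes Lemma~\ref{lemma ABC} (on triples $\mathcal H_i X\mathcal H_j=\mathcal H_j X\mathcal H_i$) where you instead observe that each $X\mathcal H_k$ is circulant and circulants commute; you also spell out the trace extraction of the $q_\ell$ and the identity $J(x)\nabla^2 H_\ell=(A^{\rm T})^\ell J(x)\nabla^2 H$, which the paper leaves implicit.
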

\begin{proof}
One easily computes from \eqref{P full}, \eqref{cH}:
$$
J(x)\nabla^2 H=\begin{pmatrix} X\cH_2 & X\cH_3 \\ -X\cH_1 & -X\cH_2\end{pmatrix},
$$
where for the sake of brevity we write $X$ for $X(u)$, and further:
$$
\big(J(x)\nabla^2 H\big)^2=\begin{pmatrix} X\cH_2X\cH_2-X\cH_3X\cH_1 & X\cH_2X\cH_3-X\cH_3X\cH_2 \\
                                                                    -X\cH_1X\cH_2+X\cH_2X\cH_1 & -X\cH_1X\cH_3+X\cH_2X\cH_2\end{pmatrix}.
$$
By Lemma \ref{lemma ABC}, we have: $\cH_iX\cH_j=\cH_jX\cH_i$, therefore the off-diagonal blocks here vanish, while the diagonal blocks are equal. Moreover, the diagonal blocks are circulant matrices and therefore can be represented as
\beq \label{Jd2H block}
X\cH_2X\cH_2-X\cH_3X\cH_1=q_0(x) I+q_1(x)\cP+\ldots q_{n-1}(x)\cP^{n-1},
\eeq
which finishes the proof.
\end{proof}

\begin{lemma} \label{lemma grads pq}
Functions $q_\ell(x)$ from Lemma \ref{lemma matrix polynomial} satisfy
$$
\nabla q_\ell(x)=A^\ell\ \nabla q_0(x)\quad \Leftrightarrow\quad \nabla q_\ell=A\nabla q_{\ell-1}, \quad \ell=1,\ldots,n-1.
$$
\end{lemma}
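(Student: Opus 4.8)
The plan is to read off $\nabla q_\ell$ by polarizing the quadratic form $q_\ell$, and then to move the matrix $A$ through $J(\cdot)$ by means of an intertwining identity. Write $P=P(x)=J(x)\nabla^2 H$; from \eqref{cond A} and \eqref{d2H} one has $A^{\rm T}P=A^{\rm T}J(x)\nabla^2 H=J(x)A\nabla^2 H=J(x)\nabla^2 H\,A^{\rm T}=PA^{\rm T}$, so every power of $A^{\rm T}$ commutes with $P$. Since $J(\cdot)$ is linear, $q_\ell(x)=\tfrac1{2n}{\rm tr}\big((A^{\rm T})^\ell P^2\big)$ is a homogeneous quadratic form, and replacing $x$ by $x+ty$ (so $P\mapsto P+tP'$ with $P'=J(y)\nabla^2 H$) gives
\[
\langle y,\nabla q_\ell(x)\rangle=\frac1{2n}{\rm tr}\big((A^{\rm T})^\ell(PP'+P'P)\big)=\frac1{n}{\rm tr}\big((A^{\rm T})^\ell P P'\big),
\]
where the two terms are merged using cyclicity of the trace and the commutativity $P(A^{\rm T})^\ell=(A^{\rm T})^\ell P$ just noted. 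In particular, for $\ell=0$, $\langle z,\nabla q_0(x)\rangle=\tfrac1n{\rm tr}\big(J(x)\nabla^2 H\cdot J(z)\nabla^2 H\big)$ for every $z\in\bbR^{2n}$.

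Next I would establish the intertwining identity $(A^{\rm T})^\ell J(y)=J\big((A^{\rm T})^\ell y\big)$ for all $y\in\bbR^{2n}$. It is enough to prove it for $\ell=1$ and iterate, and that case is immediate from \eqref{J'} together with the relation $A^{\rm T}(J'z)=(J'z)A^{\rm T}$ already established in the course of proving \eqref{Af'}: for every $z$,
\[
A^{\rm T}J(y)z=A^{\rm T}(J'z)y=(J'z)A^{\rm T}y=J(A^{\rm T}y)z .
\]

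Finally I would assemble the pieces. Starting from $\langle y,\nabla q_\ell(x)\rangle=\tfrac1n{\rm tr}\big((A^{\rm T})^\ell J(x)\nabla^2 H\cdot J(y)\nabla^2 H\big)$, move $(A^{\rm T})^\ell$ to the far right by cyclicity, rewrite $\nabla^2 H(A^{\rm T})^\ell=A^\ell\nabla^2 H$ (iterating \eqref{d2H}), then $J(y)A^\ell=(A^{\rm T})^\ell J(y)$ (iterating \eqref{cond A}), and finally invoke the intertwining identity to turn $(A^{\rm T})^\ell J(y)$ into $J\big((A^{\rm T})^\ell y\big)$; this yields
\[
\langle y,\nabla q_\ell(x)\rangle=\frac1n{\rm tr}\big(J(x)\nabla^2 H\cdot J\big((A^{\rm T})^\ell y\big)\nabla^2 H\big)=\big\langle (A^{\rm T})^\ell y,\nabla q_0(x)\big\rangle=\big\langle y,A^\ell\nabla q_0(x)\big\rangle .
\]
As $y$ is arbitrary, $\nabla q_\ell(x)=A^\ell\nabla q_0(x)$, equivalently $\nabla q_\ell=A\nabla q_{\ell-1}$. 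I expect the one genuinely essential step to be the intertwining identity: lacking it, one can only push $(A^{\rm T})^\ell$ cyclically around the trace and never absorb it into $J$, whereas everything else is a short chain of substitutions with \eqref{cond A} and \eqref{d2H}.
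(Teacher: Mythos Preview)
Your proof is correct and takes a genuinely different route from the paper. The paper works in coordinates: it unpacks $q_\ell$ via the block form \eqref{Jd2H block}, writes everything in terms of the first columns $u,a,b,c$ of $X(u),\mathcal H_1,\mathcal H_2,\mathcal H_3$ and the shift $\mathcal P$, computes $\nabla_u q_0$ and $\nabla_u q_\ell$ as explicit sums of terms like $\langle u,\mathcal P^{-i}b\rangle\,\mathcal P^{i}b$, and then recognizes by an index shift that $\nabla_u q_\ell=\mathcal P^{\ell}\nabla_u q_0$. Your argument instead works at the level of traces: polarize the quadratic form $q_\ell$ to obtain $\langle y,\nabla q_\ell(x)\rangle=\tfrac1n\,{\rm tr}\big((A^{\rm T})^\ell PP'\big)$, and then absorb $(A^{\rm T})^\ell$ into $J(y)$ via the intertwining identity $A^{\rm T}J(y)=J(A^{\rm T}y)$, which you correctly derive from \eqref{J'} together with the commutation $A^{\rm T}(J'z)=(J'z)A^{\rm T}$ proved inside the argument for \eqref{Af'}. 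Every step checks: the merging of $PP'+P'P$ uses that $P$ commutes with $(A^{\rm T})^\ell$, and the final chain of substitutions with \eqref{cond A} and \eqref{d2H} is accurate.

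The trade-off: the paper's computation is self-contained and makes the role of the cyclic Hankel structure transparent, but is index-heavy. Your argument is shorter and coordinate-free, isolating the intertwining relation $A^{\rm T}J(y)=J(A^{\rm T}y)$ as the single structural fact responsible for the lemma; it would carry over verbatim to any other pair $(J,A)$ for which \eqref{cond A}, \eqref{d2H}, and $A^{\rm T}(J'z)=(J'z)A^{\rm T}$ hold.
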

\begin{proof}
According to equation \eqref{Jd2H block}, we can write:
\begin{eqnarray*}
q_0(x) & = & \frac{1}{n}\ {\rm tr}\big(X(u){\mathcal H}_2X(u){\mathcal H}_2-X(u){\mathcal H}_3X(u){\mathcal H}_1\big), \\
q_\ell(x) & = & \frac{1}{n}\ {\rm tr}\big(\cP^{-\ell}X(u){\mathcal H}_2X(u){\mathcal H}_2-\cP^{-\ell}X(u){\mathcal H}_3X(u){\mathcal H}_1\big).
\end{eqnarray*}
We denote by $u,a,b,c\in \mathbb R^n$ the first columns of the matrices $X(u)$, ${\mathcal H}_1$, ${\mathcal H}_2$, ${\mathcal H}_3$, respectively. Then the $i$-th column of the matrix $X(u)$ is $\cP^i u$, the $i$-th column of the matrix $\cP^{-\ell}X(u)$ is $\cP^{i-\ell} u$, and the entry $(i,j)$ of the circulant matrix $X(u){\mathcal H}_1$, say, is equal to $\langle u,\cP^{j-i}a\rangle$.  Therefore, 
\begin{eqnarray*}
q_0(x) & = &\frac{1}{n}\  \sum_{i=0}^{n-1}\Big(\langle u,\cP^{i}b\rangle\langle u,\cP^{-i}b\rangle -\langle u,\cP^{i}c\rangle\langle u,\cP^{-i}a\rangle\Big), \\
q_\ell(x) & = & \frac{1}{n}\ \sum_{i=0}^{n-1}\Big(\langle \cP^{-\ell}u,\cP^{i}b\rangle\langle u,\cP^{-i}b\rangle -\langle \cP^{-\ell}u,\cP^{i}c\rangle\langle u,\cP^{-i}a\rangle\Big).
\end{eqnarray*}
Now we directly compute (we write $\nabla_{u}q_\ell(x)$ for the first $n$ components of the gradients, and remember that the last $n$ components vanish: $\nabla_{v}q_\ell(x)=0$):
\begin{eqnarray*}
\nabla_{u} q_0(x) & = & \frac{1}{n}\ \sum_{i=0}^{n-1}\Big(\langle u,\cP^{-i}b\rangle \cP^{i}b+\langle u,\cP^{i}b\rangle \cP^{-i}b  -\langle u,\cP^{-i}a\rangle \cP^{i}c-\langle u,\cP^{i}c\rangle \cP^{-i}a\Big), \\
\nabla_{u} q_\ell(x) & = & \frac{1}{n}\ \sum_{i=0}^{n-1}\Big(\langle u,\cP^{-i}b\rangle \cP^{i+\ell}b +\langle u,\cP^{i+\ell}b\rangle \cP^{-i}b  -\langle u,\cP^{-i}a\rangle \cP^{i+\ell}c-\langle u,\cP^{i+\ell}c\rangle \cP^{-i}a\Big).\qquad
\end{eqnarray*}
Shifting in the second and the fourth sums on the right-hand side of the last equation index $i$ by $-\ell$, we find:
$$
\nabla_{u} q_\ell(x)= \sum_{i=0}^{n-1}\Big(\langle u,\cP^{-i}b\rangle \cP^{i+\ell}b +\langle u,\cP^{i}b\rangle \cP^{-i+\ell}b  -\langle u,\cP^{-i}a\rangle \cP^{i+\ell}c-\langle u,\cP^{i}c\rangle \cP^{-i+\ell}a\Big).
$$
Thus, we obtain $\nabla_{u} q_\ell(x)=\cP^{\ell}\nabla_{u} q_0(x)$, which, of course, yields  the required relation $\nabla q_\ell(x)=A^{\ell}\nabla q_0(x)$.
\end{proof}

\section{Associated vector fields}
\label{sect assoc}

\begin{definition} \label{def assoc}
Let the matrix 
$$
B=\alpha_0 I+\alpha_1A+\ldots +\alpha_{n-1} A^{n-1}
$$
satisfy 
$$
B^2=I.
$$
Then the vector field
$$
g(x) =J(x)B \nabla H_0(x)=B^{\rm T}J(x)\nabla H_0(x)=B^{\rm T} f_0(x)
$$
is called {\em associated} to the vector field $f_0(x)$. The vector field $g(x)$ is Hamiltonian,
$$
g(x) = J(x)\nabla K(x),
$$
with the Hamilton function
$$
K(x) =  \alpha_0 H_0(x)+\alpha_1 H_1(x)+\ldots+\alpha_{n-1} H_{n-1}(x).
$$
\end{definition}
This defines an equivalence relation on the set of vector fields $J(x)\nabla H(x)$ with the Hamilton functions $H(x)$ satisfying \eqref{d2H}. 

\begin{lemma} \label{lemma g'g'}
If vector field $g(x)$ is associated to $f_0(x)$ via the matrix $B$, then the following identities hold true:
\beq \label{f'f=g'g}
 g'(x)g(x)=f_0'(x)f_0(x),
\eeq
\beq \label{f'f'=g'g'}
(g'(x))^2=(f_0'(x))^2,
\eeq
and
\beq \label{Jd2HJd2K}
J(x)\nabla^2 H\cdot J(x)\nabla^2 K=J(x)\nabla^2 K\cdot J(x)\nabla^2 H,
\eeq
\beq \label{Jd2H2}
\big(J(x)\nabla^2 H\big)^2=\big(J(x)\nabla^2 K\big)^2,
\eeq
\beq \label{d2HJ2}
\big(\nabla^2 HJ(x)\big)^2=\big(\nabla^2 KJ(x)\big)^2.
\eeq
\end{lemma}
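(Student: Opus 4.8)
The plan is to derive everything from the single structural fact that $g(x) = B^{\rm T} f_0(x)$ where $B$ is a polynomial in $A$ with $B^2 = I$, together with the commutation identity $A^{\rm T} f_0'(x) = f_0'(x) A^{\rm T}$ established in \eqref{Af'}. First I observe that since $B$ is a polynomial in $A$, the matrix $B^{\rm T}$ is the same polynomial in $A^{\rm T}$, so \eqref{Af'} immediately upgrades to $B^{\rm T} f_0'(x) = f_0'(x) B^{\rm T}$ (just take the corresponding linear combination of the iterated identities $(A^{\rm T})^k f_0'(x) = f_0'(x)(A^{\rm T})^k$). Differentiating $g(x) = B^{\rm T} f_0(x)$ gives $g'(x) = B^{\rm T} f_0'(x)$. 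Then \eqref{f'f=g'g} follows by computing
\[
g'(x) g(x) = B^{\rm T} f_0'(x) B^{\rm T} f_0(x) = (B^{\rm T})^2 f_0'(x) f_0(x) = f_0'(x) f_0(x),
\]
using the commutation in the middle step and $(B^{\rm T})^2 = (B^2)^{\rm T} = I$ at the end; identity \eqref{f'f'=g'g'} is the same computation with the second factor $f_0(x)$ replaced by $f_0'(x)$, which again commutes with $B^{\rm T}$.

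For the remaining three identities I would pass from the vector field $f_0 = J(x)\nabla H_0$ to the ``principal part'' $J(x)\nabla^2 H$ of its Jacobian, using \eqref{f'}. The key point is that $\nabla H = B\nabla H_0$ means $\nabla^2 H = \nabla^2 H_0$ post-composed with $B$, more precisely $\nabla^2 K = B^{\rm T}\nabla^2 H$ (differentiate $\nabla K = B\nabla H$ and use symmetry of Hessians, so $\nabla^2 K = (B\nabla^2 H)^{\rm T} = \nabla^2 H \, B^{\rm T}$); equivalently one can write $J(x)\nabla^2 K = J(x)\nabla^2 H \, B^{\rm T}$. Combined with the fact that $B^{\rm T}$ commutes with $J(x)\nabla^2 H$ — which itself follows from \eqref{d2H} applied to powers of $A$, since $A(\nabla^2 H)A^{\rm T}$-type manipulations give $(J(x)\nabla^2 H)A^{\rm T} = J(x)A\nabla^2 H = A^{\rm T}J(x)\nabla^2 H$, hence $B^{\rm T}(J(x)\nabla^2 H) = (J(x)\nabla^2 H)B^{\rm T}$ — identities \eqref{Jd2HJd2K} and \eqref{Jd2H2} fall out exactly as \eqref{f'f=g'g} and \eqref{f'f'=g'g'} did: $J(x)\nabla^2 H \cdot J(x)\nabla^2 K = J(x)\nabla^2 H \cdot J(x)\nabla^2 H \, B^{\rm T}$ and likewise with the roles of $H,K$ swapped one gets $J(x)\nabla^2 K \cdot J(x)\nabla^2 H = B^{\rm T}(J(x)\nabla^2 H)^2$; these agree, and setting $K = H$ formally... more cleanly, \eqref{Jd2H2} is $(J(x)\nabla^2 H \, B^{\rm T})^2 = J(x)\nabla^2 H \, B^{\rm T} J(x)\nabla^2 H \, B^{\rm T} = (J(x)\nabla^2 H)^2 (B^{\rm T})^2 = (J(x)\nabla^2 H)^2$. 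Identity \eqref{d2HJ2} is the transpose-flavored analogue: transposing $J(x)\nabla^2 H$ and using $J^{\rm T} = -J$ together with \eqref{cond A}, one has $(\nabla^2 H J(x))^{\rm T}$-type relations, and the same commutation argument with $B$ acting on the other side yields it; alternatively apply Lemma~\ref{lemma matrix polynomial} and Lemma~\ref{lemma grads pq}, since \eqref{Jd2H2} shows $(J(x)\nabla^2 H)^2$ depends only on the equivalence class of $f_0$, and $(\nabla^2 H J(x))^2 = \nabla^2 H \cdot (J(x)\nabla^2 H) \cdot J(x)$ is conjugate to it.

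The main obstacle I anticipate is getting the bookkeeping of transposes and left/right multiplication by $B$ exactly right — in particular making sure that ``$B^{\rm T}$ commutes with $J(x)\nabla^2 H$'' is justified cleanly. The safest route is to reduce it to the single-step identity $(J(x)\nabla^2 H)A^{\rm T} = A^{\rm T}(J(x)\nabla^2 H)$, which is precisely \eqref{d2H} rewritten (multiply \eqref{d2H} on the left by $J(x)$ and use \eqref{cond A}: $J(x)A\nabla^2 H = J(x)\nabla^2 H A^{\rm T}$ and $J(x)A = A^{\rm T}J(x)$), then iterate and take linear combinations to get the statement for $B^{\rm T}$. Once that lemma-within-the-proof is in hand, all five displayed identities are one-line consequences of $(B^{\rm T})^2 = I$ and commutation, with no genuine computation required. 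I would also remark that \eqref{f'f=g'g} and \eqref{f'f'=g'g'} are in fact special cases of \eqref{Jd2HJd2K}–\eqref{Jd2H2} plus the structure \eqref{f'} of $f_0'$, so the proof can be organized to prove the Hessian-level identities first and then deduce the vector-field-level ones, or vice versa; I would present them in the order \eqref{f'f=g'g}, \eqref{f'f'=g'g'}, then \eqref{Jd2HJd2K}, \eqref{Jd2H2}, \eqref{d2HJ2} to match the statement.
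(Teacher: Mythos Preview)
Your proposal is correct and follows essentially the same approach as the paper: both arguments hinge on the two representations $J(x)\nabla^2 K = B^{\rm T}J(x)\nabla^2 H = J(x)\nabla^2 H\,B^{\rm T}$ (equivalently, that $B^{\rm T}$ commutes with $f_0'(x)$ and with $J(x)\nabla^2 H$), after which all five identities are one-line consequences of $(B^{\rm T})^2=I$. Your treatment of \eqref{d2HJ2} via the ``conjugate'' remark is not a proof on its own, but your first suggestion---running the same commutation argument with $B$ on the other side, i.e.\ $\nabla^2 K\,J(x)=B\,\nabla^2 H\,J(x)=\nabla^2 H\,J(x)\,B$---is exactly right and is what the paper (implicitly) intends.
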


\begin{proof}
We first check \eqref{f'f=g'g}:
\[
g'(x)g(x)=g'(x)B^{\rm T}f_0(x)=B^{\rm T}g'(x)f_0(x)=(B^{\rm T})^2 f_0'(x)f_0(x)= f_0'(x)f_0(x).
\]
For \eqref{f'f'=g'g'} everything is similar:
\[
g'(x)g'(x)=g'(x)B^{\rm T}f_0'(x)=B^{\rm T}g'(x)f_0'(x)=(B^{\rm T})^2 (f_0'(x))^2= (f_0'(x))^2.
\]
Further, we have, according to \eqref{cond A} and to \eqref{d2H}:
$$
J(x)\nabla^2 K=J(x)B\ \nabla^2 H = B^{\rm T}J(x)\nabla^2 H = J(x)\nabla^2 HB^{\rm T}.
$$
These two formulas for $J(x)\nabla^2 K$, together with $(B^{\rm T})^2=I$, yield \eqref{Jd2HJd2K} and \eqref{Jd2H2}.
\end{proof}

%


\begin{lemma}
In dimension $2n>4$, there exist $n$ linearly independent matrices
\begin{equation*}
B_j=\alpha_0^{(j)}I+\alpha_1^{(j)} A+ \ldots + \alpha_{n-1}^{(j)} A^{n-1}\quad (j=0,\ldots,n-1),
\end{equation*}
satisfying
\begin{equation}\label{B^2}
B_j^2=I,
\end{equation}
and such that 
\begin{equation}\label{B char pol}
\det(B_j-\lambda I)=(\lambda-1)^{2n-2}(\lambda+1)^2.
\end{equation}
They are related by
\begin{equation}\label{B sum}
\sum_{j=0}^{n-1} B_j=(n-2)I.
\end{equation}
\end{lemma}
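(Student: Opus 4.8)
The plan is to use the explicit form $A=\cP\oplus\cP$ to reduce the whole statement to linear algebra at the $n$-th roots of unity. Since $\cP$ is the cyclic shift, it is diagonalizable over $\mathbb C$ with the $n$ \emph{distinct} eigenvalues $\omega^k=e^{2\pi ik/n}$, $k=0,\dots,n-1$; hence $A$ is diagonalizable with the same eigenvalues, each now of multiplicity $2$. A matrix of the prescribed form $B=\alpha_0I+\alpha_1A+\dots+\alpha_{n-1}A^{n-1}=p(A)$ with $\deg p\le n-1$ equals $p(\cP)\oplus p(\cP)$, so its eigenvalues are $p(\omega^0),\dots,p(\omega^{n-1})$, each of multiplicity $2$, and consequently $\det(B-\lambda I)=\prod_{k=0}^{n-1}\bigl(\lambda-p(\omega^k)\bigr)^{2}$. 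Moreover the two assignments $p\mapsto\bigl(p(\omega^0),\dots,p(\omega^{n-1})\bigr)$ and $p\mapsto p(A)$ are linear isomorphisms from the $n$-dimensional space of polynomials of degree $\le n-1$ onto $\mathbb C^{n}$ (Lagrange interpolation at the roots of unity) and onto $\mathrm{span}\{I,A,\dots,A^{n-1}\}$ (because the minimal polynomial of $A$ is $\lambda^{n}-1$), respectively. This sets up a dictionary between matrices $B$ of the required form and value-vectors $\bigl(p(\omega^k)\bigr)_{k}\in\mathbb C^{n}$.

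Through this dictionary the two defining conditions become combinatorial. Condition $B^{2}=I$ is equivalent to $p(\omega^k)^{2}=1$ for all $k$, i.e. to $p(\omega^k)\in\{+1,-1\}$. Given this, write $a$ for the number of $k$ with $p(\omega^k)=+1$ and $b=n-a$ for the number with $p(\omega^k)=-1$; then $\det(B-\lambda I)=(\lambda-1)^{2a}(\lambda+1)^{2b}$, so the requirement \eqref{B char pol} forces $a=n-1$ and $b=1$, i.e. exactly one of the values $p(\omega^k)$ equals $-1$. For each $j\in\{0,\dots,n-1\}$ we therefore set $B_j:=p_j(A)$, where $p_j$ is the unique polynomial of degree $\le n-1$ with $p_j(\omega^k)=1-2\delta_{jk}$; these are $n$ distinct matrices (the dictionary being injective), and by construction each satisfies \eqref{B^2} and \eqref{B char pol}.

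It remains to verify linear independence and the identity \eqref{B sum}, and both become immediate once transported to the value-vectors $w_j:=\bigl(1-2\delta_{jk}\bigr)_{k=0}^{n-1}=\mathbf 1-2e_j\in\mathbb C^{n}$, where $\mathbf 1$ is the all-ones vector and $e_j$ the standard basis vector. For the sum, $\sum_{j=0}^{n-1}w_j=n\mathbf 1-2\mathbf 1=(n-2)\mathbf 1$, and $\mathbf 1$ is the value-vector of the constant polynomial $1$, hence of $I$; so $\sum_{j=0}^{n-1}B_j=(n-2)I$. For linear independence, suppose $\sum_j c_jw_j=0$; reading the $k$-th coordinate gives $\sum_j c_j-2c_k=0$, so $c_k=\tfrac12\sum_j c_j=:s/2$ for every $k$, and summing over $k$ gives $s=ns/2$, i.e. $(n-2)s=0$. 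Precisely here the hypothesis $2n>4$, that is $n>2$, enters: it forces $s=0$, hence $c_k=0$ for all $k$, so the $B_j$ are linearly independent. (For $n=2$ one gets $B_0=-B_1$, which is exactly why $2n=4$ must be excluded.)

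I do not anticipate a genuine obstacle here; the argument is a clean reduction to the Vandermonde/Lagrange isomorphism. The only points that need care are the book-keeping of eigenvalue multiplicities — each $n$-th root of unity is a \emph{double} eigenvalue of $A=\cP\oplus\cP$, which is what produces the square in \eqref{B char pol} and the factor $2$ in $1-2\delta_{jk}$ — and the (harmless for the statement) fact that the interpolating polynomials $p_j$, and hence the coefficients $\alpha_i^{(j)}$, are in general complex.
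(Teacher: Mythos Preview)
Your argument is correct and follows essentially the same route as the paper: diagonalize $A=\cP\oplus\cP$ at the $n$-th roots of unity, use Lagrange interpolation to build the polynomials $p_j$ with $p_j(\omega^k)=1-2\delta_{jk}$, and read off \eqref{B^2}, \eqref{B char pol}, \eqref{B sum} from the value-vectors. The one place you go beyond the paper is the explicit verification of linear independence via the system $c_k=s/2$, $(n-2)s=0$; the paper states linear independence but does not argue it, and your computation is exactly where the hypothesis $2n>4$ is genuinely used.
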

\begin{proof} The matrix $A$ from \eqref{A final} has $n$ eigenvalues $\lambda_0=1$, $\lambda_1=\omega$, $\ldots,$ $\lambda_{n-1}=\omega^{n-1}$ with $\omega=\exp(2\pi i/n)$, all of the multiplicity 2, with two linearly independent eigenvectors. So, $A$ is diagonalizable and its characteristic polynomial is of the form 
$$
\det(\lambda I-A)=(\lambda-1)^2(\lambda-\omega)^2\cdots(\lambda-\omega^{n-1})^2=(\lambda^n-1)^2.
$$
We construct (according to the Lagrange interpolating formula) $n$ polynomials of degree $n-1$,
$$
B_j(\lambda)=\alpha_0^{(j)}+\alpha_1^{(j)}\lambda+\ldots+\alpha_{n-1}^{(j)}\lambda^{n-1} \quad (j=0,\ldots,n-1)
$$
such that
\begin{equation}\label{B constr}
B_j(\lambda_j)=-1 \quad {\rm and} \quad B_j(\lambda_k)=1 \quad {\rm for} \quad k\neq j.
\end{equation}
Thus, each matrix $B_j=B_j(A)$ has $n-1$ double eigenvalues equal to 1 and one double eigenvalue equal to $-1$, so that \eqref{B char pol} is satisfied. As a corollary, each matrix $B_j^2$ has all $2n$ eigenvalues equal to 1, which, together with diagonalizabilty, yields \eqref{B^2}. Finally, from \eqref{B constr} there follows that the degree $n-1$ polynomial $B_0(\lambda)+\ldots+B_{n-1}(\lambda)$ takes the value $n-2$ at the $n$ points $\lambda_0$, $\ldots,$ $\lambda_{n-1}$, therefore it is identically equal to $n-2$. This yields \eqref{B sum}. Actually, the matrices $B_j$ can be easily computed explicitly. The $n\times n$ diagonal blocks of $B_j$ are given by
$$
I-\frac{2}{n}\begin{pmatrix}   1 & \omega^{j} &  \omega^{2j}  & \ldots & \omega^{(n-1)j}  \\
                                             \omega^{(n-1)j} & 1 & \omega^{j} & \ldots & \omega^{(n-2)j} \\
                                             \omega^{(n-2)j} & \omega^{(n-1)j} & 1 & \ldots & \omega^{(n-3)j} \\
                                             \ldots & \ldots & \ldots & \ldots & \ldots  \\
                                             \omega^{j} & \omega^{2j} & \omega^{3j} & \ldots &  1 
                  \end{pmatrix}, \quad \omega=\exp(2\pi i/n),
$$
so that $\alpha_0^{(j)}=1-\frac{2}{n}$ and $\alpha_k^{(j)}=-\frac{2}{n}\omega^{kj}$.
\end{proof}

Our main results are the following: for two associated vector fields $f$ and $g$ the Kahan maps $\Phi_f$ and $\Phi_g$ commute (Theorem \ref{th commute}), share $n$ independent integrals of motion (Theorem \ref{th integrals}), and share an invariant symplectic structure (Theorem \ref{th Poisson}).

\section{Commutativity of maps}
\label{sect commut}

\begin{theorem}\label{th commute}
Let $f(x)=J(x)\nabla H(x)$ and $g(x)=J(x)\nabla K(x)$ be two associated vector fields, via the matrix $B$. Then the maps
\bea
&& \Phi_{f}: x \mapsto \t x = 
\left( I - \ep f'(x) \right)^{-1} x =
\left( I + \ep f'(\t x) \right) x, \label{eq: Phi1} \\
&& \Phi_{g}: x \mapsto \widehat x = 
\left( I - \ep g'(x) \right)^{-1} x =
\left( I + \ep g'(\widehat x) \right) x, \label{eq: Phi2} 
\eea
commute:  $\Phi_{f} \circ \Phi_{g}=\Phi_{g} \circ \Phi_{f}$.
\end{theorem}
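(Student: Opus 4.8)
My plan is to pass to coordinates adapted to the involution $B^{\rm T}$: there the homogeneous quadratic field $f$ splits as a direct sum, $g$ differs from $f$ only by a sign on one of the summands, and commutativity of $\Phi_f$ and $\Phi_g$ then follows at once from the facts that Kahan's discretization respects direct sums and is reversible. For the splitting: since $f=J(x)\nabla H(x)$ with $H$ a homogeneous quadratic form satisfying \eqref{d2H}, identity \eqref{Af'} (with $f$ in the role of $f_0$) gives $A^{\rm T}f'(x)=f'(x)A^{\rm T}$ for all $x$. Transposing $B=\sum_{i=0}^{n-1}\alpha_iA^i$ yields $B^{\rm T}=\sum_{i=0}^{n-1}\alpha_i(A^{\rm T})^i$, a polynomial in $A^{\rm T}$, so $B^{\rm T}$ commutes with $f'(x)$ for every $x$. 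As $B$ is a polynomial in the diagonalizable matrix $A$ and $B^2=I$, the involution $B^{\rm T}$ is diagonalizable, so $\bbR^{2n}=V_+\oplus V_-$ with $V_\pm=\ker(B^{\rm T}\mp I)$ (over ${\mathbb C}$ when $B$ is complex, as for the $B_j$). Since each $f'(x)$ commutes with $B^{\rm T}$, it preserves $V_+$ and $V_-$; writing $x=(x^+,x^-)$ along this splitting, the off-diagonal blocks of the Jacobian of $f$ vanish identically, so $f^+$ depends only on $x^+$ and $f^-$ only on $x^-$. Thus $f$ decouples, $f(x)=\big(f^+(x^+),\,f^-(x^-)\big)$ with $f^\pm$ homogeneous quadratic, while $g(x)=B^{\rm T}f(x)=\big(f^+(x^+),\,-f^-(x^-)\big)$. (Equivalently, $f'(B^{\rm T}x)=B^{\rm T}f'(x)$ by symmetry of $f'$ and its commutation with $B^{\rm T}$, whence $f(B^{\rm T}x)=\tfrac12 f'(B^{\rm T}x)B^{\rm T}x=\tfrac12(B^{\rm T})^2f'(x)x=f(x)$, i.e. $g$ is the push-forward of $f$ by the linear involution $B^{\rm T}$.)

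\textbf{Commutativity.} Kahan's discretization is compatible with direct sums of quadratic vector fields (equation \eqref{eq: Kahan gen} decouples), so $\Phi$ of a direct sum is the Cartesian product of the $\Phi$'s of the summands; since commutativity is coordinate-free, working in the adapted coordinates costs nothing. Hence
\[
\Phi_f=\Phi_{f^+}\times\Phi_{f^-},\qquad
\Phi_g=\Phi_{f^+}\times\Phi_{-f^-}=\Phi_{f^+}\times\Phi_{f^-}^{-1},
\]
the last equality by the reversibility \eqref{eq: reversible}, which gives $\Phi_{-f^-}(\cdot,\epsilon)=\Phi_{f^-}(\cdot,-\epsilon)=\Phi_{f^-}^{-1}(\cdot,\epsilon)$. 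Therefore
\[
\Phi_f\circ\Phi_g=\Phi_{f^+}^{2}\times\big(\Phi_{f^-}\circ\Phi_{f^-}^{-1}\big)
=\Phi_{f^+}^{2}\times\mathrm{id}
=\Phi_{f^+}^{2}\times\big(\Phi_{f^-}^{-1}\circ\Phi_{f^-}\big)
=\Phi_g\circ\Phi_f .
\]
When $f$ and $g$ are real, $\Phi_f$ and $\Phi_g$ are real rational maps, so this identity, valid over ${\mathbb C}$, holds over $\bbR$ as well.

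\textbf{Expected main obstacle.} The content is concentrated entirely in the first step: the insight that commutation of $f'(x)$ with the \emph{fixed} matrix $B^{\rm T}$ — which \eqref{Af'} supplies for free — forces the homogeneous quadratic field $f$ to break into a direct sum, so that $g$ is nothing but $f$ run backwards in time on the $(-1)$-eigenspace of $B^{\rm T}$. Once this is in place, commutativity is immediate (one even gets the sharper statement that $\Phi_f\circ\Phi_g$ is the identity on $V_-$ and equals $\Phi_{f^+}^{2}$ on $V_+$). A purely computational alternative — expanding $\Phi_g\circ\Phi_f$ and $\Phi_f\circ\Phi_g$ from \eqref{eq: Phi1}--\eqref{eq: Phi2} and matching terms via the pointwise identities \eqref{f'f=g'g} and \eqref{f'f'=g'g'} — is feasible but must cope with the nested arguments $f'(\Phi_f(x))$ and $g'(\Phi_g(\Phi_f(x)))$, and is the route I would rather avoid.
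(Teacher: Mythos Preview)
Your proof is correct and takes a genuinely different route from the paper's. The paper argues computationally: it writes out $\Phi_g\circ\Phi_f$ and $\Phi_f\circ\Phi_g$ via \eqref{eq: hat tilde}--\eqref{eq: tilde hat}, upgrades commutativity to the matrix identity \eqref{det}, and then reduces that identity step by step using $(g')^2=(f')^2$, $g'=B^{\rm T}f'$, and the auxiliary Lemma~\ref{lemma g''} on $g''$, finally invoking \eqref{vf comm} and \eqref{f'f=g'g}. Your argument bypasses all of this by observing that commutation of $f'(x)$ with the fixed involution $B^{\rm T}$ forces $f$ to split as a direct sum along $V_+\oplus V_-$, so that $g$ is simply $f$ with time reversed on $V_-$; Kahan's compatibility with direct sums and reversibility then make commutativity immediate. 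Your approach is more conceptual, yields the sharper statement $\Phi_f\circ\Phi_g=\Phi_{f^+}^2\times{\rm id}$, and in fact also recovers the paper's matrix identity \eqref{det} (in adapted coordinates each block is either trivially $I$ on $V_-$ or literally the same expression on $V_+$ since $\t x^+=\widehat x^+$). The paper's route, by contrast, stays in the original coordinates and produces reusable pointwise identities like \eqref{f'f'=g'g'} and Lemma~\ref{lemma g''} that feed into the later sections on integrals and symplectic structure.
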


\begin{proof}
We have:
\begin{equation}\label{eq: hat tilde}
\left(\Phi_{g} \circ \Phi_{f}\right)(x)=\left( I - \ep g'(\t x) \right)^{-1} \left( I + \ep f'(\t x) \right) x,
\end{equation}
and
\begin{equation}\label{eq: tilde hat}
\left(\Phi_{f} \circ \Phi_{g}\right)(x)=\left( I - \ep f'(\widehat x) \right)^{-1} \left( I + \ep g'(\widehat x) \right) x.
\end{equation}
We prove the following matrix equation:
\beq \label{det}
\left( I - \ep g'(\t x) \right)^{-1} \left( I + \ep f'(\t x) \right) =\left( I - \ep f'(\widehat x) \right)^{-1} \left( I + \ep g'(\widehat x) \right),
\eeq
which is stronger than the vector equation $\left(\Phi_{f}\circ\Phi_{g}\right)(x)=\left(\Phi_{g}\circ\Phi_{f}\right)(x)$ expressing commutativity. Equation \eqref{det} is equivalent to
\beq \label{det 1}
\left( I - \ep f'(\widehat x) \right) \left( I - \ep g'(\t x) \right)^{-1} =
  \left( I + \ep g'(\widehat x) \right)\left( I + \ep f'(\t x) \right)^{-1}.
\eeq
From \eqref{f'f'=g'g'} we find:
\[
\left( I - \ep g'(\t x) \right)^{-1}=\left( I + \ep g'(\t x) \right)\left( I - \ep^2 (f'(\t x))^2 \right)^{-1},
\]
\[
\left( I + \ep f'(\t x) \right)^{-1}=\left( I - \ep f'(\t x) \right)\left( I - \ep^2(f'(\t x))^2 \right)^{-1}.
\]
With this at hand, equation \eqref{det 1} is equivalent to
\[
\left( I - \ep f'(\widehat x) \right)\left( I + \ep g'(\t x) \right)=
\left( I + \ep g'(\widehat x) \right)\left( I - \ep f'(\t x) \right).
\]
Here the quadratic in $\epsilon$ terms cancel by virtue of \eqref{f' vs g'} and \eqref{Af'}:
\[
f'(\widehat x)g'(\t x) = f'(\widehat x)B^{\rm T}f'(\t x) =  B^{\rm T}f'(\widehat x)f'(\t x)  =  g'(\widehat x)f'(\t x),
\]
so that we are left with the terms linear in $\epsilon$:
\beq \label{aux}
-  f'(\widehat x)+ g'(\t x)= g'(\widehat x)- f'(\t x).
\eeq
Since the tensors $f''$, $g''$ are constant, we have:
\[
f'(\widehat x)=f'(x)+f''(\widehat x-x)=f'(x)+2\ep f''\left(I-\ep g'(x)\right)^{-1}g(x),
\]
\[
g'(\widehat x)=g'(x)+g''(\widehat x-x)=g'(x)+2\ep g''\left(I-\ep g'(x)\right)^{-1}g(x),
\]
\[
f'(\t x)=f'(x)+f''(\t x-x)=f'(x)+2\ep f''\left(I-\ep f'(x)\right)^{-1}f(x),
\]
\[
g'(\t x)=g'(x)+g''(\t x-x)=g'(x)+2\ep g''\left(I-\ep f'(x)\right)^{-1}f(x).
\]
Thus, equation \eqref{aux} is equivalent to 
\begin{align} \label{aux2}
& f''\left(I-\ep g'(x)\right)^{-1}g(x)+g''\left(I-\ep g'(x)\right)^{-1}g(x) = \nonumber\\
& \qquad f''\left(I-\ep f'(x)\right)^{-1}f(x)+g''\left(I-\ep f'(x)\right)^{-1}f(x).
\end{align}
At this point, we use the following statement.

\begin{lemma} \label{lemma g''}
For any vector $v\in\mathbb C^{2n}$ we have:
\beq \label{g''}
 g''(x)v=f''(x)(B^{\rm T}v), \quad  f''(x)v=g''(x)(B^{\rm T}v).
\eeq
\end{lemma}

\noindent
We compute the matrices on the left-hand side of \eqref{aux2} with the help of \eqref{g''}, \eqref{f vs g}, \eqref{f' vs g'}:
\begin{eqnarray*}
f''\left(I-\ep g'(x)\right)^{-1}g(x) & = & f''\left(I-\ep^2(f'(x))^2\right)^{-1}\left(g(x)+\ep g'(x)g(x)\right),\\
g''\left(I-\ep g'(x)\right)^{-1}g(x) & = & f''\left(I-\ep^2(f'(x))^2\right)^{-1}B^{\rm T}\left(g(x)+\ep g'(x)g(x)\right)\\
                                                              & = & f''\left(I-\ep^2 (f'(x))^2\right)^{-1}\left(f(x)+\ep f'(x)g(x)\right),
\end{eqnarray*}
and similarly    
\begin{eqnarray*}
f''\left(I-\ep f'(x)\right)^{-1}f(x) & = & f''\left(I-\ep^2 (f'(x))^2\right)^{-1}\left(f(x)+\ep f'(x)f(x)\right)\\
g''\left(I-\ep f'(x)\right)^{-1}f(x) & = &f''\left(I-\ep^2 (f'(x))^2\right)^{-1}B^{\rm T}\left(f(x)+\ep f'(x)f(x)\right)\\
                                                              & = & f''\left(I-\ep^2 (f'(x))^2\right)^{-1}\left(g(x)+\ep g'(x)f(x)\right).
\end{eqnarray*}                                                          
Collecting all the results and using \eqref{vf comm} and \eqref{f'f=g'g}, we see that the proof is complete. 
\end{proof}

{\em Proof of Lemma \ref{lemma g''}.} The identities in question are equivalent to
\beq \label{Af''}
 B^{\rm T}(f''(x)v)=f''(x)(B^{\rm T}v), \quad  B^{\rm T}(g''(x)v)=g''(x)(B^{\rm T}v).
\eeq
(Actually, both tensors $f''$ and $g''$ are constant, i.e., do not depend on $x$.) To prove the latter identities, we start with equation \eqref{Af'} written in components:
\[
\sum_k (B^{\rm T})_{ik}\frac{\partial f_k}{\partial x_\ell}=\sum_k \frac{\partial f_i}{\partial x_k}(B^{\rm T})_{k\ell}.
\]
Differentiating with respect to $x_j$, we get:
\[
\sum_k (B^{\rm T})_{ik}\frac{\partial^2 f_k}{\partial x_j\partial x_\ell}=\sum_k \frac{\partial f_i}{\partial x_j\partial x_k}(B^{\rm T})_{k\ell}.
\]
Hence,
\[
\sum_{k,\ell} (B^{\rm T})_{ik}\frac{\partial^2 f_k}{\partial x_j\partial x_\ell}v_\ell=\sum_{k,\ell} \frac{\partial f_i}{\partial x_j\partial x_k}(B^{\rm T})_{k\ell}v_\ell,
\]
which is nothing but the $(i,j)$ entry of the matrix identity \eqref{Af''}. \qed

\section{Integrals of motion}
\label{sect integrals}

\begin{theorem} \label{th integrals}
Let $f(x)=J(x)\nabla H(x)$ and $g(x)=J(x)\nabla K(x)$ be two associated vector fields, via the matrix $B$. 
Then the maps $\Phi_{f}$ and $\Phi_{g}$ share two functionally independent conserved quantities 
\begin{equation}\label{tH}
\t H(x,\epsilon)=h(x,\Phi_f(x,\epsilon), \epsilon),
\end{equation}
and 
\begin{equation}\label{tK}
\t K(x,\epsilon)=k(x, \Phi_g(x,\epsilon),\epsilon),
\end{equation}
where
\begin{equation}\label{h}
h(x,\t x, \epsilon)=(2\epsilon)^{-1}x^{\rm T}J^{-1}\Big( \frac{x+\t x}{2}\Big) \t x,
\end{equation}
and 
\begin{equation}\label{k}
k(x, \widehat x,\epsilon)=(2\epsilon)^{-1} x^{\rm T}
J^{-1} \Big(\frac{x+\widehat x}{2}\Big)\widehat x.
\end{equation}
\end{theorem}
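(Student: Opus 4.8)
The plan is to first reduce all four required invariances to a single one. Since formulas~\eqref{h} and~\eqref{k} coincide as bilinear expressions I write $k=h$; I must then show that $\t H=h(\cdot,\Phi_f(\cdot,\ep),\ep)$ is invariant under both $\Phi_f$ and $\Phi_g$, and likewise $\t K=h(\cdot,\Phi_g(\cdot,\ep),\ep)$. Because $B^2=I$ forces $B=B^{-1}$, the field $f$ is associated to $g$ via the same matrix $B$ and $\nabla H=B\nabla K$; hence every statement about $\t K$ is obtained from one about $\t H$ by interchanging $f$ and $g$. So it suffices to prove the single claim: \emph{for every $g$ associated to $f$ via a matrix $B$ (the degenerate case $B=I$, $g=f$ included), one has $\t H\circ\Phi_g=\t H$}, together with functional independence of $\t H$ and $\t K$. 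Two preliminary remarks about $h$: since $J(x)$, hence $J^{-1}(x)$, is skew-symmetric, with $M=\tfrac{1}{2}(x+\t x)$ and $\delta=\tfrac{1}{2}(\t x-x)$ one gets $h(x,\t x,\ep)=\tfrac{1}{\ep}M^{\rm T}J^{-1}(M)\delta=\tfrac{1}{2\ep}x^{\rm T}J^{-1}(M)(\t x-x)$ and the reflection identity $h(a,b,\ep)=h(b,a,-\ep)$; combined with the reversibility $\Phi_f(\cdot,\ep)^{-1}=\Phi_f(\cdot,-\ep)$, the latter yields $\t H(\Phi_f(x,\ep),-\ep)=\t H(x,\ep)$, so that once $\Phi_f$-invariance of $\t H$ is known, $\t H$ is automatically even in $\ep$.

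The core of the proof is a direct computation on a commuting square. By Theorem~\ref{th commute} the diagram with vertices $x$, $\t x=\Phi_f(x)$, $\widehat x=\Phi_g(x)$, $\widehat{\t x}=\Phi_f(\widehat x)=\Phi_g(\t x)$ closes up, and the claim $\t H\circ\Phi_g=\t H$ is exactly $h(\widehat x,\widehat{\t x},\ep)=h(x,\t x,\ep)$ (for $B=I$ this degenerates to $h(\t x,\t{\t x},\ep)=h(x,\t x,\ep)$, i.e.\ to $\Phi_f$-invariance of $\t H$). On each edge I use the symmetric Hirota--Kimura form of the Kahan equation, which follows from~\eqref{Phi x-x} by linearity of $J$: e.g.\ $\t x-x=\ep\big(J(x)\nabla H(\t x)+J(\t x)\nabla H(x)\big)$ on an $f$-edge and $\widehat x-x=\ep\big(J(x)\nabla K(\widehat x)+J(\widehat x)\nabla K(x)\big)$ on a $g$-edge, together with $\nabla^2K=B\,\nabla^2H=\nabla^2H\,B^{\rm T}$ (the second equality from~\eqref{d2H}) and the consequence $J(x)\nabla^2K=B^{\rm T}J(x)\nabla^2H=J(x)\nabla^2H\,B^{\rm T}$. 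Substituting the Kahan relation into $h(x,\t x,\ep)=\tfrac{1}{2\ep}x^{\rm T}J^{-1}(M)(\t x-x)$ and simplifying with $J^{-1}(M)\big(J(x)+J(\t x)\big)=2I$, and doing the same for $h(\widehat x,\widehat{\t x},\ep)$, the difference $h(\widehat x,\widehat{\t x},\ep)-h(x,\t x,\ep)$ becomes a sum in which every $g$-object is traded for the matching $f$-object: $\nabla K=B\nabla H$, $g'(x)=B^{\rm T}f'(x)=f'(x)B^{\rm T}$ (by~\eqref{Af'} extended to polynomials of $A$), $g''(x)v=f''(x)(B^{\rm T}v)$ by~\eqref{g''}, and the quadratic identities $(f'(x))^2=(g'(x))^2$, $f'(x)f(x)=g'(x)g(x)$, $\big(J(x)\nabla^2H\big)^2=\big(J(x)\nabla^2K\big)^2$ of Lemma~\ref{lemma g'g'}; since $B^2=I$, the factors $B^{\rm T}$ cancel in pairs and the difference vanishes.

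I expect this bookkeeping to be the main obstacle: a priori $h(\widehat x,\widehat{\t x},\ep)-h(x,\t x,\ep)$ involves $J^{-1}$ evaluated at the four \emph{distinct} edge-midpoints, so the skew-symmetry manipulations and the Kahan substitutions must be sequenced so that each $g$-quantity ($\nabla K$, $g'$, $g''$, $J(x)\nabla^2K$) is replaced by $B^{\rm T}$ times the matching $f$-quantity \emph{before} any midpoint is pinned down, after which $B^2=I$ finishes the job. There is no softer route, since the only formal consequence of $\Phi_f$-invariance together with $\Phi_f\circ\Phi_g=\Phi_g\circ\Phi_f$ is that $\t H\circ\Phi_g$ is again a $\Phi_f$-integral, which does not force $\t H\circ\Phi_g=\t H$. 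Finally, functional independence of $\t H$ and $\t K$ follows by specialization at $\ep=0$: an $O(\ep)$ expansion of~\eqref{h} gives $\t H(x,0)=2H(x)$ (using $x^{\rm T}J^{-1}(x)J(x)\nabla H(x)=x^{\rm T}\nabla H(x)=2H(x)$) and likewise $\t K(x,0)=2K(x)$, and for generic $B$ the quadratic forms $H$ and $K$ are functionally independent; hence $\t H$ and $\t K$ are independent for $\ep$ near $0$, and therefore generically.
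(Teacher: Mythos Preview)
Your reduction by symmetry is fine, and the reflection identity $h(a,b,\ep)=h(b,a,-\ep)$ together with reversibility is correct. But the heart of the argument---the ``bookkeeping'' paragraph---is where the proposal stalls, and it is precisely the place where the paper supplies the missing idea that you do not.

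The paper does \emph{not} compare $h(x,\t x,\ep)$ with $h(\widehat x,\widehat{\t x},\ep)$ on a commuting square. Instead it first eliminates $\t x$ altogether and obtains a closed rational formula in $x$ alone:
\[
\t H(x,\ep)=x^{\rm T}\big(I-\ep\,\nabla^2H\,J(x)\big)^{-1}\nabla H(x)
          =x^{\rm T}\big(I-\ep^2(\nabla^2H\,J(x))^2\big)^{-1}(\nabla^2H)\,x.
\]
The key step that makes this possible is Lemma~\ref{lemma comm}, $J(x)\nabla^2H\,J(y)=J(y)\nabla^2H\,J(x)$ (a cyclic-Hankel identity), which lets one commute $(I-\ep J(x)\nabla^2H)^{-1}$ past $J\big(\tfrac{x+\t x}{2}\big)$ after substituting \eqref{Phi x-x}. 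From the second closed form the evenness in $\ep$ is immediate, and with it the $\Phi_f$-invariance of $\t H$. The cross-invariance (that $\t K$ is preserved by $\Phi_f$) is then a clean matrix computation: substitute $\t x=(I-\ep J(x)\nabla^2H)^{-1}(I+\ep J(\t x)\nabla^2H)x$ into the closed form for $\t K(\t x,\ep)$, pull factors across using Lemma~\ref{lemma comm} and $\nabla^2K=B\,\nabla^2H$, and finish with \eqref{d2HJ2}.

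Your route never reaches such a closed form, and that is the gap. The simplification you invoke, $J^{-1}(M)\big(J(x)+J(\t x)\big)=2I$, does not apply to the substituted expression: after inserting $\t x-x=\ep\big(J(x)\nabla H(\t x)+J(\t x)\nabla H(x)\big)$ you get $\tfrac12 x^{\rm T}J^{-1}(M)\big(J(x)\nabla H(\t x)+J(\t x)\nabla H(x)\big)$, which is \emph{not} of the form $J^{-1}(M)(J(x)+J(\t x))v$. So the midpoint $M$ does not cancel, and you are left comparing two expressions carrying $J^{-1}$ at the \emph{different} midpoints $\tfrac{x+\t x}{2}$ and $\tfrac{\widehat x+\widehat{\t x}}{2}$. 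You correctly flag this as the main obstacle, but the plan to ``replace each $g$-quantity by $B^{\rm T}$ times the matching $f$-quantity before any midpoint is pinned down'' is not an algorithm: the midpoints are already present in $h$, and nothing in Lemma~\ref{lemma g'g'} or \eqref{g''} relates $J^{-1}$ at one midpoint to $J^{-1}$ at the other. Without Lemma~\ref{lemma comm} (or an equivalent device) I do not see how the four-point comparison closes. The fix is exactly what the paper does: use \eqref{Phi x-x} and Lemma~\ref{lemma comm} to kill the midpoint and work with the closed rational forms of $\t H$ and $\t K$.
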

\begin{proof}
First, we show that $\t H(x,\epsilon)$ is an integral of motion for the map $\Phi_f$. We start with giving several equivalent formulas for $\t H(x,\epsilon)$. Upon using the skew-symmetry of $J(x)$ and formula \eqref{Phi x-x}, we can rewrite \eqref{tH} as
\begin{eqnarray*}
\t H(x,\epsilon) & = & (2\epsilon)^{-1}x^{\rm T}J^{-1}\Big( \frac{x+\t x}{2}\Big) (\t x-x) \nonumber \\
 & = &  x^{\rm T}J^{-1}\Big( \frac{x+\t x}{2}\Big)\Big(I-\epsilon J(x)\nabla^2 H \Big)^{-1} J\Big( \frac{x+\t x}{2}\Big) \nabla H(x).
\end{eqnarray*}
\begin{lemma} \label{lemma comm}
For any $x,y\in\bbR^n$, we have:
\beq
J(x)\nabla^2 H \ J(y)=J(y)\nabla^2 H \ J(x).
\eeq
\end{lemma}
\begin{proof}  With notation \eqref{P full}, \eqref{cH}, we have to prove:
$$
X{\mathcal H}_kY=Y{\mathcal H}_k X, \quad k=1,2,3.
$$
But this follows directly from Lemma \ref{lemma ABC}.
\end{proof}

From this lemma, there follows:
$$
\Big(I-\epsilon J(x)\nabla^2 H \Big)^{-1} J\Big( \frac{x+\t x}{2}\Big)=J\Big( \frac{x+\t x}{2}\Big)\Big(I-\epsilon \nabla^2 H\ J(x) \Big)^{-1} ,
$$
and 
\begin{equation*}
\t H(x,\epsilon)=x^{\rm T}\Big(I-\epsilon \nabla^2 H\ J(x) \Big)^{-1}\nabla H(x).
\end{equation*}
Expanding into a power series in $\epsilon$, we find:
\begin{eqnarray*}
\t H(x,\epsilon) & = &  x^{\rm T}\sum_{k=0}^\infty \epsilon^{k}\Big(\nabla^2 H \ J(x)\cdots \nabla^2 H\ J(x)\Big) \nabla H(x) \\
  & = & x^{\rm T}\sum_{k=0}^\infty \epsilon^{k}\Big(\nabla^2 H \ J(x)\cdots \nabla^2 H\ J(x)\nabla^2 H\Big) x.
\end{eqnarray*}
The matrix in the parentheses involves $k+1$ times $\nabla^2 H$ and $k$ times $J(x)$, therefore it is symmetric if $k$ is even, and skew-symmetric if $k$ is odd. Therefore, all terms with odd $k$ vanish. We have the following equivalent expressions:
\begin{eqnarray}
\t H(x,\epsilon) & = & x^{\rm T} (\nabla^2 H)\Big(I-\epsilon J(x) \nabla^2 H\Big)^{-1} x     \label{tH 1}\\
& = & x^{\rm T}(\nabla^2) H \Big(I-\epsilon^2 \big(J(x) \nabla^2 H\big)^2\Big)^{-1} x        \label{tH 2}\\
& = & x^{\rm T} \Big(I-\epsilon \nabla^2 H\ J(x) \Big)^{-1} (\nabla^2 H) x                           \label{tH 3}\\
& = & x^{\rm T} \Big(I-\epsilon^2 \big(\nabla^2 H\ J(x)\big)^2\Big)^{-1} (\nabla^2 H) x.      \label{tH 4}
\end{eqnarray}
Moreover, in \eqref{tH 1} and \eqref{tH 3} one can replace $\epsilon$ by $-\epsilon$.

The fact that $\t H(x,\epsilon)$ is an even function of $\epsilon$ ensures that it is an integral of $\Phi_f$. Indeed, by virtue of \eqref{eq: reversible} 
we have: 
$$
\t H(x,-\epsilon)=h(x,\Phi_f(x,-\epsilon), -\epsilon)=h(x,\Phi_f^{-1}(x,\epsilon),-\epsilon)=h(\Phi_f^{-1}(x,\epsilon),x,\epsilon). 
$$
The last equality follows from the property of the function $h$,
$$
h(x,y,\epsilon)=h(y,x,-\epsilon),
$$
which follows from the definition \eqref{h} by the skew-symmetry of the matrix $J$. Thus, if $\t H(x,\epsilon)=\t H(x,-\epsilon)$, then
$$
\t H(x,\epsilon)=\t H(\Phi_f^{-1}(x,\epsilon),\epsilon),
$$
which proves the claim. 

Next, we show that $\t K(x,\epsilon)$ also is an integral of motion for the map $\Phi_f$. For this goal, we first compute, based on \eqref{tH 4}:
$$
\t K(\t x,\epsilon) =  \t x^{\rm T}\Big(I-\ep^2 \big(\nabla^2 K\ J(\t x)\big)^2\Big)^{-1}(\nabla^2 K)\t x .
$$
By virtue of \eqref{Phi x} we have:
\begin{eqnarray*}
\t K(\t x,\epsilon) & = &  x^{\rm T}\Big(I+\epsilon J(\t x)\nabla^2 H\Big)^{\rm T}\Big(I-\epsilon J(x)\nabla^2 H\Big)^{\rm -T}
                                      \Big(I-\ep^2 \big(\nabla^2 K\ J(\t x)\big)^2\Big)^{-1}\times \\
                           &    & \times (\nabla^2 K)\Big(I-\epsilon J(x)\nabla^2 H\Big)^{-1} \Big(I+\epsilon J(\t x)\nabla^2 H\Big)x .
\end{eqnarray*}
By Lemma \ref{lemma comm} we have:
\begin{eqnarray*}
\t K(\t x,\epsilon) & = &  x^{\rm T}\Big(I+\epsilon\nabla^2 H\  J(x)\Big)^{-1}\Big(I-\epsilon \nabla^2 H\ J(\t x)\Big)
                                      \Big(I-\ep^2 \big(\nabla^2 K\ J(\t x)\big)^2\Big)^{-1}\times \\
                           &    & \times (\nabla^2 K) \Big(I+\epsilon J(\t x)\nabla^2 H\Big)\Big(I-\epsilon J(x)\nabla^2 H\Big)^{-1}x .
\end{eqnarray*}
Next, we find:
\begin{eqnarray*}
(\nabla^2 K) \Big(I+\epsilon J(\t x)\nabla^2 H\Big) & = & B(\nabla^2 H) \Big(I+\epsilon J(\t x)\nabla^2 H\Big)\\
          & = & B\Big(I+\epsilon \nabla^2 H\ J(\t x)\Big)(\nabla^2 H)\\
          & = & \Big(I+\epsilon \nabla^2 H\ J(\t x)\Big)B(\nabla^2 H)\\
          & = & \Big(I+\epsilon \nabla^2 H\ J(\t x)\Big)(\nabla^2 K) .
\end{eqnarray*}
Here, the last but one equality is justified as follows: 
$$
B\nabla^2 H\ J(\t x)=\nabla^2 H\ B^{\rm T}J(\t x)=\nabla^2 H\ J(\t x)B.
$$
Similarly, we find:
$$
(\nabla^2 K)\Big(I-\epsilon J(x)\nabla^2 H\Big)^{-1}=\Big(I-\epsilon \nabla^2 H\ J(x)\Big)^{-1}(\nabla^2 K).
$$
Collecting all the results, we have:
\begin{eqnarray*}
\t K(\t x,\epsilon) & = &  x^{\rm T}\Big(I+\epsilon\nabla^2 H\  J(x)\Big)^{-1}\Big(I-\epsilon \nabla^2 H\ J(\t x)\Big)
                                      \Big(I-\ep^2 \big(\nabla^2 K\ J(\t x)\big)^2\Big)^{-1}\times \\
                           &    & \times  \Big(I+\epsilon \nabla^2 H\ J(\t x)\Big)\Big(I-\epsilon \nabla^2 H\ J(x)\Big)^{-1}(\nabla^2 K)x.
\end{eqnarray*}                           
Applying equation \eqref{d2HJ2}  twice, we find:                    
\begin{eqnarray*}
\t K(\t x,\epsilon) & = &  x^{\rm T}\Big(I+\epsilon\nabla^2 H\  J(x)\Big)^{-1}\Big(I-\epsilon \nabla^2 H\ J(\t x)\Big)
                                      \Big(I-\ep^2 \big(\nabla^2 H\ J(\t x)\big)^2\Big)^{-1}\times \\
                           &    & \times  \Big(I+\epsilon \nabla^2 H\ J(\t x)\Big)\Big(I-\epsilon \nabla^2 H\ J(x)\Big)^{-1}(\nabla^2 K)x\\
                           & = & x^{\rm T}\Big(I-\epsilon^2\big(\nabla^2 H\  J(x)\big)^2\Big)^{-1}(\nabla^2 K)x\\
                           & = & x^{\rm T}\Big(I-\epsilon^2\big(\nabla^2 K\  J(x)\big)^2\Big)^{-1}(\nabla^2 K)x\\
                           & = & \t K(x,\epsilon),
\end{eqnarray*}
which finishes the proof.
\end{proof}

\section{Invariant Poisson structure}
\label{sect symplectic}

\begin{theorem} \label{th Poisson}
Let $f(x)=J(x)\nabla H(x)$ and $g(x)=J(x)\nabla K(x)$ be two associated vector fields, via the matrix $B$.  Then
both maps $\Phi_f$ and $\Phi_g$ are Poisson with respect to the brackets with the Poisson tensor $\Pi(x)$ given by
\begin{eqnarray} \label{Pi short}
\Pi(x)  & = & J(x)- \epsilon^2 J(x) \cdot \nabla^2 H\cdot J(x) \cdot \nabla^2 H\cdot J(x).
\end{eqnarray}
\end{theorem}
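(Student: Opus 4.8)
The plan is to establish the invariance of the bivector $\Pi$ under $\Phi_f$; invariance under $\Phi_g$ will then follow automatically. By \eqref{Jd2H2} (which rests on $B^2=I$) the cubic correction term in \eqref{Pi short} is unchanged if $H$ is replaced by $K$,
\begin{equation*}
J(x)\nabla^2 H\cdot J(x)\nabla^2 H\cdot J(x)=\bigl(J(x)\nabla^2 H\bigr)^2 J(x)=\bigl(J(x)\nabla^2 K\bigr)^2 J(x)=J(x)\nabla^2 K\cdot J(x)\nabla^2 K\cdot J(x),
\end{equation*}
so $\Pi$ depends only on the equivalence class of the associated vector field. Since $K$ satisfies the same matrix differential equation \eqref{d2H} as $H$, it therefore suffices to prove that $\Phi_f$ is Poisson with respect to $\Pi$; the argument for $\Phi_g$ is word for word the same with the pair $(H,f)$ replaced by $(K,g)$.

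Recall that $\Phi$ is Poisson with respect to $\Pi$ exactly when $d\Phi(x)\,\Pi(x)\,d\Phi(x)^{\rm T}=\Pi(\Phi(x))$ (the Jacobi identity for $\Pi$ is treated at the end). Inserting the Jacobian formula \eqref{Jac}, $d\Phi_f(x)=(I-\epsilon f'(x))^{-1}(I+\epsilon f'(\t x))$, the Poisson property becomes the matrix identity
\begin{equation*}
(I+\epsilon f'(\t x))\,\Pi(x)\,(I+\epsilon f'(\t x))^{\rm T}=(I-\epsilon f'(x))\,\Pi(\t x)\,(I-\epsilon f'(x))^{\rm T}. \tag{$\star$}
\end{equation*}
Two preliminary factorizations streamline the attack on $(\star)$. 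Writing $M(x):=J(x)\nabla^2 H$ and using $(I-\epsilon M)(I+\epsilon M)=I-\epsilon^2 M^2$ together with $(I+\epsilon M)J(x)=J(x)(I-\epsilon M^{\rm T})$ (immediate from $J(x)^{\rm T}=-J(x)$ and $(\nabla^2 H)^{\rm T}=\nabla^2 H$), one gets
\begin{equation*}
\Pi(x)=\bigl(I-\epsilon J(x)\nabla^2 H\bigr)\,J(x)\,\bigl(I-\epsilon J(x)\nabla^2 H\bigr)^{\rm T},
\end{equation*}
so that $\Pi$ is visibly skew-symmetric and, for small $\epsilon$, invertible. Moreover, from \eqref{f'} and the computation in the proof of the Proposition of Sect.~\ref{sect KHK for LP} one has $I-\epsilon f'(x)=\bigl(I-\epsilon M(x)\bigr)\bigl(I-\epsilon(I-\epsilon M(x))^{-1}J'\nabla H(x)\bigr)$, and, after $\epsilon\mapsto-\epsilon$ and $x\mapsto\t x$, an analogous factorization of $I+\epsilon f'(\t x)$.

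The substantive and most laborious step is the verification of $(\star)$. I would feed in the algebraic machinery already assembled: Lemma~\ref{lemma comm} in the form $J(x)\nabla^2 H\,J(y)=J(y)\nabla^2 H\,J(x)$ for arbitrary $x,y$, so that $M(x)$ and $M(\t x)$ commute and products of factors $J$ and $\nabla^2 H$ may be freely reordered; the identity \eqref{J'}, $(J'y)z=J(z)y$; and, crucially, the defining relations \eqref{Phi x-x}, \eqref{Phi x} of the map, which control the difference $J(\t x)-J(x)=J(\t x-x)$ and reduce the terms mixing $x$ and $\t x$ (since $H$ is homogeneous one also has the clean relations $x=(I-\epsilon f'(x))\t x$ and $\t x=(I+\epsilon f'(\t x))x$). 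The parity principle from the proof of Theorem~\ref{th integrals} — a product of matrices containing an odd number of factors $J$ is skew-symmetric, an even number symmetric — should again annihilate the spurious terms. As a sanity check, the part of $(\star)$ linear in $\epsilon$ reproduces the classical invariance $L_f J=0$ of the Lie-Poisson tensor $J$ under the Hamiltonian flow of $f$, which holds automatically; the genuine content is the interaction of the cubic correction $J\nabla^2 H\,J\nabla^2 H\,J$ in $\Pi$ with the part $J'\nabla H(x)$ of $f'(x)$ beyond $M(x)$, and carrying this through in closed form is the step I expect to be the main obstacle. A concrete route that should be manageable is to pass to the block form of Sect.~\ref{sect int fam}: by \eqref{P full} and \eqref{cH} one has $(J(x)\nabla^2 H)^2=\mathcal D\oplus\mathcal D$ with $\mathcal D=X\mathcal H_2 X\mathcal H_2-X\mathcal H_3 X\mathcal H_1$ circulant, whence $\Pi(x)=\left(\begin{smallmatrix}0 & \mathcal Y\\ -\mathcal Y & 0\end{smallmatrix}\right)$ with $\mathcal Y=(I-\epsilon^2\mathcal D)X(u)$ a cyclic Hankel matrix, and $(\star)$ turns into a system of identities among circulant and cyclic Hankel $n\times n$ matrices, for which Lemma~\ref{lemma ABC} is available in full strength.

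Finally, for $\Pi$ to qualify as a Poisson tensor one checks the Jacobi identity. Since $\Pi(x)$ is invertible for small $\epsilon$, this is equivalent to closedness of the $2$-form $\Pi(x)^{-1}=J(x)^{-1}+O(\epsilon^2)$; as $J(x)^{-1}$ is symplectic (the Jacobi identity \eqref{Jacobi id} for $J$) and the correction is an explicit rational $2$-form, $d(\Pi^{-1})=0$ follows by a direct computation — in the block coordinates above, a short verification involving only the single cyclic Hankel block $\mathcal Y(u)$. Together with the invariance of $\Pi$ under $\Phi_f$ and $\Phi_g$, this proves the theorem.
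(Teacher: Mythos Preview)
Your outline tracks the paper's own route almost exactly: reduce to $\Phi_f$ via \eqref{Jd2H2}, rewrite the Poisson property as the matrix identity $(\star)$, and pass to the cyclic-Hankel/circulant block picture. But you stop short of the two computations that carry all the weight, and for $(\star)$ you do not identify the device that makes it tractable. The paper's key input is an explicit block factorization (Lemma~\ref{lemma f' vs Jd2H}): with $X=X(u)$, $\t X=X(\t u)$, $X_1=X\bigl(\tfrac{u+\t u}{2}\bigr)$, $X_2=X\bigl(\tfrac{\t v-v}{2}\bigr)$,
\[
I-\epsilon f'(x)=\bigl(I-\epsilon J(x)\nabla^2 H\bigr)\begin{pmatrix}XX_1^{-1}&0\\-X_2X_1^{-1}&I\end{pmatrix},\quad
I+\epsilon f'(\t x)=\bigl(I+\epsilon J(\t x)\nabla^2 H\bigr)\begin{pmatrix}\t XX_1^{-1}&0\\X_2X_1^{-1}&I\end{pmatrix},
\]
each of which is equivalent to the Kahan relation \eqref{Phi x-x}. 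The paper also writes $\Pi(x)=\bigl(I-\epsilon^2(J(x)\nabla^2 H)^2\bigr)J(x)$ and uses Lemma~\ref{lemma matrix polynomial} together with \eqref{Af'} to commute the polynomial-in-$A$ prefactor past $f'$; your factorization $\Pi=(I-\epsilon M)J(I-\epsilon M)^{\rm T}$ is correct but does not yield this commutation. With these two ingredients and Lemma~\ref{lemma comm}, both sides of $(\star)$ collapse, via Lemma~\ref{lemma ABC}, to the common block $\left(\begin{smallmatrix}0&Y\\-Y&0\end{smallmatrix}\right)$ with $Y=\t X X_1^{-1}X=XX_1^{-1}\t X$. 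The general ``feed in Lemma~\ref{lemma comm}, \eqref{J'}, parity'' program you sketch does not by itself produce these cancellations; the block factorization of $I\pm\epsilon f'$ is the missing idea.

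For the Jacobi identity, checking $d(\Pi^{-1})=0$ is not the short step you suggest: inverting $\Pi$ brings in rational functions of $u$ and one must still control how $\mathcal Y(u)^{-1}$ depends on all $x_i$. The paper stays polynomial by writing $\Pi(x)=(1-\epsilon^2 q_0)J-\epsilon^2\sum_{\ell=1}^{n-1}q_\ell A^\ell J$ via Lemma~\ref{lemma matrix polynomial} and proving by a coordinate computation (Proposition~\ref{prop Poisson property}) that such a tensor is Poisson precisely when $\nabla q_\ell=A\nabla q_{\ell-1}$; this relation is exactly Lemma~\ref{lemma grads pq}. So the Jacobi identity is not a formality here but rests on a specific differential property of the coefficients $q_\ell$ that your outline does not invoke.
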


This theorem is a direct consequence of the following two statements combined with Lemma \ref{lemma grads pq}.

\begin{proposition} \label{prop Poisson formula}
For the matrix $\Pi(x)$ from \eqref{Pi short}, we have:
\beq \label{Pois prop}
d\Phi_f(x)\Pi(x) (d\Phi_f(x))^{\rm T}=\Pi(\t x) .
\eeq
\end{proposition}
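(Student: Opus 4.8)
\textbf{Proof proposal for Proposition \ref{prop Poisson formula}.}

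The plan is to compute $d\Phi_f(x)\,\Pi(x)\,(d\Phi_f(x))^{\rm T}$ directly, using the explicit formula \eqref{Jac} for the Jacobi matrix of the Kahan map together with the Lie--Poisson structure of $J(x)$ and the algebraic identities established earlier. First I would record the shorthand $M(x)=J(x)\nabla^2 H$, so that $f_0'(x)=M(x)+J'\nabla H(x)$ and, by \eqref{Phi x} (applied to $f=f_0$), the map satisfies $\t x=(I-\epsilon M(x))^{-1}(I+\epsilon M(\t x))x$. The key structural observation is that $\Pi(x)$ in \eqref{Pi short} factors as $\Pi(x)=(I-\epsilon M(x))\,J(x)\,(I+\epsilon M(x))^{\rm T}$: indeed, expanding the right-hand side and using that $J(x)$ is skew-symmetric together with $M(x)J(x)=J(x)\nabla^2 H\,J(x)$ being symmetric (so $(M(x)J(x))^{\rm T}=J(x)M(x)^{\rm T}$ with $M(x)^{\rm T}=\nabla^2 H\,J(x)^{\rm T}=-\nabla^2 H\,J(x)$), the linear-in-$\epsilon$ terms cancel and the $\epsilon^2$ term becomes $-\epsilon^2 J(x)\nabla^2 H\,J(x)\nabla^2 H\,J(x)$, matching \eqref{Pi short}. (This is a Lie--Poisson analogue of the factorization used in \cite{PS4dim, PS6dim}.)

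With this factorization in hand, the next step is to rewrite $d\Phi_f(x)$ using \eqref{Jac} in a form compatible with $M$. Starting from $I-\epsilon f_0'(x)=(I-\epsilon M(x))\big(I-\epsilon(I-\epsilon M(x))^{-1}J'\nabla H(x)\big)$ as in the proof of the Proposition in Sect. \ref{sect KHK for LP}, and the analogous identity $I+\epsilon f_0'(\t x)=(I+\epsilon M(\t x))\big(I+\epsilon(I+\epsilon M(\t x))^{-1}J'\nabla H(\t x)\big)$, one should be able to massage the product $d\Phi_f(x)=(I-\epsilon f_0'(x))^{-1}(I+\epsilon f_0'(\t x))$ into the shape $(I-\epsilon M(x))^{-1}(I+\epsilon M(\t x))\cdot R(x)$ for a suitable correction factor $R(x)$ coming from the $J'\nabla H$ terms; here the identity \eqref{J'} relating $(J'y)z$ and $J(z)y$, used as in Sect. \ref{sect KHK for LP} to convert $(J'\nabla H)(\t x-x)$ into $J(\t x-x)\nabla H$, is the essential tool. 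Plugging this into the target expression and using the factorization of $\Pi(x)$, the claim \eqref{Pois prop} reduces, after cancelling $(I-\epsilon M(x))^{-1}(I+\epsilon M(\t x))$ on the left and its transpose on the right, to a cleaner identity of the form $(I+\epsilon M(\t x))\,J(x)\,(I+\epsilon M(\t x))^{\rm T}$ being equal (up to the $R$-factors and after using $\Pi(\t x)=(I-\epsilon M(\t x))J(\t x)(I+\epsilon M(\t x))^{\rm T}$) to something one can verify using only the linearity of $J$, the commutation Lemma \ref{lemma comm}, and the symmetry of $M(x)J(y)$ in $x\leftrightarrow y$.

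The main obstacle I anticipate is the bookkeeping of the non-constant part $J'\nabla H$ of the Jacobian: unlike the canonical (constant-$J$) case treated in \cite{PS4dim}, here $f_0'$ is genuinely an $x$-dependent correction of $M$, and one must check that the correction factors $R(x)$ on the left and $R(x)^{\rm T}$ on the right, sandwiched around $\Pi(x)$, reorganize — using \eqref{J'}, Lemma \ref{lemma comm}, and the relation $\nabla H(x)=(\nabla^2 H)x$ — into precisely the difference between $\Pi(x)$ evaluated at $x$ versus at $\t x$. Concretely I expect the calculation to hinge on the identity $(I-\epsilon M(x))^{-1}J(x+\t x)(I-\epsilon M(x))^{-\rm T}=J(\t x+x)+O(\epsilon^2)$-type manipulations made exact via Lemma \ref{lemma comm}, together with the defining relation between $x$, $\t x$ from \eqref{Phi x}; everything else is routine matrix algebra. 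Once the $O(\epsilon)$ terms are shown to telescope correctly, the $O(\epsilon^2)$ terms are governed by $(M(x)J(x)M(x))$-type blocks, which are handled by \eqref{Jd2HJd2K} and \eqref{Jd2H2}, and the proof closes.
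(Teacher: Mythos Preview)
Your factorization of $\Pi(x)$ contains a sign error: expanding $(I-\epsilon M(x))J(x)(I+\epsilon M(x))^{\rm T}$ with $M=J\nabla^2 H$ gives linear terms $\epsilon J M^{\rm T}-\epsilon MJ=-\epsilon J\nabla^2 H J-\epsilon J\nabla^2 H J=-2\epsilon J\nabla^2 H J$, which do \emph{not} cancel. The correct identity is $\Pi(x)=(I-\epsilon M(x))J(x)(I-\epsilon M(x))^{\rm T}$ (equivalently, $(I+\epsilon M)J(I+\epsilon M)^{\rm T}$); with matching signs the linear terms do cancel and the quadratic term comes out with the right sign. This is fixable, but be aware that your subsequent cancellation of $(I-\epsilon M(x))^{-1}(I+\epsilon M(\t x))$ against the $\Pi$-factors will not line up as written.

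More seriously, the part of the argument that carries the real weight --- handling the $J'\nabla H$ correction in $f_0'$ --- is left at the level of ``one should be able to massage'' and ``routine matrix algebra''. The paper does \emph{not} proceed via your factorize-and-cancel scheme. Instead it first uses Lemma~\ref{lemma matrix polynomial} to write $I-\epsilon^2(J(x)\nabla^2 H)^2$ as a polynomial in $A$, which by \eqref{Af'} commutes with $f'(\cdot)$; this moves the scalar-like factor outside and reduces the problem to comparing $(I+\epsilon f'(\t x))J(x)(I+\epsilon f'(\t x))^{\rm T}$ with the same expression at $\t x$. The key step is then Lemma~\ref{lemma f' vs Jd2H}, a concrete block-triangular factorization $I-\epsilon f'(x)=(I-\epsilon J(x)\nabla^2 H)\begin{pmatrix} XX_1^{-1}&0\\ -X_2X_1^{-1}&I\end{pmatrix}$ (with $X_1$, $X_2$ built from the midpoint and difference of $x,\t x$), whose proof is precisely where \eqref{J'} and \eqref{Phi x-x} enter. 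After that, the claim becomes an explicit cyclic-Hankel computation. Your outline never isolates this block factorization, and the lemmas you cite at the end, \eqref{Jd2HJd2K} and \eqref{Jd2H2}, concern the pair $H,K$ of associated Hamiltonians and play no role in Proposition~\ref{prop Poisson formula}, which involves only $f$.
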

\begin{proposition} \label{prop Poisson property}
A matrix 
\beq
\Pi(x)=(1-\epsilon^2 q_0(x))J(x)-\epsilon^2\sum_{\ell=1}^{n-1} q_\ell(x) A^\ell J(x)
\eeq
is a Poisson tensor if and only if the functions $q_\ell(x)$ satisfy
\beq
\nabla q_\ell(x) =A\nabla q_{\ell-1}(x), \quad \ell\in \mathbb Z/(n\mathbb Z),
\eeq
or, equivalently,
\beq \label{dqdx}
\frac{\partial q_\ell}{\partial x_i}=\frac{\partial q_{\ell-1}}{\partial x_{i+1}}, \quad \ell\in\mathbb Z/(n\mathbb Z), \quad i\in [1,2n],
\eeq
where the latter equation for $i=n$ and for $i=2n$ should be read as 
$$
\frac{\partial q_\ell}{\partial x_n}=\frac{\partial q_{\ell-1}}{\partial x_1}, \quad {resp.} \quad 
\frac{\partial q_\ell}{\partial x_{2n}}=\frac{\partial q_{\ell-1}}{\partial x_{n+1}}.
$$
\end{proposition}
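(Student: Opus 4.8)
The plan is to check the two defining properties of a Poisson tensor for $\Pi(x)$ — skew-symmetry and the Jacobi identity — and to see that skew-symmetry is automatic, while the Jacobi identity holds, order by order in $\epsilon$, exactly when the stated recursion for the $q_\ell$ is satisfied. Since $A=\cP\oplus\cP$ obeys $A^{\rm T}=A^{-1}$ and $\cP^n=I$, relation \eqref{cond A} holds for every integer power of $A$, so $A^\ell J(x)=J(x)A^{-\ell}$ and each $A^\ell J(x)$ is skew-symmetric; hence $\Pi(x)$ is skew-symmetric for arbitrary $q_\ell$. Moreover $\sum_{\ell=0}^{n-1}q_\ell(x)A^\ell=\mathcal Q(x)\oplus\mathcal Q(x)$ with $\mathcal Q(x)=\sum_\ell q_\ell(x)\cP^\ell$ circulant, so
\[
\Pi(x)=\begin{pmatrix}0 & \widetilde X(x)\\ -\widetilde X(x) & 0\end{pmatrix},\qquad \widetilde X(x)=\bigl(I-\epsilon^2\mathcal Q(x)\bigr)X(u),
\]
and $\widetilde X(x)$, being a circulant times a cyclic Hankel matrix, is itself cyclic Hankel, hence symmetric. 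Thus $\Pi$ has exactly the block pattern of $J$, so, writing $\{\cdot,\cdot\}_\Pi$ for the bracket defined by $\Pi$, one has $\{x_i,x_j\}_\Pi=0$ whenever $i,j$ lie in the same block, and — as in the Jacobi-identity check for $J(x)$ in Section \ref{sect int fam} — the Jacobi expression $\pi_{ijk}$ (now for the bracket of $\Pi$) vanishes trivially unless $\{i,j,k\}$ meets both blocks, in which case it reduces to $\{x_i,\{x_j,x_k\}\}_\Pi-\{x_j,\{x_i,x_k\}\}_\Pi=0$ with two of the indices in one block.

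Next I would run the two remaining nested-bracket computations explicitly. Writing $x=(u,v)$, indexing both blocks by $\mathbb Z/n\mathbb Z$, and letting $\widetilde y(x)=(I-\epsilon^2\mathcal Q(x))u$ be the generating vector of $\widetilde X$, i.e. $\widetilde y_s=u_s-\epsilon^2\sum_m q_m(x)u_{s+m}$, one has $\{u_i,v_a\}_\Pi=\widetilde X_{ia}=\widetilde y_{i+a-1}$, while $\{u_i,\cdot\}_\Pi$ acts only through $v$-derivatives and $\{v_a,\cdot\}_\Pi$ only through $u$-derivatives. In the family with two $v$-indices, the terms produced by the \emph{explicit} $u$-dependence of $\widetilde y_s$ (the $\delta$-term and the $q$-term in $\partial\widetilde y_s/\partial u_p$) antisymmetrize to zero; after substituting the formula for $\widetilde y$, the $(u,u,v)$ family of the Jacobi identity becomes, at order $\epsilon^2$,
\[
\sum_{b,m}\frac{\partial q_m}{\partial v_b}\bigl(u_{i+b-1}\,u_{j+a-1+m}-u_{j+b-1}\,u_{i+a-1+m}\bigr)=0 ,
\]
with a correction of the same shape at order $\epsilon^4$ (each $u$-bilinear dressed by a factor $q_\ell u_\bullet$), and the $(v,v,u)$ family is the analogous statement with $u$ and $v$ interchanged.

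For the ``if'' direction, the recursion \eqref{dqdx} within the $v$-block gives $\partial q_m/\partial x_{n+b}=\partial q_0/\partial x_{n+b+m}$, so the coefficient above depends only on $b+m$; then, after the reindexing that moves the first factor of each $u$-bilinear into the summation variable, both terms become the self-convolution $S(t)=\sum_c u_cu_{t-c}$ with the \emph{same} argument $t$, and cancel — and the same collapse disposes of the $\epsilon^4$ correction and of the $(v,v,u)$ family. For the converse, the displayed $\epsilon^2$-identity, rewritten as the statement that $\Psi(p,q):=\sum_{b,m}(\partial q_m/\partial v_b)\,u_{p+b}u_{q+m}$ depends only on $p+q$, forces $\partial q_m/\partial v_b$ to depend only on $b+m$ at every point where the circulant built from $u$ is invertible (the cyclic shifts of $u$ are then linearly independent, hence so are the rank-one functions $(p,q)\mapsto u_{p+b}u_{q+m}$ over $(b,m)$); by continuity this holds everywhere, and together with the analogous $(v,v,u)$ conclusion it gives \eqref{dqdx}. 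The cyclic relation $\nabla q_0=A\nabla q_{n-1}$ is then automatic because $A^n=I$.

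The delicate step is the bookkeeping in the $(v,v,u)$ family, where $\widetilde y_s$ depends on $u$ both explicitly and through the $q_m$, so that before the self-convolution argument can be used one must verify that the extra $\delta$- and $q_m$-contributions all cancel under antisymmetrization; keeping track of the cyclic index shifts correctly there is the main source of potential error. The necessity direction is straightforward once one recognizes the displayed identity as ``$\Psi$ depends on $p+q$ only'', the only subtlety being the density/continuity argument needed to pass from generic $u$ to all $x$.
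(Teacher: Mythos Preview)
Your approach is correct and essentially the same as the paper's: reduce the Jacobi identity to the two mixed-block families, observe that the ``structural'' terms (your $\delta$- and $q$-contributions from $\partial\widetilde y_s/\partial u_p$; the paper's ``first line'' in \eqref{proof Jacobi 1}--\eqref{proof Jacobi 2}) antisymmetrize away by the cyclic-Hankel combinatorics, and then isolate the $\nabla q_\ell$-terms, which collapse via the self-convolution $S(t)=\sum_c u_c u_{t-c}$ exactly when \eqref{dqdx} holds --- this is precisely the content of the paper's formula \eqref{proof Jacobi 5}. Your packaging through the generating vector $\widetilde y_s$ of the cyclic Hankel block $\widetilde X$ is simply a cleaner notation for the same coordinate computation.

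One difference worth flagging: the paper's proof actually only spells out the ``if'' direction --- it derives \eqref{proof Jacobi 5} and observes that it vanishes under \eqref{dqdx}, without arguing the converse. Your linear-independence argument for ``only if'' (rewriting the $\epsilon^2$-condition as ``$\Psi(p,q)$ depends only on $p+q$'' and noting that, at points where all DFT components $\hat u_r\neq 0$, this forces the 2D Fourier transform of $c_{b,m}=\partial q_m/\partial v_b$ to be diagonal, hence $c_{b,m}$ to depend only on $b+m$) is a genuine addition. It is valid provided the proposition is read as holding for all small $\epsilon$, so that the $\epsilon^2$- and $\epsilon^4$-orders may be separated.
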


\noindent
{\em Proof of Proposition \ref{prop Poisson formula}.} With expression \eqref{Jac} for $d\Phi_f(x)$, equation \eqref{Pois prop} turns into
\beq  \label{proof Poisson 1}
\big(I+\epsilon f'(\t x)\big)\Pi(x)\big(I+\epsilon f'(\t x)\big)^{\rm T}=\big(I-\epsilon f'(x)\big)\Pi(\t x)\big(I-\epsilon f'(x)\big)^{\rm T}.
\eeq
We have: $\Pi(x)=\big(I-\epsilon^2(J(x)\nabla^2 H)^2\big)J(x)$. According to Lemma \ref{lemma matrix polynomial}, $I-\epsilon^2(J(x)\nabla^2 H)^2$ is a matrix polynomial of $A$. By virtue of \eqref{Af'}, this matrix commutes with $f'(\t x)$ (actually, with $f'$ evaluated at any point). Therefore,  equation \eqref{proof Poisson 1} is equivalent to
\begin{eqnarray} \label{proof Poisson 2}
\lefteqn{
\big(I-\epsilon^2(J(x)\nabla^2 H)^2\big)\big(I+\epsilon f'(\t x)\big)J(x)\big(I+\epsilon f'(\t x)\big)^{\rm T}} \nonumber\\
& = & 
\big(I-\epsilon^2(J(\t x)\nabla^2 H)^2\big)\big(I-\epsilon f'(x)\big)J(\t x)\big(I-\epsilon f'(x) \big)^{\rm T}.
\end{eqnarray}
\begin{lemma} \label{lemma f' vs Jd2H}
We have:
\beq \label{lemma f' vs Jd2H 1}
I-\epsilon f'(x)=\big(I-\epsilon J(x)\nabla^2 H\big)\begin{pmatrix} XX_1^{-1} & 0 \\ -X_2X_1^{-1} & I \end{pmatrix},
\eeq
and
\beq \label{lemma f' vs Jd2H 2}
I+\epsilon f'(\t x)=\big(I+\epsilon J(\t x)\nabla^2 H\big)\begin{pmatrix} \t XX_1^{-1} & 0 \\  X_2X_1^{-1} & I \end{pmatrix},
\eeq
where
\beq \label{Xs}
X=X(u), \quad \t X=X(\t u), \quad X_1=X\Big(\frac{u+\t u}{2}\Big), \quad X_2=X\Big(\frac{\t v-v}{2}\Big).
\eeq
\end{lemma}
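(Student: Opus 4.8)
The plan is to prove \eqref{lemma f' vs Jd2H 1} by a block computation and to deduce \eqref{lemma f' vs Jd2H 2} from it by reversibility: applying \eqref{lemma f' vs Jd2H 1} to the pair $(\t x,x)$ with parameter $-\epsilon$ (a Kahan pair by \eqref{eq: reversible}), and noting that under the exchange $(x,\t x,\epsilon)\mapsto(\t x,x,-\epsilon)$ one has $X=X(u)\mapsto\t X$, $X_1\mapsto X_1$, $X_2\mapsto -X_2$, $f'(x)\mapsto f'(\t x)$ and $J(x)\mapsto J(\t x)$, identity \eqref{lemma f' vs Jd2H 1} turns into \eqref{lemma f' vs Jd2H 2}. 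To prove \eqref{lemma f' vs Jd2H 1}, write $\nabla H(x)=(\xi^{\rm T},\eta^{\rm T})^{\rm T}$ with $\xi=\cH_1u+\cH_2v$, $\eta=\cH_2u+\cH_3v$, and let $C_w:=\begin{pmatrix}Q_1w&\cdots&Q_nw\end{pmatrix}$ be the circulant matrix from \eqref{J'y block}. By \eqref{f'}, the block form $J(x)\nabla^2 H=\begin{pmatrix}X\cH_2&X\cH_3\\-X\cH_1&-X\cH_2\end{pmatrix}$ (computed in the proof of Lemma~\ref{lemma matrix polynomial}), and the block form $J'\nabla H(x)=\begin{pmatrix}C_\eta&0\\-C_\xi&0\end{pmatrix}$ read off from \eqref{J'y}, the relation $I-\epsilon f'(x)=(I-\epsilon J(x)\nabla^2 H)-\epsilon J'\nabla H(x)$ together with the vanishing of the last $n$ columns of $J'\nabla H(x)$ shows, after cancelling $I-\epsilon J(x)\nabla^2 H$ and multiplying by $X_1$ on the right, that \eqref{lemma f' vs Jd2H 1} is equivalent to
\[
(I-\epsilon J(x)\nabla^2 H)\begin{pmatrix}X_0\\ X_2\end{pmatrix}=\epsilon\begin{pmatrix}C_\eta X_1\\ -C_\xi X_1\end{pmatrix},\qquad X_0:=X\Big(\tfrac{\t u-u}{2}\Big),
\]
where I used $X=X(u)=X_1-X_0$ (linearity of $w\mapsto X(w)$). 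In $n\times n$ blocks this is the pair of identities
\[
X_0=\epsilon\big(X\cH_2X_0+X\cH_3X_2+C_\eta X_1\big),\qquad X_2=-\epsilon\big(X\cH_1X_0+X\cH_2X_2+C_\xi X_1\big).
\]

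Now comes the point of the argument: every matrix occurring in these two identities is a \emph{cyclic Hankel} matrix. Indeed $X\cH_iX_j$ is a product of three cyclic Hankel matrices, hence cyclic Hankel by Lemma~\ref{lemma ABC}; $C_wX_1$ is a product of a circulant and a cyclic Hankel matrix, hence cyclic Hankel (from $\cP M=M\cP$ for circulant $M$ and $\cP M\cP=M$ for cyclic Hankel $M$); and $X_0,X_2$ are cyclic Hankel by definition. Since a cyclic Hankel matrix is uniquely determined by its first column, each of the two matrix identities is equivalent to the equality of first columns. The first column of $X(w)$ is $w$, and the first column of $C_wX_1$ is $C_w\big(\tfrac{u+\t u}{2}\big)=X_1w$ by the identity $(J'y)z=J(z)y$ of \eqref{J'} (used with $y=(0,w^{\rm T})^{\rm T}$). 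Hence, multiplying through by $2$, the two column identities become
\[
(I-\epsilon X\cH_2)(\t u-u)-\epsilon X\cH_3(\t v-v)=2\epsilon X_1\eta,\qquad \epsilon X\cH_1(\t u-u)+(I+\epsilon X\cH_2)(\t v-v)=-2\epsilon X_1\xi,
\]
which are precisely the two $n$-dimensional blocks of the implicit Kahan equation \eqref{Phi x-x}, written in the form $(I-\epsilon J(x)\nabla^2 H)(\t x-x)=2\epsilon J\big(\tfrac{x+\t x}{2}\big)\nabla H(x)$. This proves \eqref{lemma f' vs Jd2H 1}, and \eqref{lemma f' vs Jd2H 2} follows by the reversibility argument above.

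The block bookkeeping in the first paragraph is routine. The one step that genuinely needs care — and the reason the statement holds in this precise form — is the reduction of the two $n\times n$ matrix identities to their first columns: this is possible only because the $J'\nabla H(x)$ part of $f'(x)$ is \emph{circulant}, so that no non-Hankel factors are ever produced and Lemma~\ref{lemma ABC} applies, and because the terms $C_\xi X_1$, $C_\eta X_1$ collapse on the first column to $X_1\xi$, $X_1\eta$ via \eqref{J'} — exactly the combination in which $\nabla H(x)$ enters the Kahan equation \eqref{Phi x-x}.
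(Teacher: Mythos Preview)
Your proof is correct and follows essentially the same approach as the paper's: both arguments use reversibility to deduce \eqref{lemma f' vs Jd2H 2} from \eqref{lemma f' vs Jd2H 1}, isolate the $J'\nabla H(x)$ contribution via \eqref{f'}, and then reduce the resulting $n\times n$ matrix identity to the vector Kahan equation \eqref{Phi x-x} by exploiting the cyclic structure. The only difference is packaging: the paper right-multiplies by $J\big(\tfrac{x+\t x}{2}\big)$ and phrases the reduction as ``$n$ cyclically shifted versions of \eqref{Phi x-x}'', whereas you expand directly in blocks and phrase the same reduction as ``cyclic Hankel matrices are determined by their first columns'' --- two equivalent ways of saying the same thing.
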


With this lemma and Lemma \ref{lemma comm}, according to which matrices $J(x)\nabla^2 H$ and $J(\t x)\nabla^2 H$ commute, we can rewrite \eqref{proof Poisson 2} as 
\begin{eqnarray} \label{proof Poisson 3}
\lefteqn{
\big(I+\epsilon J(x)\nabla^2 H\big)\begin{pmatrix} \t XX_1^{-1} & 0 \\  X_2X_1^{-1} & I \end{pmatrix}J(x)
\begin{pmatrix} X_1^{-1}\t X & X_1^{-1}X_2 \\  0 & I \end{pmatrix}\big(I-\epsilon \nabla^2 H\ J(\t x)\big)} \nonumber\\
& = & 
\big(I-\epsilon J(\t x)\nabla^2 H\big)\begin{pmatrix} XX_1^{-1} & 0 \\ -X_2X_1^{-1} & I \end{pmatrix}J(\t x)
\begin{pmatrix} X_1^{-1}X & -X_1^{-1}X_2 \\ 0 & I \end{pmatrix}\big(I+\epsilon \nabla^2 H\ J(x)\big).\qquad
\end{eqnarray}
In the following computation, we repeatedly use the property of cyclic Hankel matrices formulated in Lemma \ref{lemma ABC}. We compute:
\begin{eqnarray*}
\lefteqn{\begin{pmatrix} \t XX_1^{-1} & 0 \\  X_2X_1^{-1} & I \end{pmatrix}J(x)
\begin{pmatrix} X_1^{-1}\t X & X_1^{-1}X_2 \\  0 & I \end{pmatrix}}\\
 & = & \begin{pmatrix} 0 & \t XX_1^{-1}X \\ -XX_1^{-1}\t X & -XX_1^{-1}X_2+X_2X_1^{-1}X\end{pmatrix} \\
 & = & \begin{pmatrix} 0 & \t XX_1^{-1}X \\ -XX_1^{-1}\t X &  0 \end{pmatrix} = \begin{pmatrix} 0 & Y \\ -Y &  0 \end{pmatrix}, 
\end{eqnarray*}
where $$Y=\t XX_1^{-1}X=XX_1^{-1}\t X$$ is a cyclic Hankel matrix (according to Lemma \ref{lemma ABC}). Similarly, 
\begin{eqnarray*}
\lefteqn{\begin{pmatrix} XX_1^{-1} & 0 \\ -X_2X_1^{-1} & I \end{pmatrix}J(\t x)
\begin{pmatrix} X_1^{-1}X & -X_1^{-1}X_2 \\ 0 & I \end{pmatrix}}\\
 & = & \begin{pmatrix} 0 & XX_1^{-1}\t X \\ -\t XX_1^{-1}X & -\t XX_1^{-1}X_2+X_2X_1^{-1}\t X\end{pmatrix} \\
 & = & \begin{pmatrix} 0 & XX_1^{-1}\t X \\ -\t XX_1^{-1}X &  0 \end{pmatrix} =  \begin{pmatrix} 0 & Y \\ -Y &  0 \end{pmatrix}.
 \end{eqnarray*}
 Observe that 
 $$
  \begin{pmatrix} 0 & Y \\ -Y &  0 \end{pmatrix}=J(x)J^{-1}\Big(\frac{x+\t x}{2}\Big)J(\t x)=J(\t x)J^{-1}\Big(\frac{x+\t x}{2}\Big)J(x).
 $$
Now, equation \eqref{proof Poisson 3} takes the form 
\begin{eqnarray} \label{proof Poisson 4}
\lefteqn{
\big(I+\epsilon J(x)\nabla^2 H\big) \begin{pmatrix} 0 & Y \\ -Y &  0 \end{pmatrix}\big(I-\epsilon \nabla^2 H\ J(\t x)\big)} \nonumber\\
& = & 
\big(I-\epsilon J(\t x)\nabla^2 H\big) \begin{pmatrix} 0 & Y \\ -Y &  0 \end{pmatrix}\big(I+\epsilon \nabla^2 H\ J(x)\big),
\end{eqnarray}
which is obviously true due to properties of cyclic Hankel matrices. 
\qed
\medskip

{\em Proof of Lemma \ref{lemma f' vs Jd2H}.} Observe that \eqref{lemma f' vs Jd2H 2} is obtained from \eqref{lemma f' vs Jd2H 1} upon replacing $\epsilon$ by $-\epsilon$ (which is equivalent to replacing $\t x$ by $\undertilde{x}$) with a subsequent shift in time. Therefore, it is sufficient to prove \eqref{lemma f' vs Jd2H 1}. The latter formula can be equivalently rewritten as
$$
\big(I-\epsilon J(x)\nabla^2 H\big)^{-1}\big(I-\epsilon f'(x)\big)=\begin{pmatrix} XX_1^{-1} & 0 \\ -X_2X_1^{-1} & I \end{pmatrix},
$$
or
$$
\big(I-\epsilon J(x)\nabla^2 H\big)^{-1}\big(I-\epsilon f'(x)\big)J\Big(\frac{x+ \t x}{2}\Big)=\begin{pmatrix} 0 & X \\  -X_1 & -X_2 \end{pmatrix},
$$
or 
$$
\bigg(\big(I-\epsilon J(x)\nabla^2 H\big)^{-1}\big(I-\epsilon f'(x)\big)-I\bigg)J\Big(\frac{x+ \t x}{2}\Big)=\begin{pmatrix} 0 & X-X_1 \\ 0 & -X_2 \end{pmatrix}.
$$
Taking into account equation \eqref{f'} and definitions \eqref{Xs}, we put the latter equation into the form
$$
\renewcommand{\arraystretch}{1.5}
\epsilon\big(I-\epsilon J(x)\nabla^2 H\big)^{-1} J'\nabla H(x)J\Big(\frac{x+ \t x}{2}\Big)=
  \begin{pmatrix} 0 & X\big(\frac{\t u -u}{2} \big) \\ 0 & X\big(\frac{\t v -v}{2} \big) \end{pmatrix}.
$$
It remains to observe that, according to \eqref{J'y}, \eqref{J'y block}, the latter equation is equivalent to (consists of $n$ cyclically shifted versions of) 
$$
\renewcommand{\arraystretch}{1.5}
\epsilon\big(I-\epsilon J(x)\nabla^2 H\big)^{-1} J\Big(\frac{x+ \t x}{2}\Big)\nabla H(x)=
  \begin{pmatrix} \frac{\t u -u}{2}  \\ \frac{\t v -v}{2} \end{pmatrix}=\frac{\t x-x}{2},
$$
which is nothing but \eqref{Phi x-x}.
\qed
\medskip


\noindent {\em Proof of Proposition \ref{prop Poisson property}.} Like in Section \ref{sect int fam}, we have to check the Jacobi identity \eqref{Jacobi id}. Again, due to the block-diagonal structure of the matrix $\Pi(x)$, one only has to check this for the cases $i,j\in\{1,\ldots,n\}$, $k\in\{n+1,\ldots,2n\}$ and $i,j\in\{n+1,\ldots,2n\}$, $k\in\{1,\ldots,n\}$, where \eqref{Jacobi id} simplifies to \eqref{Jacobi id block}. We compute:
$$
\{x_j,x_k\}=(1-\epsilon^2 q_0(x))J_{jk}(x)-\epsilon^2\sum_{\ell=1}^{n-1} q_\ell(x) (A^\ell J(x))_{jk}.
$$
Taking into account that $J(x)$ only depends on $x_m$ with $1\le m \le n$, we have:
\begin{eqnarray}
\{x_i,\{x_j,x_k\}\} & = & (1-\epsilon^2 q_0(x))\sum_{m=1}^n\frac{\partial{J_{jk}(x)}}{\partial x_m}\{x_i,x_m\}-\epsilon^2\sum_{\ell=1}^{n-1} q_\ell(x) \sum_{m=1}^n\frac{\partial (A^\ell J(x))_{jk}}{\partial x_m}\{x_i,x_m\} \nonumber\\
 &   & -\epsilon^2\sum_{\ell=0}^{n-1} \sum_{m=1}^{2n} \frac{\partial q_\ell(x)}{\partial x_m} (A^\ell J(x))_{jk}\{x_i,x_m\}. \label{proof Jacobi 1}
\end{eqnarray}
We first deal with the terms from the first line. They can be represented as
\begin{eqnarray}
 &  & (1-\epsilon^2 q_0(x))^2\sum_{m=1}^n\frac{\partial{J_{jk}(x)}}{\partial x_m}J_{im}(x) \nonumber\\
 &    & -\epsilon^2\sum_{p=1}^{n-1} (1-\epsilon^2 q_0(x))q_p(x) \sum_{m=1}^n \frac{\partial{J_{jk}(x)}}{\partial x_m}(A^p J(x))_{im} 
 \nonumber\\
 &    & -\epsilon^2\sum_{\ell=1}^{n-1} q_\ell(x)(1-\epsilon^2 q_0(x)) \sum_{m=1}^n \frac{\partial (A^\ell J(x))_{jk}}{\partial x_m}J_{im}(x) 
 \nonumber\\
 &    & +\epsilon^4 \sum_{\ell=1}^{n-1}\sum_{p=1}^{n-1} q_\ell(x) q_p(x) \sum_{m=1}^n\frac{\partial (A^\ell J(x))_{jk}}{\partial x_m} (A^p J(x))_{im}\ .
 \label{proof Jacobi 2}
 \end{eqnarray}
The contribution of these terms to \eqref{Jacobi id block} vanishes due to the following computation:
\begin{eqnarray*}
\sum_{m=1}^n\frac{\partial{(A^\ell J(x))_{jk}}}{\partial x_m}(A^p J(x))_{im} 
& = & -\sum_{m=1}^n\frac{\partial{x_{j+k+\ell-1\, ({\rm mod}\, n)}}}{\partial x_m}x_{i+m+p-1\, ({\rm mod}\, n)}\\
 & = & -\sum_{m=1}^n\delta_{m,j+k+\ell-1\, ({\rm mod}\, n)}x_{i+m+p-1\, ({\rm mod}\, n)}\\
 & = & -x_{i+j+k+\ell+p-2\, ({\rm mod}\, n)}.
\end{eqnarray*} 
This expression is symmetric with respect to $i\leftrightarrow j$, which results in a zero contribution to  \eqref{Jacobi id block}. Note this result for $\ell=p=0$ is equivalent to the Jacobi identity for the bracket $\{\cdot,\cdot\}_J$, while for $\ell=p\neq 0$ it is equivalent to the Jacobi identity for the bracket $\{\cdot,\cdot\}_{A^\ell J}$. The general result is equivalent to the compatibility of the brackets $\{\cdot,\cdot\}_{A^\ell J}$ and $\{\cdot,\cdot\}_{A^p J}$. 

The terms from the second line in \eqref{proof Jacobi 1} can be represented as 
\begin{eqnarray}
\lefteqn{-\epsilon^2\sum_{\ell=0}^{n-1} \sum_{m=1}^{2n} \frac{\partial q_\ell(x)}{\partial x_m} (A^\ell J(x))_{jk}J_{im}(x)}  \nonumber\\
 &    & +\epsilon^4 \sum_{\ell=0}^{n-1}\sum_{p=0}^{n-1} \sum_{m=1}^{2n}\frac{\partial q_\ell(x)}{\partial x_m} q_p(x) (A^\ell J(x))_{jk} (A^p J(x))_{im}\ .
 \label{proof Jacobi 3}
 \end{eqnarray}
 Due to the block structure of the matrix $J(x)$, if $i,j\in\{1,\ldots,n\}$, one can restrict the summation index $m$ here to the range $m\in\{n+1,\ldots,2n\}$, while if $i,j\in\{n+1,\ldots,2n\}$, one can restrict the summation index $m$ to  the range $m\in\{1,\ldots,n\}$. Both possibilities are considered analogously, therefore we concentrate on the second one. Thus, assume  that $i,j\in\{n+1,\ldots,2n\}$. Then the terms in \eqref{proof Jacobi 3}  quadratic in $\epsilon$ can be transformed as follows: 
\begin{eqnarray*}
\lefteqn{-\epsilon^2\sum_{\ell=0}^{n-1} \sum_{m=1}^n \frac{\partial q_\ell(x)}{\partial x_m} (A^\ell J(x))_{jk}J_{im}(x)}  \nonumber \\
 &   = & -\epsilon^2\sum_{\ell=0}^{n-1} \sum_{m=1}^n \frac{\partial q_\ell(x)}{\partial x_m} x_{j+k+\ell-1\, ({\rm mod}\, n)}x_{i+m-1\, ({\rm mod}\, n)} 
 \nonumber \\
  &   = & -\epsilon^2\sum_{a=1}^{n} \sum_{b=1}^n \frac{\partial q_{a-j-k+1\, ({\rm mod}\, n)}(x)}{\partial x_{b-i+1\, ({\rm mod}\, n)}} x_{a}x_{b}.
  \label{proof Jacobi 4}
 \end{eqnarray*}
Similarly, the terms in \eqref{proof Jacobi 3} of degree 4 in $\epsilon$ are transformed as follows:
\begin{eqnarray*}
\lefteqn{\epsilon^4 \sum_{\ell=0}^{n-1}\sum_{p=0}^{n-1} \sum_{m=1}^n\frac{\partial q_\ell(x)}{\partial x_m} q_p(x) (A^\ell J(x))_{jk} (A^p J(x))_{im}} 
\nonumber \\
 &   = & \epsilon^4\sum_{\ell=0}^{n-1}\sum_{p=0}^{n-1} \sum_{m=1}^n \frac{\partial q_\ell(x)}{\partial x_m} q_p(x)x_{j+k+\ell-1\, ({\rm mod}\, n)}x_{i+m+p-1\, ({\rm mod}\, n)} \nonumber \\
  &   = & \epsilon^4\sum_{a=1}^{n} \sum_{b=1}^n\sum_{p=0}^{n-1} \frac{\partial q_{a-j-k+1\, ({\rm mod}\, n)}(x)}{\partial x_{b-i-p+1\, ({\rm mod}\, n)}} q_p(x)x_{a}x_{b}.
 \end{eqnarray*}
Thus, we arrive at the following expressions for the quantities $\pi_{ijk}$ in \eqref{Jacobi id block} in the case $i,j\in\{n+1,\ldots,2n\}$ and $k\in\{1,\ldots,n\}$:
\begin{eqnarray}
\pi_{ijk} & = &  -\epsilon^2\sum_{a=1}^{n} \sum_{b=1}^n \left(\frac{\partial q_{a-j-k+1\, ({\rm mod}\, n)}(x)}{\partial x_{b-i+1\, ({\rm mod}\, n)}}-
 \frac{\partial q_{a-i-k+1\, ({\rm mod}\, n)}(x)}{\partial x_{b-j+1\, ({\rm mod}\, n)}}\right) x_{a}x_{b} \nonumber\\
   & &  +\epsilon^4\sum_{a=1}^{n} \sum_{b=1}^n \sum_{p=0}^{n-1}  \left(\frac{\partial q_{a-j-k+1\, ({\rm mod}\, n)}(x)}{\partial x_{b-i-p+1\, ({\rm mod}\, n)}}-
 \frac{\partial q_{a-i-k+1\, ({\rm mod}\, n)}(x)}{\partial x_{b-j-p+1\, ({\rm mod}\, n)}}\right) q_p(x)x_{a}x_{b}. \qquad
    \label{proof Jacobi 5}
\end{eqnarray}
We mention that in the case $i,j\in\{1,\ldots,n\}$ and $k\in\{n+1,\ldots,2n\}$, the expression for $\pi_{ijk}$ is almost literally the same, but with the indices $b-i+1\, ({\rm mod}\, n)$ etc. replaced by their $({\rm mod}\, n)$ representatives in the interval $[n+1,2n]$.

It remains to observe that \eqref{proof Jacobi 5} is equal to zero by virtue of \eqref{dqdx}.  \qed

\section{Differential equations for the conserved quantities of maps $\Phi_f$, $\Phi_g$}
\label{sect diff eqs}

\begin{theorem} \label{th diff eqs}
Let $f(x)=J(x)\nabla H_0(x)$ and $g(x)=J(x)\nabla K_0(x)$ be two associated vector fields, via the matrix $B$. 
Then the rational functions $\t H_0(x,\epsilon)$ and $\t K_0(x,\epsilon)$ are related by the same first order differential equation as the quadratic polynomials $H_0(x)$ and $K_0(x)$:
\begin{equation}\label{grad tK}
\nabla \t K_0(x,\epsilon) =B \nabla \t H_0(x,\epsilon).
\end{equation}
As a consequence, they satisfy the same second order differential equation \eqref{d2H} as the polynomials $H_0(x)$ and $K_0(x)$.
\end{theorem}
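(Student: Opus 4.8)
The plan is to read everything off the explicit rational formula \eqref{tH 2},
\[
\t H_0(x,\ep)=x^{\rm T}(\nabla^2 H_0)\,R\,x,\qquad R=R(x,\ep):=\big(I-\ep^2(J(x)\nabla^2 H_0)^2\big)^{-1},
\]
together with the corresponding formula for $\t K_0$. The key preliminary remark is that, by \eqref{Jd2H2}, $(J(x)\nabla^2 K_0)^2=(J(x)\nabla^2 H_0)^2$, so $\t K_0=x^{\rm T}(\nabla^2 K_0)\,R\,x$ with the \emph{same} matrix $R$, while $\nabla^2 K_0=B\nabla^2 H_0=(\nabla^2 H_0)B^{\rm T}$ by homogeneity of $K_0$ and \eqref{d2H}. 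Put $S:=(\nabla^2 H_0)R$ (which depends on $x$ only through $u$), so that $\t H_0=x^{\rm T}Sx$ and $\t K_0=x^{\rm T}(BS)x$. Differentiating in $x$ gives $\nabla\t H_0=(S+S^{\rm T})x+\rho$ and $\nabla\t K_0=(BS+S^{\rm T}B^{\rm T})x+\rho'$, with $\rho_k=x^{\rm T}(\partial_{x_k}S)x$ and $\rho'_k=x^{\rm T}(B\,\partial_{x_k}S)x$. Since $B$, $R$ and $R^{\rm T}$ are all polynomials in $A$ (recall $A^{\rm T}=A^{-1}$ for the matrix \eqref{A final}), they commute pairwise, whence $S^{\rm T}B^{\rm T}=R^{\rm T}(\nabla^2 H_0)B^{\rm T}=R^{\rm T}B(\nabla^2 H_0)=BR^{\rm T}(\nabla^2 H_0)=BS^{\rm T}$; thus $(BS+S^{\rm T}B^{\rm T})x=B(S+S^{\rm T})x$, and the theorem reduces to the single identity $\rho'=B\rho$.

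The technical heart is a closure lemma that I would establish first: writing $R=\sum_{m=0}^{n-1}s_m(x,\ep)A^m$, the coefficient functions satisfy $\nabla s_m=A^m\nabla s_0$, precisely the relation that Lemma \ref{lemma grads pq} supplies for the $q_\ell$. Indeed, by Lemma \ref{lemma matrix polynomial} we have $I-\ep^2(J\nabla^2 H_0)^2=\sum_\ell r_\ell A^\ell$ with $r_0=1-\ep^2 q_0$ and $r_\ell=-\ep^2 q_\ell$, and by Lemma \ref{lemma grads pq} these still obey $\nabla r_\ell=A^\ell\nabla r_0$. Passing to the Fourier picture — identify $\mathbb C[A]$ with $\mathbb C^n$ by evaluation at the eigenvalues $\omega^j$ of $A$ ($\omega=e^{2\pi i/n}$), so that $\sum_\ell c_\ell A^\ell\mapsto(\hat c^{(j)})_j$ with $\hat c^{(j)}=\sum_\ell c_\ell\omega^{\ell j}$ — a short calculation shows that the relation $\nabla c_\ell=A^\ell\nabla c_0$ (for $\ell=0,\ldots,n-1$) is equivalent to the requirement that, for each $j$, the quantity $\omega^{ij}\,\partial_{x_i}\hat c^{(j)}$ be independent of $i$. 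This property is clearly stable under multiplication and, because $\partial_{x_i}(1/\hat c^{(j)})=-(\hat c^{(j)})^{-2}\partial_{x_i}\hat c^{(j)}$, also under forming reciprocals; since $R$ corresponds to $\big((\hat r^{(j)})^{-1}\big)_j$, with $\hat r^{(j)}=1-\ep^2\hat q^{(j)}$ invertible for generic $\ep$, the lemma follows and holds as an identity of rational functions of $(x,\ep)$.

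Granting the lemma, $\rho'=B\rho$ reduces to the matrix identity $B\,\partial_{x_k}S=\sum_l B_{kl}\,\partial_{x_l}S$. Writing $S=\sum_m s_m\,(\nabla^2 H_0)A^m$ and using $\partial_{x_k}s_m=(A^m\nabla s_0)_k$, one gets $B\,\partial_{x_k}S=\sum_m(\partial_{x_k}s_m)(\nabla^2 K_0)A^m$ from $B\nabla^2 H_0=\nabla^2 K_0$, while $\sum_l B_{kl}\,\partial_{x_l}S=\sum_m(BA^m\nabla s_0)_k(\nabla^2 H_0)A^m$. Expanding $B=\sum_p\beta_p A^p$, reindexing, and using $\sum_p\beta_p A^{-p}=B^{\rm T}$, $(\nabla^2 H_0)B^{\rm T}=\nabla^2 K_0$, and the commutation of $B^{\rm T}$ with powers of $A$, the second expression collapses back to $\sum_r(A^r\nabla s_0)_k(\nabla^2 K_0)A^r=B\,\partial_{x_k}S$. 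Contracting with $x^{\rm T}(\,\cdot\,)x$ gives $\rho'_k=(B\rho)_k$, which proves \eqref{grad tK}. None of these steps used $B^2=I$, only that $B$ is a polynomial in $A$, so the same computation yields $\nabla\big(x^{\rm T}(B\nabla^2 H_0)\,R\,x\big)=B\,\nabla\t H_0$ for every such $B$; differentiating once more, $B\,\nabla^2\t H_0$ is a symmetric matrix (a Hessian) for every polynomial $B$ in $A$, and the case $B=A$ reads $A\nabla^2\t H_0=(A\nabla^2\t H_0)^{\rm T}=(\nabla^2\t H_0)A^{\rm T}$ (using the symmetry of $\nabla^2\t H_0$), which is exactly \eqref{d2H} for $\t H_0$ — and then, through \eqref{grad tK}, for $\t K_0$ as well.

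I expect the closure lemma to be the only real obstacle. The cyclic-gradient pattern $\nabla c_\ell=A^\ell\nabla c_0$ is not preserved merely because a set is closed under products (multiplying by a non-constant function destroys it), so one genuinely needs the Fourier picture, in which inversion becomes the entrywise reciprocal and the pattern survives. Everything else is bookkeeping with the circulant and cyclic-Hankel identities already assembled in Sections \ref{sect int fam}--\ref{sect assoc}.
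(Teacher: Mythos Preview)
Your proof is correct and follows essentially the same route as the paper's: the key ingredient is precisely your ``closure lemma'', which is the paper's Lemma~\ref{lemma inv} (the paper expands $(I-\ep^2(\nabla^2H_0\,J(x))^2)^{-1}=\sum_i r_iA^i$ rather than your transpose version $R=\sum_m s_mA^m$, but these are related by $s_m=r_{-m}$ and satisfy equivalent cyclic gradient relations), and it is proved via the same Fourier diagonalization of the circulant algebra $\mathbb C[A]$. The only organizational difference is that the paper first recognizes the scalar identity $\t H_0=2\sum_i r_i(x,\ep)H_i(x)$ and then differentiates this product directly, which shortens the bookkeeping compared to your matrix decomposition $(S+S^{\rm T})x+\rho$; conversely, your observation that the computation never uses $B^2=I$ (only that $B$ is a polynomial in $A$), and hence immediately yields \eqref{d2H} for $\t H_0$ by taking $B=A$, is made slightly more explicit than in the paper.
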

\begin{proof}
We start the proof with the derivation of a convenient formula for $\t H_0(x,\epsilon)$.
From \eqref{tH 4} we have:
\begin{equation} \label{tH 5}
\t H_0(x,\epsilon) = x^{\rm T} \left(I-\epsilon^2(\nabla^2 H_0\ J(x))^2\right)^{-1} \nabla H_0(x).
\end{equation}
\begin{lemma} \label{lemma inv}
We have:
\begin{equation} \label{inv}
 \left(I-\epsilon^2(\nabla^2 H_0\ J(x))^2\right)^{-1} =\sum_{i=0}^{n-1}r_i(x,\epsilon)A^i,
\end{equation}
where the functions $r_i(x,\epsilon)$ satisfy differential equations
\beq \label{grads r}
\nabla r_{i-1}(x,\epsilon) =A\nabla r_i(x,\epsilon), \quad i=1,\ldots,n-1.
\eeq
\end{lemma}
From \eqref{tH 5} and \eqref{inv}, we find:
\begin{eqnarray}
\t H_0(x,\epsilon) & = &  \sum_{i=0}^{n-1} r_i(x,\epsilon)x^{\rm T} A^i\nabla H_0(x)  \nonumber\\
& = & \sum_{i=0}^{n-1} r_i(x,\epsilon)x^{\rm T} \nabla H_i(x) \nonumber\\
& = & 2\sum_{i=0}^{n-1}r_i(x,\epsilon)H_i(x).  \label{tH thru H}
\end{eqnarray}
Differentiate formula \eqref{tH thru H}, taking into account  differential equations $A\nabla H_{i-1}=\nabla H_i$ and $A\nabla r_i=\nabla r_{i-1}$. We have:  
\begin{eqnarray} 
A\nabla \t H_0(x,\epsilon) & = & 2\sum_{i=0}^{n-1} \big(r_i(x,\epsilon)A\nabla H_i(x)+H_i(x)A\nabla r_i(x,\epsilon)\big) \nonumber \\
  & = & 2 \sum_{i=0}^{n-1} (r_i(x,\epsilon)\nabla H_{i+1}(x)+H_i(x)\nabla r_{i-1}(x,\epsilon))\nonumber \\
  & = & \nabla \Big( 2 \sum_{i=0}^{n-1} r_i(x,\epsilon)H_{i+1}(x)\Big).
\end{eqnarray}
By induction, we find:
\beq \label{Ak d tilde H}
A^m\nabla \t H_0(x,\epsilon)=\nabla  \Big( 2 \sum_{i=0}^{n-1} r_i(x,\epsilon)H_{i+m}(x)\Big), \quad m=0,1,\ldots, n-1.
\eeq
For any matrix polynomial
$
B=\beta_0I+\beta_1 A+\ldots+\beta_{n-1} A^{n-1},
$
the Hamilton function $K_0(x)$ of the corresponding vector field is defined by $\nabla K_0(x)=B\nabla H_0(x)$, and we have: $K_0(x)=\beta_0 H_0(x)+\beta_1 H_1(x)+\ldots+\beta_{n-1} H_{n-1}(x)$. As a consequence of the commutativity of $B$ and $A$, for the functions $K_i(x)$ defined by $\nabla K_i(x)=A^i\nabla K_0(x)$, we also have: $K_i(x)=\beta_0 H_i(x)+\beta_1 H_{i+1}(x)+\ldots+\beta_{n-1} H_{i+n-1}(x)$. Therefore, we derive from \eqref{Ak d tilde H}:
\beq \label{B d tilde H}
B\nabla \t H_0(x,\epsilon)=\nabla  \Big( 2 \sum_{i=0}^{n-1} r_i(x,\epsilon)K_{i}(x)\Big).
\eeq 
If now the vector fields with the Hamilton functions $H_0(x)$ and $K_0(x)$ are associated, that is, if $B^2=I$, then, according to \eqref{d2HJ2}, we have:
$$
 \left(I-\epsilon^2(\nabla^2 K_0\ J(x))^2\right)^{-1}=\left(I-\epsilon^2(\nabla^2 H_0\ J(x))^2\right)^{-1} =\sum_{i=0}^{n-1}r_i(x,\epsilon)A^i,
$$
so that
$$
\t K_0(x,\epsilon)=2\sum_{i=1}^{n-1} r_i(x,\epsilon)K_i(x).
$$
Comparing this with \eqref{B d tilde H}, we arrive at equation \eqref{grad tK}.
\end{proof}

\begin{cor}
The functions $\t H_0(x,\epsilon)$ and $\t K_0(x,\epsilon)$ are in involution with respect to both Poisson brackets, the original one with the Poisson tensor $J(x)$ and the perturbed one with the Poisson tensor $\Pi(x)$.
\end{cor}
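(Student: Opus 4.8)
The plan is to deduce both statements from the gradient relation of Theorem~\ref{th diff eqs} together with the skew-symmetry of two auxiliary matrices. Writing $\{\cdot,\cdot\}_J$ and $\{\cdot,\cdot\}_\Pi$ for the two Poisson brackets and using $\nabla\t K_0(x,\epsilon)=B\nabla\t H_0(x,\epsilon)$, one has
\[
\{\t H_0,\t K_0\}_J=(\nabla\t H_0)^{\rm T}J(x)B\,\nabla\t H_0,\qquad
\{\t H_0,\t K_0\}_\Pi=(\nabla\t H_0)^{\rm T}\Pi(x)B\,\nabla\t H_0 .
\]
Since a quadratic form with a skew-symmetric matrix vanishes identically, it suffices to check that both $J(x)B$ and $\Pi(x)B$ are skew-symmetric.

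For $J(x)B$ this is immediate: $B$ is a polynomial in $A$, and by \eqref{cond A} every $J(x)A^\ell$ is skew-symmetric, hence so is their linear combination $J(x)B$; in particular $B^{\rm T}J(x)=J(x)B$. For $\Pi(x)B$, write $\Pi(x)=M(x)J(x)$ with $M(x)=I-\epsilon^2\big(J(x)\nabla^2 H_0\big)^2$. By Lemma~\ref{lemma matrix polynomial}, $M(x)$ is a polynomial in $A$ with $x$-dependent scalar coefficients, and by \eqref{A final} the matrix $A^{\rm T}=A^{n-1}$ is itself a polynomial in $A$, so $B^{\rm T}$ is a polynomial in $A$ as well and therefore commutes with $M(x)$. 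Combining this with $B^{\rm T}J(x)=J(x)B$ yields $\Pi(x)B=M(x)J(x)B=M(x)B^{\rm T}J(x)=B^{\rm T}M(x)J(x)=B^{\rm T}\Pi(x)$; since $\Pi(x)$ is skew-symmetric (immediate from \eqref{Pi short} and the skew-symmetry of $J(x)$), we conclude $(\Pi(x)B)^{\rm T}=B^{\rm T}\Pi(x)^{\rm T}=-B^{\rm T}\Pi(x)=-\Pi(x)B$, as required.

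The argument is essentially routine once Theorem~\ref{th diff eqs} is in hand; the one step needing a little care is the skew-symmetry of $\Pi(x)B$, which rests on the polynomial-in-$A$ structure of $M(x)$ furnished by Lemma~\ref{lemma matrix polynomial} and on $A^{\rm T}$ being a power of $A$. Alternatively one could argue that $\t H_0$ and $\t K_0$ are both integrals of the Poisson map $\Phi_f$, but the direct computation above is the shortest route.
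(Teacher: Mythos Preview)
Your proof is correct and follows essentially the same route as the paper: both arguments start from the gradient relation $\nabla\t K_0=B\nabla\t H_0$ of Theorem~\ref{th diff eqs}, reduce the two Poisson brackets to quadratic forms in $\nabla\t H_0$, and then use Lemma~\ref{lemma matrix polynomial} together with the relation $B^{\rm T}J(x)=J(x)B$ to see that the relevant matrices are skew-symmetric. Your presentation is slightly more streamlined in that you isolate the single claim ``$\Pi(x)B$ is skew-symmetric'' and verify it via $\Pi(x)B=B^{\rm T}\Pi(x)$, whereas the paper expands $\Pi(x)$ and checks skew-symmetry term by term; but the logical content is the same.
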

\begin{proof}
The first statement follows directly from \eqref{grad tK}.  For the second statement, we compute, according to \eqref{Pi short} and \eqref{Jd2H as polynomial}:
\begin{eqnarray*}
\{\t H_0(x,\epsilon),\t K_0(x,\epsilon)\}_{\Pi} & = & (\nabla \t H_0)^{\rm T} \Pi(x) \nabla\t K _0\\
& = & (\nabla \t H_0)^{\rm T} \Big(I-\epsilon^2(J(x)\nabla^2 H_0)^2\Big)J(x)\nabla\t K _0\\
& = & (\nabla \t H_0)^{\rm T}\Big(I-\epsilon^2\sum_{i=0}^{n-1} q_i(x)A^i\Big)J(x)\nabla\t K _0\\ 
& = & (\nabla \t H_0)^{\rm T}\Big(I-\epsilon^2\sum_{i=0}^{n-1} q_i(x)A^iB^{\rm T}\Big)J(x)\nabla\t H _0=0,
\end{eqnarray*}
because the (diagonal blocks of the) matrices $A^iB^{\rm T}$ are circulant matrices.
\end{proof}

{\em Proof of Lemma \ref{lemma inv}.} We have:
$$
I-\epsilon^2(\nabla^2 H_0 J(x))^2=\big(I-\epsilon^2(J(x)\nabla^2 H_0)^2\big)^{\rm T}=I-\epsilon^2\sum_{i=0}^{n-1}q_i(x)A^{-i}.
$$
Since the inverse of a circulant matrix is also circulant, we have:
\begin{equation} \label{Pi inv}
\Big(I-\epsilon^2\sum_{i=1}^{n-1}q_i(x)A^{-i}\Big)^{-1}=\sum_{i=0}^{n-1}r_i(x,\epsilon)A^i.
\end{equation}
The coefficients $r_i(x)$ are determined just from the first column of the matrix identity obtained by left multiplying the right-hand side of the previous equation by the inverse of the left-hand side:
$$
\begin{pmatrix} 1-\epsilon^2 q_0 & -\epsilon^2 q_1 & -\epsilon^2 q_2 & \ldots & -\epsilon^2 q_{n-1}\\
                          -\epsilon^2 q_{n-1} & 1-\epsilon^2 q_0 & -\epsilon^3 q_1 & \ldots & -\epsilon^2 q_{n-2} \\
                          \ldots & \ldots & \ldots & \ldots & \ldots \\
                          -\epsilon^2 q_1 & -\epsilon^2 q_2 & -\epsilon^2 q_3 & \ldots & 1-\epsilon^2 q_0 
\end{pmatrix}  \begin{pmatrix} r_0 \\  r_1\\ \ldots \\ r_{n-1}  \end{pmatrix} =\begin{pmatrix} 1 \\ 0 \\ \ldots \\ 0 \end{pmatrix}.
$$
A straightforward check shows that the unique solution of this system is given by
\beq \label{ri}
r_k(x,\epsilon)=\frac{1}{n}\sum_{j=0}^{n-1} \frac{\omega^{kj}}{s_j(x,\epsilon)}, \quad k=0,\ldots,n-1,
\eeq
where
\beq \label{sj} \nonumber
s_j(x,\epsilon) =  1-\epsilon^2 \sum_{m=0}^{n-1} \omega^{jm}q_m(x), \quad \omega=\exp(2\pi i/n).
\eeq
It remains to show that functions \eqref{ri} satisfy differential equations \eqref{grads r}. We compute:
$$
\nabla r_k=\frac{\epsilon^2}{n}\sum_{j=0}^{n-1}\frac{\omega^{kj}}{s^2_j}\sum_{m=0}^{n-1}\omega^{jm}\nabla q_m
=\frac{\epsilon^2}{n}\sum_{j=0}^{n-1}\frac{1}{s^2_j}\left(\sum_{m=0}^{n-1}\omega^{j(k+m)}A^m\right)\!\nabla q_0.
$$
Now \eqref{grads r} follows from the obvious relation
$$
A\left(\sum_{m=0}^{n-1}\omega^{j(k+m)}A^m\right)=\sum_{m=0}^{n-1}\omega^{j(k+m-1)}A^m. \qquad\qquad  \qed
$$

\section{Examples}
\label{sect dim=4}

{\bf Dimension $2n=4$.}
Here, we are dealing with the following Lie-Poisson tensor:
$$
J(x)=\begin{pmatrix}
0 & 0 & x_1 &x_2 \\
0 & 0 & x_2 & x_1 \\
-x_1& -x_2 & 0 &0\\
-x_2 &-x_1 & 0 & 0
\end{pmatrix}.
$$
In coordinates, the non-vanishing Poisson brackets are
\begin{align*}
& \{x_1,x_3\}= \{x_2,x_4\}=x_1, \\
& \{x_1,x_4\}=  \{x_2,x_3\}=x_2
\end{align*}
(and those being obtained from these ones by skew-symmetry). 

General solution of equation \eqref{cond A} with $n=2$ is
\beq \label{A 4D}
A=\begin{pmatrix}
\alpha_0 & \alpha_1 & 0 &0 \\
\alpha_1 & \alpha_0 & 0 & 0 \\
0 & 0 & \alpha_0 & \alpha_1 \\
0 & 0 & \alpha_1 & \alpha_0
\end{pmatrix},
\eeq
and the corresponding set of functions in involution is generated by
\beq \label{A 4D red}
A=\begin{pmatrix}
0 & 1 & 0 & 0\\
1 & 0 & 0 & 0 \\
0 & 0 & 0 & 1\\
0 & 0 & 1 & 0
\end{pmatrix} .
\eeq
Since $A^2=I$, we can take $B=A$ for the associated vector field in this case. Note also that $A=A^{\rm T}$, a peculiarity of the case $2n=4$.

The Hesse matrices $\mathcal H$ of admissible Hamilton functions $H_0(x)=H(x)$ satisfying \eqref{d2H} are of the form
\beq
\mathcal H=\begin{pmatrix}  h_1 & h_2 & h_3 & h_4 \\
                                              h_2 & h_1 & h_4 & h_3 \\
                                              h_3 & h_4 & h_5 & h_6 \\
                                              h_4 & h_3 & h_6 & h_5 \end{pmatrix}.
\eeq
The Hesse matrix of the commuting Hamilton function $H_1(x)=K(x)$ is obtained by the simultaneous flips of the pairs $a=(h_1,h_2)$, $b=(h_3,h_4)$,
and $c=(h_5,h_6)$.

Lemma \ref{lemma matrix polynomial} in the present case of dimension $2n=4$ gives the following results: 
\beq
\big(J(x)\nabla^2 H\big)^2=q_0(x)I+q_1(x)A=\begin{pmatrix} q_0(x) & q_1(x) & 0 & 0 \\
                                                                    q_1(x) & q_0(x) & 0 & 0 \\
                                                                      0   &   0   & q_0(x) & q_1(x) \\
                                                                      0   &   0   & q_1(x) &  q_0(x) \end{pmatrix},
\eeq
where
\begin{eqnarray}
q_0(x)  & = & \frac{1}{4}{\rm tr}\big ((J(x)\nabla^2 H)^2\big)= \alpha (x_1^2+x_2^2)+2\beta x_1x_2,  \label{p 4D}\\
q_1(x) & = & \frac{1}{4}{\rm tr}\big(A(J(x)\nabla^2 H)^2\big)=\beta(x_1^2+x_2^2)+2\alpha x_1x_2, \label{q 4D}
\end{eqnarray}
with
$$
\alpha=h_3^2+h_4^2-h_1h_5-h_2h_6, \quad \beta= 2h_3h_4-h_1h_6-h_2h_5.
$$
Functions $q_0(x)$, $q_1(x)$ satisfy 
\beq \label{grads pq 4D}
\nabla q_1(x)=A \nabla q_0(x).
\eeq
This follows from Lemma \ref{lemma grads pq}, but is also obvious from the explicit formulas \eqref{p 4D}, \eqref{q 4D}.

\medskip

{\bf Dimension $2n=6$.} Here, we are dealing with the following Lie-Poisson tensor:
$$
J(x)=\begin{pmatrix}
0 & 0 & 0 & x_1 & x_2 & x_3 \\
0 & 0 & 0 & x_2 & x_3 & x_1 \\
0 & 0 & 0 & x_3 & x_1 & x_2  \\
-x_1 & -x_2 & -x_3 & 0 & 0 & 0 \\
-x_2 & -x_3 & -x_1 & 0 & 0 & 0 \\
-x_3 & -x_1 & -x_2 & 0 & 0 & 0
\end{pmatrix}.
$$
In coordinates, the non-vanishing Poisson brackets are
\begin{align*}
& \{x_1,x_4\}= \{x_2,x_6\}=\{x_3,x_5\}=x_1, \\
& \{x_1,x_5\}=  \{x_2,x_4\}=\{x_3,x_6\}=x_2,\\
& \{x_1,x_6\}=  \{x_2,x_5\}=\{x_3,x_4\}=x_3.
\end{align*}
General solution of \eqref{cond A} with $n=3$ is
\beq \label{A 6D}
A=\begin{pmatrix}
\alpha_0  & \alpha_2  & \alpha_1 & 0 & 0 & 0 \\
\alpha_1 & \alpha_0  & \alpha_2 & 0 & 0 & 0 \\
\alpha_2 & \alpha_1 & \alpha_0  & 0 & 0 & 0 \\
0 & 0 & 0 & \alpha_0  & \alpha_2 & \alpha_1 \\
0 & 0 & 0 & \alpha_1 & \alpha_0  & \alpha_2\\
0 & 0 & 0 & \alpha_2 & \alpha_1 & \alpha_0 
\end{pmatrix},
\eeq
and the corresponding set of functions in involution is generated by
\beq \label{A 6D red}
A=\begin{pmatrix}
0 & 1 & 0 & 0 & 0 & 0 \\
0 & 0 & 1 & 0 & 0 & 0\\
1 & 0 & 0 & 0 & 0 & 0\\
0 & 0 & 0 & 0 & 1 & 0 \\
0 & 0 & 0 & 0 & 0 & 1 \\
0 & 0 & 0 & 1 & 0 & 0
\end{pmatrix} .
\eeq
The Hesse matrices $\mathcal H$ of admissible Hamilton functions $H_0(x)$ satisfying \eqref{d2H} are of the form
\beq
\mathcal H=\begin{pmatrix}  h_1 & h_2 & h_3 & h_4 & h_5 & h_6\\
                                              h_2 & h_3 & h_1 & h_5 & h_6 & h_4 \\
                                              h_3 & h_1 & h_2 & h_6  & h_4 & h_5 \\
                                              h_4 & h_5 & h_6 & h_7 & h_8 & h_9 \\
                                              h_5 & h_6 & h_4 & h_8 & h_9 & h_7 \\
                                              h_6 & h_4 & h_5 & h_9 & h_7 & h_8 \end{pmatrix}.
\eeq
The Hesse matrices of the commuting Hamilton functions $H_1(x)$, $H_2(x)$ are obtained by the simultaneous cyclic shifts of the triples 
$a=(h_1,h_2,h_3)$, $b=(h_4,h_5,h_6)$, and  $c=(h_7,h_8,h_9)$. 

The associated vector fields are produced by three linearly independent matrices
\begin{equation}
B_j=\alpha_j I+\beta_j A+\gamma_j A^2=\cB_j\oplus\cB_j\quad (j=0,1,2),
\end{equation}
satisfying $B_j^2=I$ and $B_0+B_1+B_2=I$.
Their diagonal $3\times 3$ blocks are equal to:
\beq \label{B1 6D}
\cB_0=\frac{1}{3}\begin{pmatrix} 1 & -2 & -2 \\
                                  -2 & 1 & -2 \\
                                  -2 & -2 & 1  \end{pmatrix}, 
\eeq
\beq \label{B23 6D}                                 
\cB_1= \frac{1}{3}\begin{pmatrix} 1 & -2\omega & -2\omega^2 \\
                                  -2\omega^2 & 1 & -2\omega \\
                                  -2\omega & -2\omega^2 & 1  \end{pmatrix}, \quad                                  
\cB_2= \frac{1}{3}\begin{pmatrix} 1 & -2\omega^2 & -2\omega \\
                                  -2\omega & 1 & -2\omega^2 \\
                                  -2\omega^2 & -2\omega & 1  \end{pmatrix},                              
\eeq
where $\omega=\exp(2\pi i/3)$.

Lemma \ref{lemma matrix polynomial} in the present case of dimension $2n=6$ gives the following results: 
\begin{eqnarray}
\big(J(x)\nabla^2 H\big)^2 & = & q_0(x)I+q_1(x)A+q_2(x)A^2 \\
                                         & =  &
                                         \begin{pmatrix}   q_0(x) & q_1(x) & q_2(x) & 0 & 0 & 0 \\
                                                                    q_2(x) & q_0(x) & q_1(x) & 0 & 0 & 0 \\
                                                                    q_1(x) & q_2(x) & q_0(x) & 0 & 0 & 0 \\
                                                                        0     &     0     &     0     & q_0(x) & q_1(x) & q_2(x) \\
                                                                        0     &     0     &     0     & q_2(x) & q_0(x) &  q_1(x) \\
                                                                        0     &     0     &     0     & q_1(x) & q_2(x) &  q_0(x)
                                                                        \end{pmatrix}.
\end{eqnarray}
The functions $q_0(x)$, $q_1(x)$, $q_2(x)$ satisfy, according to Lemma \ref{lemma grads pq}, following relations:
\beq \label{grads q 6D}
\nabla q_1(x)= A\nabla q_0(x), \quad \nabla q_2(x)=A^2\nabla q_0(x). 
\eeq 
\section{Conclusions}

Completely integrable Hamiltonian systems lying at the basis of our constructions, seem to be worth studying on their own. In particular, their invariant $n$-dimensional varieties are intersections of $n$ hyperquadrics in the $2n$-dimensional space. It will be interesting to find out whether they are (affine parts) of Abelian varieties, that is, whether our systems are algebraically completely integrable. Still more interesting and intriguing are the algebraic-geometric aspects of the commuting systems of integrable maps introduced here. This will be the subject of our future research.

\section{Acknowledgements}

This research is supported by the DFG Collaborative Research Center TRR 109 ``Discretization in Geometry and Dynamics''.


\end{document}